\theoremstyle{thmstyleone}%
\newtheorem{theorem}{Theorem}[section]
\newtheorem{proposition}{Proposition}[section]%
\theoremstyle{thmstyletwo}%
\newtheorem{remark}{Remark}%
\newtheorem{lemma}{Lemma}[section]%
\newtheorem{corollary}{Corollary}[section] %
\theoremstyle{thmstylethree}%
\newtheorem{definition}{Definition}[section]%
 \newcommand{\card}{\operatorname{card}}
\begin{document}

\title[On Finiteness of Stationary Configurations of the Planar Five-vortex Problem]{On Finiteness of Stationary Configurations of the Planar Five-vortex Problem}


\author[1]{\fnm{Xiang} \sur{Yu}}\email{xiang.zhiy@foxmail.com, yuxiang\_math@tju.edu.cn}

\author[2]{\fnm{Shuqiang} \sur{Zhu}}\email{zhusq@swufe.edu.cn}

\affil[1]{\orgdiv{Center for Applied Mathematics and KL-AAGDM}, \orgname{Tianjin University}, \orgaddress{ \city{Tianjin}, \postcode{300072},  \country{China}}}
\affil[2]{\orgdiv{School of Mathematics}, \orgname{
		Southwestern	University of Finance and Economics}, \orgaddress{ \city{Chengdu}, \postcode{611130}, \country{China}}}


\abstract{ 	The finiteness problem of stationary configurations for the planar  five-vortex problem  is considered in this paper. 
	The numbers of equilibria and rigidly translating configurations are shown to be at most  6 and 24  respectively. The numbers of relative  equilibria and collapse configurations 
	are shown to be finite, except perhaps if 
	the 5-tuple of vorticities  belongs to a given codimension 2 subvariety of the vorticity space.
	In particular, if the vorticities are of the same sign, the number of stationary  configurations is finite.
}

\keywords{Point vortices;  Stationary  configuration;    Finiteness}


\pacs[MSC Classification]{76B47, 70F10, 37Nxx }

\maketitle

\section{Introduction}\label{sec:introduction}

\indent\par
The \emph{planar $N$-vortex problem} which originated from Helmholtz's work in 1858 \cite{helmholtz1858integrale}, considers the motion of point vortices in a fluid plane. It was given a Hamiltonian formulation by Kirchhoff as follows:
\[
\Gamma_n \dot{\mathbf{r}}_n = J \frac{\partial H}{\partial \mathbf{r}_n} = J \sum_{1 \leq j \leq N, j \ne n} \Gamma_j \Gamma_n \frac{\mathbf{r}_j - \mathbf{r}_n}{|\mathbf{r}_j - \mathbf{r}_n|^2}, \quad n = 1, \ldots, N.
\]
Here, $J = \begin{bmatrix} 0 & 1 \\ -1 & 0 \end{bmatrix}$, 
$\mathbf{r}_n = (x_n, y_n) \in \mathbb{R}^2$, and $\Gamma_n$ $(n = 1, \ldots, N)$ are the positions and vortex strengths (or vorticities) of the vortices, and 
the Hamiltonian is $H = -\sum_{1 \leq j < k \leq N} \Gamma_j \Gamma_k \ln|\mathbf{r}_j - \mathbf{r}_k|$, where $|\cdot|$ denotes the Euclidean norm in $\mathbb{R}^2$. The $N$-vortex problem is a widely used model for providing finite-dimensional approximations to vorticity evolution in fluid dynamics, especially when the focus is on the trajectories of the vorticity centers rather than the internal structure of the vorticity distribution \cite{Newton2001}.

An interesting set of special solutions of the dynamical system are homographic solutions, where the relative shape of the configuration remains constant during the motion. An excellent review of these solutions can be found in \cite{aref2003vortex, Newton2001}. 
Following O'Neil, we refer to the corresponding configurations as \emph{stationary}. The only stationary configurations are equilibria, rigidly translating configurations (where the vortices move with a common velocity), relative equilibria (where the vortices rotate uniformly), and collapse configurations (where the vortices collide in finite time) \cite{o1987stationary}.

Many results on stationary configurations have been obtained by focusing on systems with a small number of vortices and exploring symmetric cases (cf. \cite{aref1979motion, grobli1877specielle, hampton2009finiteness, hampton2014relative, novikov1975dynamics, novikov1979vortex, o1987stationary, o2006minimalpolynomial, o2007relative, palmore1982vortex, roberts1999continuum, synge1949motion, Tsai2020bifurcation, yu2021Finiteness} and the references therein). Notably, we emphasize works whose underlying concepts have proven fruitful in understanding central configurations in celestial mechanics. The equations governing stationary configurations are similar to those describing central configurations in celestial mechanics. By applying the topological methods first introduced for central configurations in celestial mechanics \cite{palmore1975classII}, Palmore gave without proof a lower bound on the number of relative equilibria for $N$ vorticities of the same sign \cite{palmore1982vortex}. O'Neil categorized stationary configurations into the four mentioned classes and initiated the systematic examination of their numbers. Roberts  constructed  a continuum of five-body central configurations in  celestial mechanics,  where a negative mass is included,   and then this construction  naturally led to an extension  for the  
five-vortex relative equilibria \cite{roberts1999continuum}. 
With the algebraic method introduced in \cite{hampton2006finiteness}, Hampton and Moeckel  
showed that the number of four-vortex relative equilibria is at most 74,  provided that no subcollection of the four vortices has vanishing total vorticities \cite{hampton2009finiteness}. Roberts  applied Morse theoretical ideas  to the study of relative equilibria in the planar $n$-vortex problem \cite{roberts2018morse}.

Albouy and Kaloshin introduced a novel method to study the finiteness of five-body central configurations in celestial mechanics \cite{Albouy2012Finiteness}. The first author successfully extended this approach to fluid mechanics. Using this new method, the first author established not only the finiteness of four-vortex relative equilibria for any four nonzero vorticities but also the finiteness of four-vortex collapse configurations for a fixed angular velocity. This represents the first result on the finiteness of collapse configurations for $N \geq 4$ \cite{yu2021Finiteness}.

In this paper, we focus on the finiteness of five-vortex stationary configurations. For equilibria and rigidly translating configurations, O'Neil showed that for generic vorticities, the upper bounds are $(N-2)!$ and $(N-1)!$, respectively \cite{o1987stationary, o2006minimalpolynomial}. We confirm these upper bounds for the all five-vortex system. We apply the singular sequence method developed by the first author in \cite{yu2021Finiteness} to investigate the finiteness of relative equilibria and collapse configurations.


Because of the  continuum of five-vortex  relative equilibria constructed by Roberts \cite{roberts1999continuum}
(see also Section \ref{sec:pri})  and the continuum of five-vortex  collapse
configurations constructed by Novikov and Sedov \cite{novikov1979vortex}, 
the finiteness of relative equilibria and collapse
configurations   can only be expected   for generic vorticities,  and,  in case of collapse
configurations,  fixed $\Lambda$.

 Two configurations are called equivalent if they are related by rotations, translations and dilations in the plane. In the following, all the counts are made for the equivalence classes.
We have proven the following results:

\begin{theorem} \label{thm:Main}
	For the planar  five-vortex  problem with nonzero vorticities  $\Gamma_n$ $(n\in\{1,2,3,4, 5\})$, 
	\begin{enumerate}
		\item  There are  at most 6 equilibria;
		\item   There are at most  24 rigidly translating configurations;
		\item  There are    finitely many relative equilibria  provided that none of 14  polynomial systems  of \eqref{equ:constraint_relative_equilibria} holds;
		\item  For  any given $\Lambda \in \mathbb{ C}^*$, there are  finitely many collapse configurations  provided that none of 7   polynomial systems  of \eqref{equ:constraint_collapse} holds. 
	\end{enumerate}  
\end{theorem}

Please see Definition   \ref{def-2} for the meaning of $\Lambda$. We refer  readers to Section \ref{sec:proof} for Systems \eqref{equ:constraint_relative_equilibria} and \eqref{equ:constraint_collapse},  and to  Definition \ref{def:LI} for the meaning of \(L_{j_1, \ldots, j_n}\).

Denote by $\mathbb{R}^*$ the set of nonzero real numbers.  By Theorem \ref{thm:Main}, those  vorticities that may admit infinitely many  relative equilibria  form a subvariety of  
$(\mathbb{R}^*)^5$, and those  vorticities that may admit infinitely many   collapse configurations  form a subvariety of  $\{  (\Gamma_1, \ldots, \Gamma_5): \Gamma_i \in \mathbb{R}^*,  i=1, \ldots, 5,  \sum_{1\le i<j\le 5} \Gamma_i \Gamma_j =0 \}$.  We further show that the codimension of the subvariety 
is  at least 2.

\begin{theorem}\label{thm:cod_2}
	For any choice of five vorticities $(\Gamma_1, \ldots, \Gamma_5) \in (\mathbb{R}^*)^5\setminus \mathcal{A}$,  where $\mathcal{A}$ is a closed algebraic subset of codimension 2, there are finitely  many  relative equilibria of the  five-vortex problem. 
	
	For any choice of five vorticities  in  
	$\{  (\Gamma_1, \ldots, \Gamma_5): \Gamma_i \in \mathbb{R}^*,  i=1, \ldots, 5,  \sum_{1\le i<j\le 5} \Gamma_i \Gamma_j =0 \} \setminus \mathcal{B},$
	where  $\mathcal{B}$ is a closed algebraic subset of codimension 2, there are finitely  many  collapse configurations   of the  five-vortex problem for any given $\Lambda \in \mathbb{ C}^*$. 
\end{theorem}

The following  result on finiteness  is also proved.  

\begin{theorem}\label{thm:restri}
	Given  five vorticities,   if  $\sum_{j\in J} \Gamma_j\ne 0$ and  $\sum_{j, k\in J, j\ne k} \Gamma_j\Gamma_k\ne 0$ for any nonempty subset  $J$  of  $\{1,2,3,4,5\}$,  there are finitely many  complex central configurations of the  five-vortex problem for any fixed  $\Lambda \in \mathbb{ C}^*$. 
\end{theorem}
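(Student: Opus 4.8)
The plan is to prove finiteness of normalized central configurations by recasting the problem as a question about the dimension of an algebraic variety and applying the technique of Theorem~\ref{thm:Main}. The central observation is that central configurations (which encompass both relative equilibria and collapse configurations) are solutions of the system~\eqref{stationaryconfiguration2}, and the nonvanishing hypotheses $\sum_{j\in J}\Gamma_j\neq 0$ and $\sum_{j,k\in J,\,j\neq k}\Gamma_j\Gamma_k\neq 0$ are precisely what guarantee that none of the degenerate vorticity parameter values from the exceptional set of Theorem~\ref{thm:Main} can occur. So the first step is to spell out the relationship between the exceptional polynomial constraints described in Sect.~\ref{sec:constraints} and the subset sums $\sum_{j\in J}\Gamma_j$ and pairwise products $\sum_{j,k\in J}\Gamma_j\Gamma_k$, showing that every polynomial in the exceptional set factors through (is divisible by, or vanishes only when) one of these subset quantities vanishes.

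Next I would carry out the reduction: assume for contradiction that there are infinitely many normalized central configurations for a fixed admissible choice of vorticities. Following the BKK / Albouy--Kaloshin strategy imported in \cite{yu2021Finiteness}, infinitely many solutions would force the solution variety of the polynomial system to have positive dimension, hence a one-parameter family of complex solutions accumulating at a singular point at infinity (in a suitable compactification using complexified mutual distances and the change of variables to $z_j,\bar z_j$ or to the variables of~\eqref{stationaryconfiguration2}). The heart of the argument is then the analysis of these boundary/singular points: one examines the possible limiting configurations as solutions escape to infinity or as mutual distances degenerate, and each such limiting scenario produces an algebraic relation among the vorticities. The claim is that every one of these relations is of the form $\sum_{j\in J}\Gamma_j=0$ or $\sum_{j,k\in J}\Gamma_j\Gamma_k=0$ for some nonempty $J\subseteq\{1,2,3,4,5\}$.

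The main obstacle will be the case analysis at infinity, which is the same difficulty that drives the proof of Theorem~\ref{thm:Main}: one must enumerate the ways a sequence of configurations can degenerate (clusters of vortices coalescing, subconfigurations flying apart at different rates) and verify in each case that a nonzero solution branch at the boundary forces a vanishing subset sum of vorticities or a vanishing subset sum of pairwise products. Organizing the degenerations by partitions of $\{1,2,3,4,5\}$ and tracking the leading-order balance of~\eqref{eq:vortex's equation1} within and between clusters is the technical crux; the quantities $\sum_{j\in J}\Gamma_j$ (total vorticity of a cluster) and $\sum_{j,k\in J}\Gamma_j\Gamma_k$ (which governs the angular momentum / moment of inertia of a cluster and controls whether it can itself form a rigidly moving sub-structure) appear naturally as the obstructions in these balances.

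Finally, the argument concludes by contraposition: since by hypothesis none of the quantities $\sum_{j\in J}\Gamma_j$ or $\sum_{j,k\in J}\Gamma_j\Gamma_k$ vanish, no admissible degeneration can occur, so the solution variety cannot have positive dimension, and therefore the number of normalized central configurations is finite. I expect that the bulk of the work reuses the singularity analysis already established for Theorem~\ref{thm:Main}, with the novelty here being the clean reformulation of the exceptional set in terms of the explicit, checkable conditions on subset sums and pairwise products, so that Theorem~\ref{thm:restri} becomes a readily verifiable sufficient condition guaranteeing one lands outside the exceptional locus.
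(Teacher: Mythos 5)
Your overall framework --- reuse the singular-sequence analysis of Theorem~\ref{thm:Main} and argue that the hypotheses place the vorticities outside the exceptional locus --- is the right starting point, but the pivotal claim on which your whole argument rests is false. You assert that every polynomial constraint arising from the twenty-nine diagrams ``factors through'' a vanishing subset sum $\sum_{j\in J}\Gamma_j$ or a vanishing subset pairwise sum $\sum_{j,k\in J,\,j\ne k}\Gamma_j\Gamma_k$. Inspecting the constraints in Section~\ref{sec:constraints}, three diagrams survive the hypotheses of the theorem: diagram 5.5 (the quadrilateral), whose constraint is $\Gamma_1\Gamma_3=\Gamma_2\Gamma_4$; diagram 5.11 (the kite), whose constraints are $\Gamma_2=\Gamma_3$ and $\Gamma_1+\Gamma_4=\Gamma_5$; and the $\Lambda=\pm 1$ branch of diagram 5.15 (the two attached triangles), whose constraints are $\Gamma_1=\Gamma_2+\Gamma_3=\Gamma_4+\Gamma_5$. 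None of these is implied to fail by the nonvanishing of subset sums and subset pairwise sums --- for instance $(2,2,2,2,-1)$ satisfies $\Gamma_1\Gamma_3=\Gamma_2\Gamma_4$, and all-positive vorticities can satisfy $\Gamma_2=\Gamma_3$, $\Gamma_1+\Gamma_4=\Gamma_5$. So your contraposition step does not close the argument: a singular sequence could still approach one of these three diagrams.

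The paper's proof therefore has to do genuinely more work, and this is the part your proposal is missing. After reducing to the three surviving diagrams (Figure~\ref{fig:the-3}), the paper estimates the orders of all ten squared mutual distances on a singular sequence for each diagram and forms the $4$-products $p_{jn}=r_{jn}^2r_{km}^2r_{ml}^2r_{lk}^2$. It then applies Lemma~\ref{Eliminationtheory} a second time: starting from a dominating squared distance $r_{12}^2$ pushed to zero, the only compatible diagram is the kite; on the kite the product $p_{41}p_{42}p_{43}p_{45}p_{23}$ tends to infinity, hence is dominating, and pushing \emph{it} to zero yields a singular sequence that cannot match any of the three diagrams --- a contradiction. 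Without this iterated use of the elimination lemma (or some substitute argument disposing of the quadrilateral, kite, and attached-triangle degenerations), your proof does not establish the theorem.
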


	Please see Definition \ref{def:cc} and \ref{def:positivenormalizedcentralconfiguration}  for the meaning of complex  central configurations.

If all vorticities are of the same sign, relative equilibria are the only possible stationary configurations.   Theorem \ref{thm:restri} leads to the following result. 

\begin{theorem}
	Consider the planar  five-vortex  problem with all positive or all negative vorticities  $\Gamma_n$ $(n\in\{1,2,3,4, 5\})$. 
	There are finitely many  stationary configurations.  
\end{theorem}

The paper is structured as follows. In Sect. \ref{sec:basicnotations}, we introduce notations and definitions. In Sect. \ref{sec:E_RT}, we prove results on equilibria and rigidly translating configurations. In Sect. \ref{sec:pri}, we briefly review the singular sequence method and the two-colored diagrams. In Sect. \ref{sec:moreproperty}, we identify constraints when some particular sub-diagrams appear. 
In Sect. \ref{sec:matrixrules}, we construct the  problematic diagrams for the 5-vortex problem. 
We derive constraints on vorticities corresponding to each of the 22 diagrams in Sect. \ref{sec:diagram&constraints} and prove the main results in Sect. \ref{sec:proof}.

\section{Basic notations}\label{sec:basicnotations}
\indent\par
We recall some basic notations on stationary configurations and direct readers to a more comprehensive introduction provided by O'Neil \cite{o1987stationary} and Yu \cite{yu2021Finiteness}.

We represent vortex positions $\mathbf{r}_n \in \mathbb{R}^2$ as complex numbers $z_n \in \mathbb{C}$. The equations of motion are  $\dot{z}_n = \textbf{i}V_n$, where
\begin{equation}\label{vectorfield}
V_n= \sum_{1 \leq j \leq N, j \neq n} \frac{\Gamma_j z_{jn}}{r_{jn}^2}= \sum_{ j \neq n} \frac{\Gamma_j }{{\overline{z}_{jn}}}.
\end{equation}
Here, $z_{jn} = z_n - z_j$, $r_{jn} = |z_{jn}| = \sqrt{z_{jn}{\overline{z}_{jn}}}$, $\textbf{i} = \sqrt{-1}$, and the overbar denotes complex conjugation.

Let $\mathbb{C}^N= \{ z = (z_1,  \ldots, z_N):z_j \in \mathbb{C}, j = 1,  \ldots, N \}$ denote the space of configurations for $N$ point vortex. The collision set is defined as  $\Delta=\{ z \in \mathbb{C}^N:z_j=z_k ~~\emph{for some}~~ j\neq k  \} $. The space of collision-free configurations is given by $\mathbb{C}^N \backslash \Delta$.

\begin{definition}\label{def:LI}
	The following quantities and notations are defined:
	\begin{center}
		$\begin{array}{cc}
		\text{Total vorticity} & \Gamma =\sum_{j=1}^{N}\Gamma_j  \\
		\text{Total vortex angular momentum} & L =\sum_{1\leq j<k\leq N}\Gamma_j\Gamma_k  \\
		\text{ Moment of vorticity }& M =\sum_{j=1}^{N}\Gamma_j z_j \\
		\text{ Angular impulse}& I =\sum_{j=1}^{N}\Gamma_j |z_j|^2=\sum_{j=1}^{N}\Gamma_j z_j{\overline{z}_j} \\
		\text{ Size}& S =\sum_{1\leq j<k\leq N}\Gamma_j\Gamma_k r_{jk}^2. 
		\end{array}$
	\end{center}
	For $J=\{j_1, ..., j_n\}\subset \{1, ..., N\}$, we also define 
	\[ \Gamma_J=\Gamma_{j_1, ..., j_n}=\sum_{j\in J} \Gamma_j, \ L_J=L_{j_1, ..., j_n}=\sum_{j<k, j,k \in J} \Gamma_j \Gamma_k.\] 
\end{definition}

A motion is called homographic if the relative shape remains constant. 
	Following  O'Neil \cite{o1987stationary}, we term  a corresponding  configuration as a  \emph{stationary configuration}.  Equivalently, 
\begin{definition} \label{def-1}
	A configuration $z \in \mathbb{C}^N \backslash \Delta$ is stationary if there exists a constant $\Lambda\in {\mathbb{C}}$ such that
	\begin{equation}\label{stationaryconfiguration}
	V_j-V_k=\Lambda(z_j-z_k), ~~~~~~~~~~ 1\leq j, k\leq N.
	\end{equation}
\end{definition}

 There are  only four kinds of homographic motions,  equilibria, translating with a common velocity, uniformly rotating, and   homographic motions that collapse in finite time.   
	Following \cite{o1987stationary, hampton2009finiteness, yu2021Finiteness},  we term 
	the  stationary
	configurations  corresponding to these four classes of  homographic motions as
	equilibria, rigidly translating configurations, relative equilibria  and collapse configurations. 
	Equivalently,

\begin{definition}\label{def-2}
	\begin{itemize}
		\item[i.] $z \in \mathbb{C}^N \backslash \Delta$ is an \emph{equilibrium} if $V_1=\cdots=V_N=0$.
		\item[ii.] $z \in \mathbb{C}^N \backslash \Delta$ is \emph{rigidly translating} if $V_1=\cdots=V_N=c$ for some $c\in \mathbb{C}\backslash\{0\}$.
		\item[iii.] $z \in \mathbb{C}^N \backslash \Delta$ is  a \emph{relative equilibrium} if there exist constants $\lambda\in \mathbb{R}\backslash\{0\},z_0\in \mathbb{C}$ such that $V_n=\lambda(z_n-z_0),~~~~~~~~~~ 1\leq n\leq N$.
		\item[iv.] $z \in \mathbb{C}^N \backslash \Delta$ is a \emph{collapse configuration} if there exist constants $\Lambda,z_0\in \mathbb{C}$ with $\emph{Im}(\Lambda)\neq0$ such that $V_n=\Lambda(z_n-z_0),~~~~~~~~~~ 1\leq n\leq N$.
	\end{itemize}
\end{definition}

\begin{proposition} \cite{o1987stationary} \label{necessaryconditions}
	Every equilibrium has vorticities satisfying $L  =0$; every rigidly
	translating configuration has vorticities satisfying $\Gamma  =0$
\end{proposition}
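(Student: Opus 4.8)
The plan is to derive two algebraic identities for the velocity field by forming suitable $\Gamma$-weighted sums over the vortices, and then to read off each necessary condition by substituting the defining relation of equilibria or of rigidly translating configurations. The whole argument reduces to spotting the correct weight and exploiting the antisymmetry $\overline{z}_{nj}=-\overline{z}_{jn}$ of the denominators appearing in \eqref{vectorfield}.

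First I would treat the rigidly translating case, which is the shorter of the two. The idea is to compute the total weighted velocity $\sum_n \Gamma_n V_n$ directly from \eqref{vectorfield}. Writing it as a double sum $\sum_n \sum_{j\neq n} \Gamma_n\Gamma_j/\overline{z}_{jn}$ and grouping the two ordered contributions of each unordered pair $\{j,n\}$, the summand is antisymmetric under the exchange $j\leftrightarrow n$ because $\overline{z}_{nj}=-\overline{z}_{jn}$, so every pair cancels and the sum vanishes identically:
\begin{equation}
\sum_{n} \Gamma_n V_n = 0.
\end{equation}
For a rigidly translating configuration $V_n=V$ for all $n$, hence $0=\sum_n \Gamma_n V_n = V\sum_n \Gamma_n = V\Gamma$; since $V\neq 0$, this forces $\Gamma=0$.

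For the equilibrium case I would instead weight by $\Gamma_n \overline{z}_n$ and evaluate $\sum_n \Gamma_n \overline{z}_n V_n$. The same pairing of ordered terms now produces, for each unordered pair, the combination $\Gamma_j\Gamma_n(\overline{z}_n-\overline{z}_j)/\overline{z}_{jn}=\Gamma_j\Gamma_n$, because $\overline{z}_n-\overline{z}_j=\overline{z}_{jn}$ so the denominator cancels exactly. Summing over all unordered pairs yields the identity
\begin{equation}
\sum_{n} \Gamma_n \overline{z}_n V_n = \sum_{1\le j<k\le N} \Gamma_j\Gamma_k = L.
\end{equation}
At an equilibrium every $V_n=0$, so the left-hand side is zero and therefore $L=0$.

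The substantive step in both arguments is the choice of weight ($\Gamma_n$ to produce $\Gamma$, and $\Gamma_n\overline{z}_n$ to produce $L$) together with the cancellation identity $\overline{z}_{nj}=-\overline{z}_{jn}$; once the weighted sums collapse to $0$ and to $L$ respectively, the conclusions are immediate. I do not anticipate a genuine obstacle here. The only point requiring care is the bookkeeping that each unordered pair contributes exactly two ordered terms, so that the telescoped pair-sum is identified once and correctly with $L$, with no double counting.
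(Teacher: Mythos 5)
Your proof is correct, and it is exactly the standard argument (the one in O'Neil's paper, which this paper cites instead of reproving): the weighted sums $\sum_n\Gamma_nV_n=0$ and $\sum_n\Gamma_n\overline{z}_nV_n=L$ follow from the antisymmetry $\overline{z}_{nj}=-\overline{z}_{jn}$, and substituting $V_n=V\ne 0$ or $V_n=0$ gives $\Gamma=0$ and $L=0$ respectively. The paper itself offers no proof beyond "it is easy to see," so your write-up supplies precisely the intended computation.
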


\begin{definition}
	A configuration $z$ is equivalent to $z'$ if there exist $a, b \in \mathbb{C}$ with $b \neq 0$ such that $z'_n = b(z_n + a)$ for $1 \leq n \leq N$.

		A configuration is called translation-normalized if its translation freedom is removed, rotation-normalized if its rotation freedom is removed, and dilation-normalized if its dilation freedom is removed. 
		A configuration normalized in translation, rotation, and dilation is termed a \emph{normalized configuration}. 
\end{definition}

	We count the stationary configurations according to the equivalence classes. Counting equivalence classes is the same as counting normalized configurations. Note that the removal of any of these three freedoms can be performed in various  ways.

\section{Equilibria and rigidly translating configurations}\label{sec:E_RT}

\indent\par In this section, we investigate  the number of equilibria and rigidly translating configurations of five 
vorticities, via the minimal polynomial system introduced by O'Neil \cite{o2006minimalpolynomial}.  In particular, we will prove the corresponding results in Theorem \ref{thm:restri}.

Recall that the equilibria and the rigidly translating configurations are solutions of the system
\begin{equation}\label{equ:E&RT}
\sum _{k\ne  j} \frac{\Gamma_k}{z_j-z_k} =c, 
\end{equation}
where $c=0$ corresponds to equilibria and $c\ne 0$  corresponds to the rigidly translating configurations. If the configuration  $(z_1, . . . , z _N)$  is an equilibrium or rigidly translating configuration, 
so is each member of its equivalent class  $(b (z_1 +a), ..., b (z_N +a))$ with $a, b  \in \mathbb{C} $ and $a\ne 0$. Given a linear form $A(z_1, ..., z_N)= a_1 z_1+ ...+ a_Nz_N,$ with   $a_1+...+a_N\ne 0$, there is one and only one value of $a$ for which the members of   the class   satisfy  $A=0$. 
Hence, 
it is convenient to identify the  translation-normalized  equilibria and the rigidly translating configurations with points   of 
$\mathbb{C}^{N}$, satisfying 
\begin{equation}\label{equ:E&RT-1}
A(z_1, ..., z_N)=0, \ \sum _{k\ne  j} \frac{\Gamma_k}{z_j-z_k} =c, \ j=1, ..., N,  
\end{equation}

Let $\zeta$ be a complex variable. O'Neil \cite{o2006minimalpolynomial} observed 
that 
system \eqref{equ:E&RT-1} is equivalent to 
\begin{equation}\label{equ:E&RT-2}
A(z_1, ..., z_N)=0,  \  \sum_{j<k} \frac{\Gamma_j \Gamma_k}{(\zeta-z_j)(\zeta-z_k)} = c \sum_{j} \frac{\Gamma_j }{\zeta-z_j},  \  {\rm for}\ \forall  \zeta \ne z_1, ..., z_N.   
\end{equation}

\subsection{Equilibria} The second equation of system \eqref{equ:E&RT-2} becomes $\sum_{j<k} \frac{\Gamma_j \Gamma_k}{(\zeta-z_j)(\zeta-z_k)} =0$. 
Multiplying it with $(\zeta-z_1)...(\zeta-z_N)$ leads to 
\[  \zeta^{N-2} L + \zeta^{N-3}  f_{1} +\zeta^{N-4}  f_{2} ...+ \zeta  f_{N-3} + f_{N-2}=0,  \ \zeta \in \mathbb{ C},   \]
where $f_{k}=f_{k}(z_1, ..., z_N)$  is a  homogeneous polynomial of degree $k$, $1\le k\le N-2$.  Then $L=0$, and  system \eqref{equ:E&RT-2} with $c=0$ is equivalent to 
\begin{equation}\label{equ:E&RT-3}
A(z_1, ..., z_N)=0,  \  f_1=f_2=...=f_{N-2}=0, \  (z_1, ..., z_N)\notin \Delta. 
\end{equation}
The $(N-1)$ homogeneous polynomials define a projective variety $V_0$ in $\mathbb{P}^{N-1}$, and each point of this variety represents an equivalence class of equilibria. If the variety $V_0$ is zero-dimensional, B\'{e}zout's theorem implies that the number of points in $V_0$ is $(N-2)!$, counted by multiplicity.

\begin{proposition}\cite{o2006minimalpolynomial} \label{prp:ON-e}
	Let the nonzero vorticities $\Gamma_1, \ldots, \Gamma_N$ satisfy the relation $L = 0$, and let $V_0$ be defined as above, so that $V_0 \setminus \Delta$ is the set of all equilibria. Suppose there are two indices $p, q$ such that for all proper subsets $J$ of $\{1, \ldots, N\}$, $\{p, q\} \subset J$ implies $L_J \ne 0$. Then $V_0$ contains exactly $(N-2)!$ points counted according to multiplicity, and there are no more than $(N-2)!$ equilibria.
\end{proposition}

	Please see Definition \ref{def:LI}  for the meaning of $\Gamma_J, \Gamma_{j_1, ..., j_n}$, $L_J$ and $L_{j_1, ..., j_n}$.

\begin{remark}\label{rmk:L}
	There are some facts on $L_J$. If $L_J=0$, then the cardinality of $J$, denoted by  $\card (J)$, can not be two, since $\Gamma_i\ne0$ for all $i=1, ... , N$; If $L_J=0$, then $\Gamma_J\ne 0$, since $\Gamma_J^2>2L_J$; 
	If $L_J=0$, then for any subset $K$ of $J$ with $\card (K)+1=\card (J)$, we have $L_K\ne 0$. Suppose that $K\cup \{1\}=J$, $L_J=L_K=0$. Then the identity
	$L_J=L_K+\Gamma_K\Gamma_1$
	implies that $\Gamma_K=0$, which contradicts with $L_K=0$. 
\end{remark}

Consider the case of $N = 4$. Since $L = 0$, Remark \ref{rmk:L} implies that $L_J \ne 0$ for all subsets $J$ of $\{1,2,3,4\}$. Hence, Proposition \ref{prp:ON-e} implies that any four vorticities satisfying $L = 0$ have at most 2 distinct equilibria. In fact, there are always exactly 2 distinct equilibria, as shown by O'Neil \cite{o1987stationary} and Hampton and Moeckel \cite{hampton2009finiteness}. We now utilize Proposition \ref{prp:ON-e} to study the case of $N = 5$.

\begin{proposition} 
	Let the nonzero  vorticities $\Gamma_1, ..., \Gamma_5$ satisfy the relation $L=0$. Then  there are at most 6 equilibria. 
\end{proposition}

\begin{proof}
	We assume that for any pair of indices, there is some proper subset $J$ of $\{1, 2, 3, 4, 5\}$ containing them such that  $L_J=0$,  and arrive at  a contradiction. Since $L=0$, then $L_J=0$ holds only if  the cardinality of $J$  is three, as noted  in Remark \ref{rmk:L}.

	Since $L=0$, the vorticities cannot all have the same sign.  We  divide the discussion  into two cases: one with a single negative vorticity and the other with two negative vorticities.

	\emph{Case I: The signs of the five vorticities are $(+,+,+,+,-)$.} Consider the pair $\{1,2\}$. It must hold that $L_{125} = 0$. Similarly, for the pairs $\{1,3\}$, $\{1,4\}$, $\{2,3\}$, $\{2,4\}$, and $\{3,4\}$, we have:
	\[
	L_{125} = L_{135} = L_{145} = L_{235} = L_{245} = L_{345} = 0.
	\]
	It is straightforward to find the solution: $\Gamma_1 = \Gamma_2 = \Gamma_3 = \Gamma_4 = -2\Gamma_5$, which contradicts the equation $L = 0$.
	
	\emph{Case II: The signs of the five vorticities are $(+,+,+,-,-)$.} Consider the three pairs $\{1,2\}$, $\{1,3\}$, and $\{2,3\}$. The following system holds:
	\[
	L_{124} L_{125} = 0, \quad L_{134} L_{135} = 0, \quad L_{234} L_{235} = 0.
	\]
	It is sufficient to consider two sub-cases: when there are three or two $5$'s in the above system.

	\emph{Sub-case II-1: $L_{125} = L_{135} = L_{235} = 0$.}
	It is straightforward to find the solution: $\Gamma_1 = \Gamma_2 = \Gamma_3 = -2\Gamma_5$. Now consider the pair $\{4,5\}$. It must hold that $L_{145} = 0$. However,
	\[
	L_{145} = \Gamma_1 \Gamma_5 + \Gamma_4 (\Gamma_1 + \Gamma_5) = -2 \Gamma_5^2 - \Gamma_4 \Gamma_5 < 0,
	\]
	which is a contradiction.
	
	\emph{Sub-case II-2: $L_{125} = L_{135} = L_{234} = 0$.}
	It is straightforward to find that $\Gamma_2 = \Gamma_3 = -2\Gamma_4$. Now consider the pair $\{4,5\}$, then $L_{145} L_{245} L_{345} = 0$. Since
	\[
	L_{245} = \Gamma_2 \Gamma_4 + \Gamma_5 (\Gamma_2 + \Gamma_4) = -2 \Gamma_4^2 - \Gamma_4 \Gamma_5 < 0,
	\]
	it must hold that $L_{145} = 0$. If $L_{145} = 0$, then by $L_{125} = L_{145}$, we obtain $\Gamma_4 = \Gamma_2$, which is a contradiction.

	By Proposition \ref{prp:ON-e}, for any group of five  vorticities  with $L=0$,   the variety $V_0$ is zero-dimensional and there are  at most 6 equilibria. 
\end{proof}

The above result has also been proved differently by Tsai \cite{Tsai2020bifurcation}, where a complete bifurcation diagram is provided. The actual number of equilibria could be fewer or even zero, as some solutions may have multiplicity greater than one, and some points of $V_0$ may lie on $\Delta$. If $V_0 \cap \Delta \neq \emptyset$, then there exists a proper subset $J$ such that $L_J = 0$ \cite{o2006minimalpolynomial}.

For instance, if  $L_{123}=0$, then $\Gamma_{123}\ne 0$,  we can construct  point   on  $V_0\bigcap \Delta$ using \eqref{equ:E&RT-2}.  Let 
\[  J=\{1,2,3\}, \ d_j = \zeta -z_j, \  A(z_1, ..., z_5)=z_1, \  z_1=z_2=z_3. \] 
Then  
\begin{align*}
	0=\sum_{j<k} \frac{\Gamma_j \Gamma_k}{d_j d_k}= & \ \ \sum_{j<k, j, k \in J} \frac{\Gamma_j \Gamma_k}{\zeta^2} +\sum_{k=4}^5 \frac{\Gamma_J \Gamma_k}{\zeta d_k} +\frac{\Gamma_4\Gamma_5}{d_4 d_5}  \\
	&=\frac{\Gamma_4}{d_4} ( \frac{\Gamma_5}{z_4- z_5} + \frac{\Gamma_J}{z_4 }  )  +\frac{\Gamma_5}{d_5} ( \frac{\Gamma_4}{z_5- z_4} + \frac{\Gamma_J}{z_5 }  ) -\frac{\Gamma_J}{\zeta} ( \frac{\Gamma_4}{z_4} +\frac{\Gamma_5}{z_5}).  
\end{align*}
The above equation holds if and only if $z_4=-\frac{\Gamma_4 z_5}{\Gamma_5}$ and $L=0$.  
Hence we find the following point of $V_0$ on $\Delta$
$ (0:0:0:-\frac{\Gamma_4 }{\Gamma_5}:1),$
with multiplicity 2. Thus, there are at most 4 equilibria for five vorticities satisfying $L = L_{123} = 0$. Similar solutions can be constructed for $N$ vorticities satisfying $L = L_J = 0$, where the cardinality of $J$ is $N-2$.

Assume there are four vorticities that satisfy $L_{123} = L_{124} = L_{134} = 0$. We can choose $\Gamma_5$ such that $L = 0$. The example above shows that there are three distinct solutions of \eqref{equ:E&RT-2} on $\Delta$, each with multiplicity 2. Hence, there is no equilibrium for this group of vorticities. Up to renumbering, the vorticities are:
\[
\Gamma_1 = 1, \ \Gamma_2 = -2, \ \Gamma_3 = -2, \ \Gamma_4 = -2, \ \Gamma_5 = \frac{6}{5}.
\]
Examples of groups of vorticities that do not admit any equilibrium exist for any $N \ge 5$ (see \cite{Tsai2020bifurcation}).

\subsection{ Rigidly translating configurations} Multiplying  $(\zeta-z_1)...(\zeta-z_N)$  to the second equation of system \eqref{equ:E&RT-2} leads to $\Gamma=0$ and $L=c \sum_{ j=1}^N \Gamma_j z_j$. Then we have  $(\sum_{ j=1}^N \Gamma_j z_j)\sum_{j<k} \frac{\Gamma_j \Gamma_k}{(\zeta-z_j)(\zeta-z_k)} =L \sum_{j} \frac{\Gamma_j }{\zeta-z_j}$, which leads to 
\[  \zeta^{N-1} L\Gamma   +\zeta^{N-3}  g_{1} ...+ \zeta  g_{N-3} + g_{N-2}=0,  \ \zeta \in \mathbb{ C},   \]
where $g_{k}=g_{k}(z_1, ..., z_N)$  is a  homogeneous polynomial of degree $k$, $1\le k\le N-2$.  Note that $ \sum_{ j=1}^N \Gamma_j z_j\ne 0$,  as otherwise  $L=c \sum_{ j=1}^N \Gamma_j z_j=0$, which contradicts $\Gamma = 0$. 
Then  system \eqref{equ:E&RT-2} with $c\ne0$ is equivalent to 
\begin{equation}\label{equ:E&RT-4}
A(z_1, ..., z_N)=0,  \  g_1=g_2=...=g_{N-2}=0,  \ (z_1, ..., z_N)\notin \Delta. 
\end{equation}
The $N-1$ homogeneous polynomials  define a projective variety $V_c$ in $\mathbb{P}^{N-1}$ and each point of this variety represents an equivalent class of rigidly translating configurations.  If the variety $V_c$ is zero dimensional,  B\'{e}zout theorem implies that the number of points in $V_c$ is $(N-1)!$,  counted by multiplicity.

\begin{proposition}
	\cite{o2006minimalpolynomial} \label{prp:ON-r}
	Let the nonzero vorticities  $\Gamma_1, ..., \Gamma_N$ satisfy the relation $\Gamma=0$, and let $V_c$ be defined as above so that $V_c\setminus \Delta $ is the set of all rigidly translating configurations.  Suppose that  there are two indices $p, q$ such that for all proper subsets $J$ of $\{1, . . . , N\},  \{p, q\}\subset J$ implies $L_J\ne 0$ and $\Gamma_J\ne 0$. Then $V_c$ contains exactly $(N - 1)!$ points, counted according to multiplicity, and there are no more than $(N -1)!$ rigidly translating configurations.
\end{proposition}

For $N=4$, if the nonzero  vorticities $\Gamma_1, ..., \Gamma_4$ satisfy the relation $\Gamma=0$, then there are at most 6 rigidly translating configurations. This result  has already been proved  Hampton and Moeckel with
resultant theory  in \cite{hampton2009finiteness}. 
It is easy to  apply the above criterion to obtain  an alternative proof. However, it would be similar to the one of Proposition \ref{prop:rigid_5} and simpler, so we omit it.

Note that for the  vorticities $(1, -1, a, -a)$, there is always one solution in $\Delta$ with multiplicity  three. Then there are at most three rigidly translating configurations.  In particular, for $a=1$, there are two  solutions in $\Delta$   with multiplicity three, so there are no rigidly translating configurations \cite{o2006minimalpolynomial, Tsai2020bifurcation}.

\begin{proposition} \label{prop:rigid_5}
	Let the nonzero  vorticities $\Gamma_1, ..., \Gamma_5$ satisfy the relation $\Gamma=0$. Then the  dimension of $V_c$ is  zero, so there are at most 24 rigidly translating configurations. 
\end{proposition}

\begin{proof}	
	We assume that for any pair of indices, there exist some proper subsets $J_1, J_2$ of $\{1, 2, 3, 4, 5\}$  containing them such that  $L_{J_1}\Gamma_{J_2}=0$,   and  we derive   contradictions. Note that $\Gamma_J=0$ holds only if  the cardinality of $J$  is two or  three, and that   $L_J=0$ holds only if  the cardinality of $J$ is three or four,  as noted in Remark \ref{rmk:L}.

	Since $\Gamma=0$, the vorticities can not be of the same sign. We can simplify the discussion by dividing it into two cases: one with a single negative vorticity and the other with two negative vorticities.

	\emph{ Case I: the signs of the five vorticities be  $(+,+,+,+, -)$. } 
	Then $-\Gamma_5=\Gamma_{1234}$ and   $\Gamma_J\ne0$ for any proper subset $J$.  Consider the pair $\{1,2\}$. Without loss of generality, we obtain $L_{125}L_{1235}=0$. 
	
	If $L_{125}=L_{12}+\Gamma_5\Gamma_{12}=0$, then  
	$-\Gamma_5= \frac{L_{12}}{\Gamma_{12}} < \Gamma_{12}, $
	which is a contradiction. 
	
	If $L_{1235}=L_{123}+\Gamma_5\Gamma_{123}=0$, then  
	$-\Gamma_5= \frac{L_{123}}{\Gamma_{123}} < \Gamma_{123}, $
	which is a contradiction.

	\emph{Case II: the signs of the five vorticities be   $(+,+,+,-, -)$. }  
	There are two sub-cases: there exists some  proper subset  $J$ such that $\Gamma_J=0$ or not.

	\emph{ Sub-case II-1: there are proper  subsets $J$ such that $\Gamma_J= 0$.  }  Then we may assume that  the vorticities are 
	$(1, b, c, -1,-(b+c))$ where $b, c>0$. Then  $\Gamma_{14}=\Gamma_{235}=0$. Note that $\Gamma_{124}=b, \Gamma_{134}=c$, and 
	\begin{align*}
		&L_{125}=-c - b(b+c)<0,  &L_{134}=-1, ~~~ \  \ \ &L_{135}=-b - c(b+c)<0, \\
		&L_{1235}=-b^2 - b c - c^2<0,  &L_{145}=-1, ~~~ \  \ \ &L_{1245}=-1 - b(b+c)<0, \\
		&L_{1345}=-1 - c(b+c)<0,  &L_{124}=-1, ~~~ \  \ \   &L_{2345}=-b^2 - b c - c^2<0. 
	\end{align*}
	
	Consider the pair of indices $\{4,5\}$. Since $\Gamma_J\ne 0$ for any proper subset containing the two indices, then the following system holds:
	\[ L_{245}L_{345}=0,\]
	since it is already known that $L_{145}, L_{1245}, L_{1345}, L_{2345}$ are all negative. 
	Similarly, consider the four pairs  $\{1,2\}, \{1,3\}, \{1,5\}$.  Then   the following system holds
	\[ \Gamma_{125} L_{1234}=0,\  \Gamma_{135} L_{1234}=0,\ \Gamma_{125}\Gamma_{135}=0.\]
	It is enough to divide the discussion into two sub-cases: $L_{1234}=0$ or not.

	\emph{Sub-case II-1-1: $L_{1234}=bc-1=0$. }  Then by $\Gamma_{125}\Gamma_{135}=(1-b)(1-c)=0$, we have $b=1$ or $c=1$. By $L_{1234}=0$, it follows that $b=1$ and $c=1$. However, this contradicts with $ L_{245}L_{345}=[c-b(b+c)][b-c(b+c)]=0$.

	\emph{Sub-case II-1-2:  $L_{1234}\ne 0$. }  Then $\Gamma_{125}=0$  and $\Gamma_{135}=0$. It follows that $b=1$ and $c=1$. However, this contradicts with $ L_{245}L_{345}=0$.

	\emph{ Sub-case II-2: $\Gamma_J\ne 0$ for any proper subset $J$.} 
	We assume that $\Gamma_{123}=1$, $\Gamma_1\le \Gamma_2\le \Gamma_3$,  $\Gamma_{45}=-1$ and  $|\Gamma_4|\le |\Gamma_5|$.  Then 
	\[ L_{123}\le \frac{1}{3}, \ L_{23}\le\frac{1}{4}\Gamma^2_{23}<\frac{1}{4}\Gamma_{23},  \  L_{45}\le \frac{1}{4}, \Gamma_{3}\ge \frac{1}{3},  \Gamma_{5}\le -\frac{1}{2}.\]
	So, 
	\begin{align*}
		&L_{1235}=L_{123}+\Gamma_{5}\Gamma_{123}= L_{123}-\frac{1}{2}<0,\\
		&L_{345}=L_{45}+\Gamma_{3}\Gamma_{45}= L_{45}-\Gamma_{3}<0,\\
		&L_{235}=L_{23}+\Gamma_{5}\Gamma_{23}\le L_{23}-\frac{1}{2}\Gamma_{23}<0. 
	\end{align*}
	Similarly, it holds that $L_{125}<0, L_{135}<0.$

	Consider the pair $\{3,5\}$. All possible $L_{J}=0$ with $J$ containing the pair  are $L_{1345}, L_{2345}$. Then it must hold $L_{1345} L_{2345}=0$. 
	Consider also the two other pairs $\{1,5\}$ and $\{2,5\}$.  Then  the following system must hold, 
	\begin{equation} \label{equ:RT5}
	L_{1345} L_{2345} =0, \ L_{145}  L_{1245} L_{1345} =0, \ L_{245}  L_{1245} L_{2345} =0.  \  
	\end{equation}
	There are five sub-cases. 
	
	\emph{Sub-case II-2-1:}  $L_{1345}=L_{1245}=0$.  Then $\Gamma_2=\Gamma_3$ and $\Gamma_{13}>\frac{1}{2}$. Thus, 
	\[ L_{1345}= -\Gamma_{13} +L_{13}+L_{45} <-\frac{1}{2} +\frac{1}{4}+\frac{1}{4}<0, \]
	which is a contradiction.

	\emph{Sub-case II-2-2:}  $L_{1345}=L_{245}=0$.  The second equation implies that $\Gamma_2=L_{45}\le \frac{1}{4}$. Then $\Gamma_{13}\ge \frac{3}{4}$, and 
	\[  L_{1345}= L_{13}+L_{45}- \Gamma_{13}\le \frac{1}{4}+\frac{1}{4}- \frac{3}{4}<0,   \]
	which is  a contradiction.

	\emph{Sub-case II-2-3:}   $L_{1345}=L_{2345}$.   Then $\Gamma_1=\Gamma_2$ and $\Gamma_{13}\ge \frac{2}{3}$. Thus, 
	\[ L_{1345}= -\Gamma_{13} +L_{13}+L_{45} \le -\frac{2}{3} +\frac{1}{4}+\frac{1}{4}<0, \]
	which is  a contradiction.

	\emph{Sub-case II-2-4:}  $L_{2345}=L_{145}=0$.   The second equation implies that $\Gamma_1=L_{45}\le \frac{1}{4}$. Then $\Gamma_{23}\ge \frac{3}{4}$, and 
	\[  L_{2345}= L_{23}+L_{45}- \Gamma_{23}\le \frac{1}{4}+\frac{1}{4}- \frac{3}{4}<0,   \]
	which is  a contradiction.

	\emph{Sub-case II-2-5:}  
	$L_{2345}=L_{1245}=0$.  Then $\Gamma_1=\Gamma_3$ and $\Gamma_{1}=\Gamma_{2}=\Gamma_{3}= \frac{1}{3}$. Thus, 
	\[ L_{2345}= -\Gamma_{23} +L_{23}+L_{45} \le  -\frac{2}{3} +\frac{1}{9}+\frac{1}{4}<0, \]
	which is  a contradiction.

	By Proposition \ref{prp:ON-r}, for any group of five vorticities  with $\Gamma=0$,   the variety $V_c$ is zero-dimensional and there are  at most 24 rigidly translating configurations. 
\end{proof}

\section{Singular sequences for central configurations and coloring rules}\label{sec:pri}

\indent\par
In this section, we briefly review   the basic elements of the Albouy-Kaloshin approach developed   by Yu  \cite{yu2021Finiteness} for the finiteness of relative equilibria and collapse configurations, including,  among others, the notation of central configurations, the extended system, the notation of singular sequences, the two-colored diagrams, and the rules for  the two-colored diagrams. For  a more comprehensive introduction, please refer to   \cite{yu2021Finiteness}.

\subsection{Central configurations of the planar N-vortex problem}
Recall Definition \ref{def-2}. Equations of relative equilibria and collapse configurations share the form:
\begin{equation}\label{stationaryconfiguration1}
V_n=\Lambda(z_n-z_0),~~~~~~~~~~ 1\leq n\leq N,
\end{equation}
where $\Lambda\in \mathbb{R}\backslash\{0\}$  indicates relative equilibria and $\Lambda\in \mathbb{C}\backslash\mathbb{R}$  indicates  collapse configurations.
\begin{definition}\label{def:cc}
	Relative equilibria and collapse configurations are both called \emph{central configurations}.
\end{definition}

The equations (\ref{stationaryconfiguration1}) read 
\begin{equation}\label{stationaryconfiguration2}
\Lambda z_n= V_n,~~~~~~~~~~ 1\leq n\leq N,
\end{equation}
if the translation freedom is removed, i.e., we substitute $z_n$ with $z_n + z_0$ in equations (\ref{stationaryconfiguration2}). The solutions then satisfy:
\begin{equation}\label{center0}
M=0, \ \Lambda I= L.
\end{equation}
To remove dilation freedom, we enforce $|\Lambda| = 1$.

Introduce a new set of variables $w_n$ and a ``conjugate"
relation:
\begin{equation}\label{stationaryconfiguration3}
\Lambda z_n=\sum_{ j \neq n} \frac{\Gamma_j }{{w_{jn}}},\ \ 
\overline{\Lambda} w_n=\sum_{ j \neq n} \frac{\Gamma_j }{{z_{jn}}},\ \ \  1\leq n\leq N,
\end{equation}
where $z_{jn}=z_{n}-z_{j}$ and $w_{jn}=w_{n}-w_{j}$.

The rotation symmetry of \eqref{stationaryconfiguration2} leads to   the invariance of
(\ref{stationaryconfiguration3}) under the map 
\[R_a: (z_1, ..., z_n, w_1, ..., w_n) \mapsto (az_1, ..., az_N, a^{-1} w_1, ..., a^{-1}w_N)\]
for any $a\in\mathbb{C}\backslash \{0\}$. 

Introduce  the variables $Z_{jk},W_{jk}\in \mathbb{C}$ $(1\leq j< k\leq N)$ such that
$Z_{jk}=1/w_{jk}, W_{jk}=1/z_{jk}$. For $1\leq k< j\leq N$ we set $Z_{jk}=-Z_{kj}, W_{jk}=-W_{kj}$. Then equations (\ref{stationaryconfiguration2}) together with the condition $z_{12}\in\mathbb{R}$ and $|\Lambda|=1$ are embedded into the following extended system
\begin{equation}\label{equ:complexcc}
\begin{array}{cc}
\Lambda z_n=\sum_{ j \neq n} \Gamma_j Z_{jn},&1\leq n\leq N, \\
\overline{\Lambda} w_n=\Lambda^{-1} w_n=\sum_{ j \neq n} \Gamma_j W_{jn},& 1\leq n\leq N, \\
Z_{jk} w_{jk}=1,&1\leq j< k\leq N, \\
W_{jk} z_{jk}=1,&1\leq j< k\leq N, \\
z_{jk}=z_k-z_j,~~~  w_{jk}=w_k-w_j,&1\leq j, k\leq N, \\
Z_{jk}=-Z_{kj},~~~ W_{jk}=-W_{kj},&1\leq k< j\leq N, \\
z_{12}=w_{12}.
\end{array}
\end{equation}
This is a polynomial system in the  variables $\mathcal{Q}=(\mathcal{Z},\mathcal{W})\in\mathbb{C}^{2\mathfrak{N}}$, here
\begin{center}
	$\mathcal{Z}=(\mathcal{Z}_{1},\mathcal{Z}_{2},\ldots,\mathcal{Z}_{\mathfrak{N}})=(z_1,z_2,\ldots,z_N,Z_{12},Z_{13},\ldots,Z_{(N-1)N})$,
	$\mathcal{W}=(\mathcal{W}_{1},\mathcal{W}_{2},\ldots,\mathcal{W}_{\mathfrak{N}})=(w_1,w_2,\ldots,w_N,W_{12},W_{13},\ldots,W_{(N-1)N})$.
\end{center}
and $\mathfrak{N}=N(N+1)/2$.  

\begin{definition}\label{def:positivenormalizedcentralconfiguration}
 A \emph{complex normalized} central configuration of the planar
		$N$-vortex problem is a solution of (\ref{equ:complexcc}).  A \emph{real  normalized} central configuration of the planar
		$N$-vortex problem is a  complex normalized central configuration  satisfying $z_n={\overline{w}}_n$ for any $n=1, \ldots, N$.  
\end{definition}

 Note that a real  normalized central configuration of Definition \ref{def:positivenormalizedcentralconfiguration} is exactly  a central configuration of 
	Definition \ref{def:cc}.  	
We will use the name ``distance" for the $r_{jk}=\sqrt{z_{jk}{w_{jk}}}$. Strictly speaking, the distances $r_{jk}=\sqrt{z_{jk}{w_{jk}}}$ are now bi-valued. However, only the squared distances appear in the system, so we shall  understand $r^2_{jk}$ as $z_{jk}w_{jk}$ from now on.

\subsection{Singular sequences}
Let $\|\mathcal{Z}\|=\max_{j=1,2,\ldots,\mathfrak{N}}|\mathcal{Z}_{j}|$ be the modulus of the maximal component of
the vector $\mathcal{Z}\in \mathbb{C}^\mathfrak{N}$. Similarly, set  $\|\mathcal{W}\|=\max_{k=1,2,\ldots,\mathfrak{N}}|\mathcal{W}_{k}|$.

One important feature of System \eqref{equ:complexcc} is the symmetry: 
if $\mathcal Z, \mathcal W$ is a solution, so is $ a\mathcal Z, a^{-1} \mathcal W$ for any $a\in\mathbb{C}\backslash\{0\}$. Thus, we can replace 
the  normalization $z_{12}=w_{12}$  in System \eqref{equ:complexcc}  by 
$\|\mathcal{Z}\|=\|\mathcal{W}\|$.  From now on, we consider System \eqref{equ:complexcc} with this new normalization.

Consider
a sequence $\mathcal{Q}^{(n)}$, $n=1,2,\ldots$, of solutions  of (\ref{equ:complexcc}).  Take a sub-sequence such that the maximal component of $\mathcal{Z}^{(n)}$ is fixed, i.e., there is a $j\in \{1,2,\ldots,\mathcal{N}\}$ that is  independent 
of $n$ such that  $\|\mathcal{Z}^{(n)}\|=|\mathcal{Z}^{(n)}_{j}|$. 
Extract again in such a way that the sequence $\mathcal{Z}^{(n)}/\|\mathcal{Z}^{(n)}\|$ converges.
Extract again  in such a way that  the maximal component of $\mathcal{W}^{(n)}$ is fixed. Finally,  extract  in
such a way that the sequence $\mathcal{W}^{(n)}/\|\mathcal{W}^{(n)}\|$ converges.

\begin{definition}[Singular sequence]
	Consider a sequence of complex normalized central configurations with the property  that $\mathcal{Z}^{(n)}$ is unbounded. A
	sub-sequence extracted by the above process is called
	a \emph{singular sequence}.
\end{definition}

\begin{lemma}\label{Eliminationtheory}\cite{Albouy2012Finiteness}
	Let $\mathcal{X}$ be a closed algebraic subset of $\mathbb{C}^m$ and $f:\mathbb{C}^m\rightarrow \mathbb{C}$ be a
	polynomial. Either the image $F(\mathcal{X})\subset\mathbb{ C}$ is a finite set, or it is the complement
	of a finite set. In the second case one says that f is dominating.
\end{lemma}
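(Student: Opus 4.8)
The plan is to reduce the statement to a question about the projection of an algebraic set onto a coordinate line, and then to apply the core results of elimination theory. Write $\mathcal{X}=V(g_1,\dots,g_s)\subseteq\mathbb{C}^m$ as the common zero set of finitely many polynomials $g_1,\dots,g_s$, where $V(\cdot)$ denotes the vanishing locus. First I would introduce one new coordinate $t$ and pass to the closed algebraic subset
\[
\widetilde{\mathcal{X}}=V\bigl(g_1,\dots,g_s,\,f-t\bigr)\subseteq\mathbb{C}^{m+1},
\]
whose defining equations encode the graph of $f$ over $\mathcal{X}$. If $\pi:\mathbb{C}^{m+1}\to\mathbb{C}$ denotes the projection $(x_1,\dots,x_m,t)\mapsto t$, then by construction $\pi(\widetilde{\mathcal{X}})=f(\mathcal{X})$, so it suffices to analyze the image of the single coordinate projection $\pi$ restricted to $\widetilde{\mathcal{X}}$.

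Next I would bring in the elimination ideal. Let $I=(g_1,\dots,g_s,f-t)\subseteq\mathbb{C}[x_1,\dots,x_m,t]$ and let $I_1=I\cap\mathbb{C}[t]$ be the ideal obtained by eliminating $x_1,\dots,x_m$. Because $\mathbb{C}[t]$ is a principal ideal domain, $I_1$ is either $(0)$ or $(p(t))$ for a single nonzero polynomial $p$. The Closure Theorem of elimination theory (see Cox, Little and O'Shea, or equivalently Chevalley's theorem that the image of a constructible set is constructible) tells us that $V(I_1)$ is precisely the Zariski closure of $\pi(\widetilde{\mathcal{X}})$ in $\mathbb{C}$. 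Since $\mathbb{C}$ is one-dimensional, this closure is either a finite set of points (when $I_1=(p)$ with $p\neq 0$) or all of $\mathbb{C}$ (when $I_1=(0)$).

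In the first case $f(\mathcal{X})=\pi(\widetilde{\mathcal{X}})$ is contained in the finite set of roots of $p$, hence finite. In the second case $\pi(\widetilde{\mathcal{X}})$ is Zariski dense, and here I would use the second, geometric half of the Closure Theorem: there is a proper subvariety $W\subsetneq V(I_1)=\mathbb{C}$ with $\mathbb{C}\setminus W\subseteq\pi(\widetilde{\mathcal{X}})$. Since proper closed algebraic subsets of the line are finite, $W$ is finite, and therefore $f(\mathcal{X})$ contains the cofinite set $\mathbb{C}\setminus W$; being itself a subset of $\mathbb{C}$, it is then the complement of a finite set, and $f$ is dominating.

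The step I expect to be the main obstacle is precisely this last one: upgrading Zariski density of the image to the much stronger conclusion that its complement is finite. Density alone only yields $V(I_1)=\mathbb{C}$ and does not by itself forbid the image from omitting infinitely many values; ruling this out requires the nontrivial extension and fiber-dimension machinery underlying the geometric part of the Closure Theorem, i.e. the constructibility of the image in the sense of Chevalley. Once that input is granted, the one-dimensionality of the target $\mathbb{C}$ does the remaining work, since in dimension one the only constructible sets are the finite ones and their complements.
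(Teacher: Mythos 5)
Your proposal is correct, and it follows essentially the same route as the source: the paper does not prove this lemma itself but quotes it from Albouy--Kaloshin, who derive it from the constructibility of images of algebraic sets (Chevalley's theorem, via Mumford's Propositions 2.31 and 2.33), combined with the observation that a constructible subset of the one-dimensional target $\mathbb{C}$ is either finite or cofinite. Your graph-plus-elimination-ideal setup and your appeal to the Closure Theorem are just an explicit implementation of that same argument, and you correctly identify the constructibility input as the only nontrivial step.
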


\subsection{The  two-colored diagrams} \label{sec:rule}

For two sequences of non-zero numbers, $a, b$, we use $a\sim b$,  $a\prec b$, $ a\preceq b$,  and $ a \approx b$ to represent ``$a/b\rightarrow 1$'', ``$a/b\rightarrow 0$'', ``$a/b$ is bounded'' and ``$a\preceq b$, $a\succeq  b$'' respectively.  

Recall that  a  singular sequence satisfy the property   $\|\mathcal{Z}^{(n)}\|=\|\mathcal{W}^{(n)}\|\to \infty$.  Set $\|\mathcal{Z}^{(n)}\|=\|\mathcal{W}^{(n)}\|=1/\epsilon^2$. Then $\epsilon\rightarrow 0$.  Following Albouy-Kaloshin, \cite{Albouy2012Finiteness}, the \emph{two-colored diagram} was introduced in \cite{yu2021Finiteness} to  classify the singular sequences.  Given a singular sequence, the indices of the vertices  will be written down. 
The first color, called the $z$-color (red),   is used to mark the maximal order components of $\mathcal{Z}$. If $z_k\approx \epsilon^{-2}$,  draw a $z$-circle around the
vertex $\textbf{k}$; If   $Z_{jk}\approx \epsilon^{-2}$,  draw a $z$-stroke between vertices $\textbf{k}$ and $\textbf{j}$. 
They constitute  the $z$-diagram. 
The second color, called the $w$-color (blue and dashed),   is used to mark the maximal order components of $\mathcal W$ in similar manner. Then we also have the  $w$-diagram. The two-colored diagram is the combination of the  $z$-diagram and the $w$-diagram,  see Figure \ref{fig:edges}.

If there
is either a $z$-stroke, or a $w$-stroke, or both between vertex $\textbf{k}$ and vertex $\textbf{l}$, we say that there is an edge between them.  There are three types of edges,
$z$-edges, $w$-edges and $zw$-edges, see Figure \ref{fig:edges}. 

\begin{figure}[h!]
	\centering
	\includegraphics[width=0.7\textwidth]{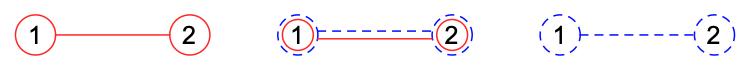} 
	\caption{On the left, vertices \textbf{1, 2} are $z$-circled, and a $z$-edge is between them; In the middle, vertices \textbf{1, 2} are $z$- and $w$-circled, and a $zw$-edge is between them; On the right,  vertices \textbf{1, 2} are $w$-circled, and a $w$-edge is between them. 
	}
	\label{fig:edges}
\end{figure}

The following concepts were introduced to characterize some features of singular sequences.    In the $z$-diagram,   vertices  $\textbf{k}$ and $\textbf{l}$
are  called \emph{$z$-close},  if $z_{kl} \prec\epsilon^{-2}$; 
a $z$-stroke between  vertices   $\textbf{k}$ and  $\textbf{l}$ is  called  	a   \emph{maximal $z$-stroke} if $z_{kl} \approx  \epsilon^{-2}$; 
a subset of vertices are called \emph{an isolated component of the $z$-diagram} if there is no $z$-stroke  between a vertex of this subset and  a vertex of its complement. 
These concepts also apply to the $w$-diagram.

\begin{proposition}[Estimate]\label{Estimate1}\cite{yu2021Finiteness}
	For any $(k,l)$, $1\leq k<l\leq N$, we have $\epsilon^2\preceq z_{kl}\preceq \epsilon^{-2}$, $\epsilon^2\preceq w_{kl}\preceq \epsilon^{-2}$ and $\epsilon^2\preceq r_{kl}\preceq \epsilon^{-2}$.
	
	There is a $z$-stroke between $\textbf{k}$ and $\textbf{l}$ if and only if $w_{kl}\approx \epsilon^{2}$. Then $ r_{kl}\preceq 1$.
	
	There is a maximal $z$-stroke between $\textbf{k}$ and $\textbf{l}$ if and only if $z_{kl}\approx \epsilon^{-2}, w_{kl}\approx \epsilon^{2}$. Then $ r_{kl}\approx1$.
	
	There is a $z$-edge between $\textbf{k}$ and $\textbf{l}$ if and only if $z_{kl}\succ \epsilon^{2},w_{kl}\approx \epsilon^{2}$. Then $\epsilon^{2}\prec r_{kl}\preceq 1 $.
	
	There is a maximal $z$-edge between $\textbf{k}$ and $\textbf{l}$ if and only if $z_{kl}\approx \epsilon^{-2},w_{kl}\approx \epsilon^{2}$. Then $ r_{kl}\approx 1$.
	
	There is a $zw$-edge between $\textbf{k}$ and $\textbf{l}$ if and only if $z_{kl},w_{kl}\approx \epsilon^{2}$. This can be  characterized as $ r_{kl}\approx \epsilon^{2}$.
\end{proposition}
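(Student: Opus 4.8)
The plan is to deduce every assertion from two elementary translations together with the single identity $r_{kl}^2=z_{kl}w_{kl}$: first I convert the normalization $\|\mathcal{Z}^{(n)}\|=\|\mathcal{W}^{(n)}\|=\epsilon^{-2}$ into size bounds on the individual components, and then I convert the definitions of circle, stroke, $z$-close and maximal into order relations among the separations. Throughout I work on the already-extracted singular sequence, along which $\mathcal{Z}^{(n)}/\|\mathcal{Z}^{(n)}\|$ and $\mathcal{W}^{(n)}/\|\mathcal{W}^{(n)}\|$ converge; since $Z_{kl}=1/w_{kl}$ and $W_{kl}=1/z_{kl}$ are themselves components, this convergence makes every comparison of $z_{kl}$ or $w_{kl}$ with $\epsilon^{2}$ or $\epsilon^{-2}$ unambiguous, each ratio having a limit in $[0,\infty]$ and thus falling into exactly one of the cases $\approx,\prec,\succ$. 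For the universal estimate I note that each component of $\mathcal{Z}^{(n)}$ has modulus $\preceq\|\mathcal{Z}^{(n)}\|=\epsilon^{-2}$, so $|z_j|\preceq\epsilon^{-2}$ and $|Z_{jk}|=1/|w_{jk}|\preceq\epsilon^{-2}$, the latter giving $w_{jk}\succeq\epsilon^{2}$; the symmetric argument on $\mathcal{W}^{(n)}$ gives $|w_j|\preceq\epsilon^{-2}$ and $z_{jk}\succeq\epsilon^{2}$. Writing $z_{kl}=z_l-z_k$ and $w_{kl}=w_l-w_k$ supplies the upper bounds, hence $\epsilon^{2}\preceq z_{kl}\preceq\epsilon^{-2}$ and $\epsilon^{2}\preceq w_{kl}\preceq\epsilon^{-2}$, and then $\epsilon^{4}\preceq r_{kl}^{2}=z_{kl}w_{kl}\preceq\epsilon^{-4}$.

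Next I would dictionary-translate the diagram vocabulary. A $z$-stroke between \textbf{k} and \textbf{l} means $Z_{kl}=1/w_{kl}$ is of maximal order in $\mathcal{Z}$, i.e. $|Z_{kl}|\approx\epsilon^{-2}$, equivalently $w_{kl}\approx\epsilon^{2}$; by symmetry a $w$-stroke is equivalent to $z_{kl}\approx\epsilon^{2}$. The vertices are not $z$-close precisely when $z_{kl}$ does not vanish after division by $\|\mathcal{Z}\|=\epsilon^{-2}$, which combined with $z_{kl}\preceq\epsilon^{-2}$ gives $z_{kl}\approx\epsilon^{-2}$; hence a maximal $z$-stroke is equivalent to the pair $z_{kl}\approx\epsilon^{-2}$, $w_{kl}\approx\epsilon^{2}$. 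Finally, the absence of a $w$-stroke means $W_{kl}=1/z_{kl}\prec\epsilon^{-2}$, i.e. $z_{kl}\succ\epsilon^{2}$, which is exactly the condition separating a $z$-edge ($z$-stroke, no $w$-stroke) from a $zw$-edge (both strokes).

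I then assemble the edge statements by substituting these relations into $r_{kl}^{2}=z_{kl}w_{kl}$. For a $z$-stroke, $w_{kl}\approx\epsilon^{2}$ and $z_{kl}\preceq\epsilon^{-2}$ give $r_{kl}^{2}\preceq1$, so $r_{kl}\preceq1$; the maximal $z$-stroke is then a special case and inherits $r_{kl}\preceq1$. For a $z$-edge, $w_{kl}\approx\epsilon^{2}$ together with $\epsilon^{2}\prec z_{kl}\preceq\epsilon^{-2}$ gives $\epsilon^{4}\prec r_{kl}^{2}\preceq1$, i.e. $\epsilon^{2}\prec r_{kl}\preceq1$; adding maximality ($z_{kl}\approx\epsilon^{-2}$) sharpens this to $r_{kl}^{2}\approx1$, i.e. $r_{kl}\approx1$, which is the maximal $z$-edge case. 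For a $zw$-edge, $z_{kl}\approx\epsilon^{2}$ and $w_{kl}\approx\epsilon^{2}$ give $r_{kl}\approx\epsilon^{2}$; conversely, if $r_{kl}\approx\epsilon^{2}$, then since $z_{kl}\succeq\epsilon^{2}$ and $w_{kl}\succeq\epsilon^{2}$ already hold, a strict $z_{kl}\succ\epsilon^{2}$ would force $r_{kl}^{2}=z_{kl}w_{kl}\succ\epsilon^{4}$, a contradiction, so $z_{kl}\approx w_{kl}\approx\epsilon^{2}$ and both strokes are present. This last equivalence gives the stated characterization of a $zw$-edge by $r_{kl}$ alone.

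I expect the only real subtlety, rather than a genuine obstacle, to be the bookkeeping that guarantees the biconditionals are tight: one must know that ``of maximal order'' and ``not $z$-close'' are genuine dichotomies on the extracted sequence, so that each qualitative statement becomes a clean $\approx$ or $\prec$ relation. This is precisely where the repeated extraction ensuring convergence of the normalized vectors $\mathcal{Z}^{(n)}/\|\mathcal{Z}^{(n)}\|$ and $\mathcal{W}^{(n)}/\|\mathcal{W}^{(n)}\|$ is used. Beyond that, everything reduces to arithmetic with the relations $\approx,\prec,\preceq$ under the identity $r_{kl}^{2}=z_{kl}w_{kl}$.
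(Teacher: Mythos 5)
Your proof is correct and follows exactly the route the paper leaves implicit: the two-sided bounds come from the fact that both $z_{kl},w_{kl}$ and their reciprocals $W_{kl},Z_{kl}$ are (differences of) components of vectors of norm $\epsilon^{-2}$, the stroke/circle vocabulary translates into $\approx$ and $\prec$ relations via the convergence of $\mathcal{Z}^{(n)}/\|\mathcal{Z}^{(n)}\|$ and $\mathcal{W}^{(n)}/\|\mathcal{W}^{(n)}\|$, and the distance estimates follow from $r_{kl}^2=z_{kl}w_{kl}$. You also correctly identify and resolve the one point the paper glosses over, namely that on the extracted subsequence ``not of maximal order'' and ``not $z$-close'' genuinely sharpen to $\prec$, which is what makes all the biconditionals tight.
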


\begin{remark}
	By the estimates above, the strokes in a $zw$-edge are not maximal. A maximal $z$-stroke is exactly a maximal $z$-edge.
\end{remark}

The following rules for the two-colored diagrams  are valid if  ``$z$'' and ``$w$''  were switched.

 \medskip
	\emph{Rule I}: 
	There is something at each end of any $z$-stroke: another $z$-stroke
	or/and a $z$-circle drawn around the name of the vertex. A $z$-circle cannot be isolated; there must be a $z$-stroke emanating from it. There is at least one
	$z$-stroke in the $z$-diagram. \medskip

	\emph{Rule II}: If vertices $\textbf{k}$ and $\textbf{l}$
	are  $z$-close, they are both $z$-circled or both not
	$z$-circled. \medskip

	\emph{Rule III}: The moment of vorticity of a set of vertices forming an isolated component of the $z$-diagram is $z$-close to the origin. \medskip
	
	\emph{Rule IV}: Consider the $z$-diagram or an isolated component of it. If there
	is a $z$-circled vertex, there is another one.   If the $z$-circled vertices are all
	$z$-close together,  the total vorticity of these $z$-circled  vertices is zero. \medskip
	
	\emph{Rule V}:  There is at least one $z$-circle at certain end of any maximal $z$-stroke. As a result,
	if an isolated component of the $z$-diagram has no $z$-circled vertex,
	then it has no maximal $z$-stroke. \medskip
	
	\emph{Rule VI}: 
	If there are two consecutive $z$-stroke, there is a third $z$-stroke closing the triangle. \medskip

	\begin{remark}\label{rmk:rules}
		We would like to compare our rules for the \(N\)-vortex problem with those of Albouy and Kaloshin for the \(N\)-body problem \cite{Albouy2012Finiteness}. Our Rules I to V correspond to their Rules 1a to 1e but are weaker due to the possibility of negative vorticities and clusters with zero total vorticity. Our Rule VI is similar to their Rule 2b but is stronger in the sense that we only need two consecutive \(z\)-strokes (\(w\)-strokes) to form a triangle of \(z\)-strokes (\(w\)-strokes), whereas they require two consecutive \(zw\)-edges to form a triangle of \(zw\)-edges. 
		
		We lack the two-color rules due to the differences between the two potentials. The Newtonian potential leads to \(Z_{ij} = z_{ij}^{-\frac{1}{2}} w_{ij}^{-\frac{3}{2}}\), so  the existence of a \(z\)-stroke provides estimates for both \(z_{ij}\) and \(w_{ij}\), which in turn yields the two-color rules. In contrast, the logarithmic potential leads to \(Z_{ij} = w_{ij}^{-1}\), so the existence of a \(z\)-stroke provides an estimate only for \(w_{ij}\), thus we lack the two-color rules.
	\end{remark}

\emph{Example: Robert's continuum:}  Roberts found the following continuum of relative equilibria for five vortices. The vorticities are $\Gamma_1=-1, \Gamma_2=\Gamma_3 =\Gamma_4=\Gamma_5=2$. The first vortex is at the origin $z_1=w_1=0$, while the other four vortices form a rhombus
\[   (z_2,z_3, z_4, z_5)=  (a, b, -a, -b), \ (w_2,w_3, w_4, w_5)=  (a, -b, -a, b), \  \]
where $a\in \mathbb{R}$ and $\textbf{i} b\in \mathbb{R}$ for the real configurations. One can check that the configuration defined by the coordinates  above and   the restriction $r^2_{23}=r^2_{34}=r^2_{45}=r^2_{52}=a^2 -b^2=1$ satisfies
system \eqref{equ:complexcc}.

For the real configurations, when $a\to 0$, vertices $\textbf{1}$, $\textbf{2}$, and $\textbf{4}$ collide. The corresponding singular sequence is  a triple contact with $r_{12}, r_{14}, r_{24} \to 0$, so the corresponding diagram is a copy of Diagram 1 of Figure \ref{fig:list1} (the vertices of the triangle should be \textbf{1, 2, 4}). 
When $b\to 0$, vertices $\textbf{1}$, $\textbf{3}$, and $\textbf{5}$ collide. Similarly, the corresponding diagram is a copy of Diagram 1 of Figure \ref{fig:list1} (the vertices of the triangle should be \textbf{1, 3, 5}).

For the complex configurations, when $a\to \infty$, we have two choices: $b\sim a$ or $b\sim -a$. We get two other singular sequences with $\{z_2,z_3, z_4, z_5, w_2,w_3, w_4, w_5 \}$ all going to infinity. For instance,  if $a\to \infty$ and $b\sim a$,  
the corresponding diagram is exactly  Diagram 18 of Figure \ref{fig:list2}.

\section{Constraints  when some sub-diagrams appear} \label{sec:moreproperty}

\indent\par
We collect some useful results in this section. 
	We will use notations such as \(\Gamma_J, \Gamma_{j_1, \ldots, j_n}, L_J,\) and \(L_{j_1, \ldots, j_n}\) below. Please refer to Definition \ref{def:LI} for their meanings.

\begin{proposition}\label{Prp:sumT12} \cite{yu2021Finiteness}
	Suppose that a diagram has two $z$-circled vertices (say $\textbf{1}$ and $\textbf{2}$) which are also $z$-close,   if none of all the other vertices is $z$-close with them,  then $\Gamma_1+\Gamma_2\neq 0$ and $\overline{\Lambda}z_{12}w_{12}\sim \frac{1}{\Gamma_1+\Gamma_2}$. In particular, vertices $\textbf{1}$ and $\textbf{2}$ cannot form a $z$-stroke.
\end{proposition}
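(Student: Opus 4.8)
The plan is to read off the asymptotic size of $z_{12}w_{12}$ by subtracting the two equations attached to the circled pair $\textbf{1},\textbf{2}$, and the first decision is which of the two families in \eqref{equ:complexcc} to use. The hypotheses control the $z$-separations only: by Proposition~\ref{Estimate1}, $z$-closeness of $\textbf{1}$ and $\textbf{2}$ means $z_{12}\prec\epsilon^{-2}$, while the absence of any further vertex $z$-close to them forces $z_{j1}\approx\epsilon^{-2}$ and $z_{j2}\approx\epsilon^{-2}$ for each $j\geq 3$. Since nothing is known a~priori about the $w$-separations, I would subtract the \emph{$w$-equations} $\overline{\Lambda}w_n=\sum_{j\neq n}\Gamma_j/z_{jn}$ for $n=2$ and $n=1$.

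Carrying out the subtraction, the two terms with $z_{12}$ in the denominator combine into $(\Gamma_1+\Gamma_2)/z_{12}$, while the rest telescopes through $\tfrac{1}{z_{j2}}-\tfrac{1}{z_{j1}}=-z_{12}/(z_{j1}z_{j2})$, giving the exact identity
\[
\overline{\Lambda}\,w_{12}=\frac{\Gamma_1+\Gamma_2}{z_{12}}-z_{12}\sum_{j\geq 3}\frac{\Gamma_j}{z_{j1}z_{j2}} .
\]
Multiplying by $z_{12}$ turns this into $\overline{\Lambda}\,z_{12}w_{12}=(\Gamma_1+\Gamma_2)-z_{12}^{2}\sum_{j\geq 3}\Gamma_j/(z_{j1}z_{j2})$. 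The decisive estimate is then immediate: each product $z_{j1}z_{j2}\approx\epsilon^{-4}$, so the finite sum is $\preceq\epsilon^{4}$, and because $z_{12}^{2}\prec\epsilon^{-4}$ by $z$-closeness the remainder tends to $0$. Hence $\overline{\Lambda}\,z_{12}w_{12}$ converges to the constant $\Gamma_1+\Gamma_2$, which is the asserted asymptotic equivalence provided this constant is nonzero.

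To rule out $\Gamma_1+\Gamma_2=0$ I would return to the displayed identity before multiplying: in that case $\overline{\Lambda}w_{12}=-z_{12}\sum_{j\geq3}\Gamma_j/(z_{j1}z_{j2})$, whose modulus is $\prec\epsilon^{2}$ (again using $z_{12}\prec\epsilon^{-2}$ against a sum that is $\preceq\epsilon^{4}$). This would give $w_{12}\prec\epsilon^{2}$, contradicting the universal lower bound $w_{12}\succeq\epsilon^{2}$ of Proposition~\ref{Estimate1}; so $\Gamma_1+\Gamma_2\neq0$. The final clause then comes for free: $\overline{\Lambda}z_{12}w_{12}\to\Gamma_1+\Gamma_2\neq0$ together with $|\overline{\Lambda}|=1$ gives $r_{12}^{2}=z_{12}w_{12}\approx 1$, whereas a $z$-stroke between $\textbf{1}$ and $\textbf{2}$ would force $w_{12}\approx\epsilon^{2}$ and therefore $z_{12}=r_{12}^{2}/w_{12}\approx\epsilon^{-2}$, in contradiction with $z$-closeness.

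The hard part will be the remainder estimate, i.e.\ converting the qualitative hypothesis ``no other vertex is $z$-close to $\textbf{1}$ or $\textbf{2}$'' into the quantitative bound $z_{j1}z_{j2}\approx\epsilon^{-4}$, and noticing that only an \emph{upper} bound on the possibly cancelling sum $\sum_{j\geq3}\Gamma_j/(z_{j1}z_{j2})$ is required. The other point that must not be overlooked is the choice to subtract the $w$-equations rather than the $z$-equations: the symmetric computation from the $z$-equations produces the sum $\sum_{j\geq3}\Gamma_j/(w_{j1}w_{j2})$, which cannot be estimated because the hypotheses give no control on the $w$-separations.
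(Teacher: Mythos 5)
Your argument is correct, and it is surely the intended one: note first that this paper does not prove Proposition~\ref{Prp:sumT12} at all (it is quoted from \cite{yu2021Finiteness}), but your computation is exactly parallel to the proofs the paper does write out for Propositions~\ref{Prp:isolate-z_12} and~\ref{Prp:isolate-z_123} --- subtract the two $w$-equations of \eqref{equ:complexcc}, collect $(\Gamma_1+\Gamma_2)/z_{12}$, telescope the rest through $\tfrac{1}{z_{j2}}-\tfrac{1}{z_{j1}}=-z_{12}/(z_{j1}z_{j2})$, and convert ``no other vertex is $z$-close to $\textbf{1}$ or $\textbf{2}$'' into $z_{j1}z_{j2}\approx\epsilon^{-4}$ (legitimate, since along a singular sequence the ratios $z_{j1}/\|\mathcal{Z}\|$ converge and the non-closeness hypothesis says the limits are nonzero). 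Your identification of the two delicate points --- that one must use the $w$-equations because nothing controls the $w$-separations, and that only an upper bound on the possibly cancelling sum is needed --- is exactly right, and the deductions of $\Gamma_1+\Gamma_2\ne 0$ (from the lower bound $w_{12}\succeq\epsilon^2$ of Proposition~\ref{Estimate1}) and of the impossibility of a $z_{12}$-stroke are both sound.

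One discrepancy should be flagged. Your identity yields $\overline{\Lambda}z_{12}w_{12}\to\Gamma_1+\Gamma_2$, whereas the statement reads $\overline{\Lambda}z_{12}w_{12}\sim\frac{1}{\Gamma_1+\Gamma_2}$. Your constant is the correct one under this paper's conventions. This can be checked on Roberts' continuum at infinity: with $\Gamma=(2,2,2,2,-1)$ and the coordinates of Section~\ref{sec:pri}, one computes $\Lambda=\overline{\Lambda}=4$ and $z_{12}w_{12}=(b-a)(-b-a)=a^2-b^2=1$, so $\overline{\Lambda}z_{12}w_{12}=4=\Gamma_1+\Gamma_2$ and not $1/4$ (the quantity $\overline{\Lambda}z_{12}w_{12}$ is invariant under the real dilation used to normalize $|\Lambda|=1$, so the check is legitimate; vertices $\textbf{1}$ and $\textbf{2}$ are indeed $z$-circled, $z$-close, and no other vertex is $z$-close to them in that limit). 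So the printed $\frac{1}{\Gamma_1+\Gamma_2}$ appears to be a typo rather than an error in your work. Nothing downstream is affected: every invocation of the proposition in this paper uses only $\Gamma_1+\Gamma_2\ne0$ or the ``no $z$-stroke'' clause, both of which your proof establishes.
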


\begin{corollary}\label{Cor:sumT12} 
	Suppose that a diagram has  two $z$-circled vertices (say $\textbf{1}$ and $\textbf{2}$) which also form a $z$-stroke. If  none of all the other vertices is $z$-close with them, then $z_{12}\approx \epsilon^{-2}$,  $\Gamma_1+\Gamma_2\neq 0$, and $w_1, w_2\preceq \epsilon ^2$.  
\end{corollary}

\begin{proof}
	If they are $z$-close, by Proposition \ref{Prp:sumT12}, they cannot form a $z$-stroke, which is a contradiction. Note that    Rule III implies  
	\[\epsilon^{-2}   \succ  \Gamma_1 z_1 +\Gamma_2 z_2 = (\Gamma_1+\Gamma_2) z_1 +\Gamma_2 (z_2-z_1).   \]
	We obtain $\Gamma_1+\Gamma_2\neq 0.$
	
	Note that $z_{1j}, z_{2j}\approx  \epsilon^{-2}, j\ge3$. Then 
	\[\bar{\Lambda}\sum_{j\ge 3} \Gamma_j w_j=\sum_{j\ge 3}\frac{\Gamma_1\Gamma_j}{z_{1j}}+\sum_{j\ge3}\frac{\Gamma_2\Gamma_j}{z_{2j}}\preceq \epsilon^{2}.\]
	By the equation $\sum_{j}\Gamma_j w_j=0$, we have 
	\[ \epsilon^{2}\succeq \Gamma_1 w_1+\Gamma_2w_2=(\Gamma_1+\Gamma_2)w_1+ \Gamma_2w_{21}.  \]
	Since $w_{21}\approx \epsilon^2$, we have $w_1, w_2\preceq \epsilon ^2$.  
\end{proof}

\begin{proposition}\label{Prp:LI} \cite{yu2021Finiteness}
	Suppose that  a fully $z$-stroked sub-diagram with  vertices $\{1,..., k\}, (k\ge 3)$ exists in isolation  in a diagram, and none of its vertices  is $z$-circled, then
	\begin{equation}\notag
	L_{1...k}= \sum_{i,j \in \{1, ..., k\}, i\ne j}\Gamma_i\Gamma_j=0.
	\end{equation}
	
\end{proposition}

\begin{corollary}\label{Cor:L&Gamma} 
	Suppose a fully \(z\)-stroked sub-diagram with vertices \(K = \{1, \ldots, k\}\), \(k \ge 3\), exists in isolation in a diagram, and none of its vertices is \(z\)-circled. 
	\begin{enumerate}
		\item If there is an isolated component \(I\) of the \(w\)-diagram such that \(K \subset I\), then the \(w\)-circled vertices in \(I\) cannot be exactly \(\{1, \ldots, k\}\). 
		\item Consider  any subset of \(K\) with cardinality \((k-1)\) , say \(K_1 = \{2, \ldots, k\}\). If there is an isolated component \(I\) of the \(w\)-diagram with \(K_1 \subset I\), then the \(w\)-circled vertices in \(I\) cannot be exactly \(K_1\). 
		\item If there is a vertex outside of \(K\), say \(k+1\), such that \(\{k+1\} \cup K\) forms an isolated component of the \(w\)-diagram and these \(k+1\) vertices are fully \(w\)-stroked, then there is at least one \(w\)-circle among them.  
		\item If there are several isolated components \(\{I_j, j = 1, \ldots, s\}\) of the \(w\)-diagram with \(K \subset \cup_{j=1}^s I_j\), then the \(w\)-circled vertices in \(\cup_{j=1}^s I_j\) cannot be exactly \(\{1, \ldots, k\}\). 
	\end{enumerate}
\end{corollary}

\begin{proof}
	First, we have \(L_{K} = 0\) by Proposition \ref{Prp:LI}, and the vertices of \(K\) are all \(w\)-close by  the estimate of Proposition \ref{Estimate1}.
	
	For part (1), if the \(w\)-circled vertices in \(I\) are exactly \(\{1, \ldots, k\}\), then by Rule IV, we have \(\sum_{i \in K} \Gamma_i = 0\). This leads to a contradiction because:
	\[
	\left(\sum_{i \in K} \Gamma_i\right)^2 = \sum_{i \in K} \Gamma_i^2 + 2L_{K}.
	\]
	The proof of part (4) is similar.
	
	For part (2), if the \(w\)-circled vertices in \(I\) are exactly \(K_1 = \{2, \ldots, k\}\), then by Rule IV, we have \(\sum_{i=2}^k \Gamma_i = 0\). Therefore:
	\[
	L_{K_1} = L_{K} - \Gamma_{1}\left(\sum_{i=2}^k \Gamma_i\right) = 0,
	\]
	which again leads to a contradiction since \(\sum_{i \in K_1} \Gamma_i = 0\).
	
	For part (3), if the component \(\{k+1\} \cup K\) is fully \(w\)-stroked but has no \(w\)-circle, then Rule IV implies that \(L_K = 0\) and \(L_K + \Gamma_{k+1} \sum_{i \in K} \Gamma_i = 0\). This leads to \(\sum_{i \in K} \Gamma_i = 0\), which is a contradiction.
\end{proof}

\begin{proposition}\label{Prp:isolate-z_12}
	Suppose that a diagram has an isolated \(z\)-stroke in the \(z\)-diagram, and its two ends are \(z\)-circled. Let \(\textbf{1}\) and \(\textbf{2}\) be the ends of this \(z\)-stroke. Suppose there is no other \(z\)-circle in the diagram. Then \(\Gamma_1 + \Gamma_2 \ne 0\), and \(z_{12}\) is maximal. The diagram forces \(\Lambda = \pm 1\) or \(\pm \textbf{i}\).
	Furthermore,
	\begin{itemize}
		\item If $\Lambda = \pm 1$, we have $\sum_{j=3}^N \Gamma_j=0$;
		\item 	If $\Lambda = \pm \textbf{i}$, we have $L=0$ and $\Gamma_1\Gamma_2=L_{3...N}$.
	\end{itemize}
\end{proposition}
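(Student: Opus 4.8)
The plan is to read off the leading-order behaviour of the four defining equations at vertices $\textbf{1},\textbf{2}$ in both colours, and then to feed the resulting asymptotics into the identity $\Lambda I=L$ and its $z$–$w$ mirror. First I would pin down the $z$-stroke. Since $\textbf{1},\textbf{2}$ are both $z$-circled, isolated, and carry no other $z$-circle, Proposition~\ref{Prp:sumT12} forbids them from forming a $z$-stroke if they were $z$-close; hence the $\textbf{1}\textbf{2}$ stroke is a \emph{maximal} $z$-stroke and Proposition~\ref{Estimate1} gives $z_{12}\approx\epsilon^{-2}$, $w_{12}\approx\epsilon^{2}$ and $z_{12}w_{12}=r_{12}^2\approx1$. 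Because the stroke is isolated, the only maximal-order term in the $z$-equation at $\textbf{1}$ (resp.\ $\textbf{2}$) is the one coming from $Z_{12}$, so $\Lambda z_1\sim-\Gamma_2 Z_{12}$ and $\Lambda z_2\sim\Gamma_1 Z_{12}$, whence $\Lambda z_{12}w_{12}\to\Gamma_1+\Gamma_2$. As $z_{12}w_{12}$ stays bounded away from $0$ and $|\Lambda|=1$, this forces $\Gamma_1+\Gamma_2\neq0$, and it also fixes $z_1\sim-\Gamma_2/(\Lambda w_{12})$, $z_2\sim\Gamma_1/(\Lambda w_{12})$.

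The heart of the argument is the $w$-side. Here vertices $\textbf{1},\textbf{2}$ are not $w$-circled: in the $w$-equations at $\textbf{1},\textbf{2}$ every term $W_{j1},W_{j2}$ is $\approx\epsilon^2$, because $z_1,z_2\approx\epsilon^{-2}$ dominate all other $z$-coordinates (the vertices $3,\dots,N$ being un-$z$-circled, $z_j\prec z_1$). Using $z_1-z_2=-z_{12}$ together with the decoupling $\sum_{j\ge3}\Gamma_j/(z_1-z_j)\sim(\sum_{j\ge3}\Gamma_j)/z_1$, I obtain, with $\sigma:=\sum_{j=3}^N\Gamma_j$,
\[
w_1\sim-\Lambda^2 w_{12}\Big(\tfrac{\Gamma_2}{\Gamma_1+\Gamma_2}+\tfrac{\sigma}{\Gamma_2}\Big),\qquad w_2\sim\Lambda^2 w_{12}\Big(\tfrac{\Gamma_1}{\Gamma_1+\Gamma_2}+\tfrac{\sigma}{\Gamma_1}\Big).
\]
Dividing the identity $w_2-w_1=w_{12}$ by $w_{12}$ and letting $n\to\infty$ collapses these two limits into the single relation
\[
1=\Lambda^2\Big(1+\tfrac{\sigma(\Gamma_1+\Gamma_2)}{\Gamma_1\Gamma_2}\Big)=\Lambda^2\,\frac{L-L_{3\ldots N}}{\Gamma_1\Gamma_2},
\]
using $L-L_{3\ldots N}=\Gamma_1\Gamma_2+\sigma(\Gamma_1+\Gamma_2)$. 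This both shows $L-L_{3\ldots N}\neq0$ and gives $\Lambda^2=\Gamma_1\Gamma_2/(L-L_{3\ldots N})$. The right-hand side is real and $|\Lambda^2|=1$, so $\Lambda^2=\pm1$, i.e.\ $\Lambda\in\{\pm1,\pm\textbf{i}\}$, which is the second assertion.

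It remains to split cases. If $\Lambda=\pm1$ then $\Lambda^2=1$ forces $\sigma(\Gamma_1+\Gamma_2)=0$; since $\Gamma_1+\Gamma_2\neq0$ this yields $\sum_{j=3}^N\Gamma_j=0$. If instead $\Lambda=\pm\textbf{i}$ then $\Lambda^2=-1\neq1$. Besides $\Lambda I=L$ (equation~\eqref{LI}), the mirror computation — multiplying the second family of equations in \eqref{equ:complexcc} by $\Gamma_n z_n$, summing, and using the antisymmetry of $W_{jn}$ — gives $\Lambda^{-1}I=L$; subtracting, $(\Lambda-\Lambda^{-1})I=0$, so $I=0$ and hence $L=\Lambda I=0$. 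Substituting $L=0$ into $\Lambda^2=\Gamma_1\Gamma_2/(L-L_{3\ldots N})=-1$ gives $\Gamma_1\Gamma_2=L_{3\ldots N}$, completing both bullet points.

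The main obstacle is the $w$-side bookkeeping in the second paragraph: one must justify that in the $w$-equations at $\textbf{1}$ and $\textbf{2}$ the contributions of the distant vertices $3,\dots,N$ really collapse to the single term $\sigma/z_1$ (resp.\ $\sigma/z_2$), and that no accidental cancellation lowers the order of $w_1,w_2$ below $\epsilon^2$. The robust device against the latter is to divide $w_2-w_1=w_{12}$ by $w_{12}$, so that only the ratios $w_1/w_{12},w_2/w_{12}$ — whose limits always exist and sum to a difference equal to $1$ — enter, rather than $w_1,w_2$ individually. Everything else is a routine extraction of convergent subsequences (of $\Lambda^{(n)}$, of $z_{12}w_{12}$, and of the displayed ratios) and comparison of leading orders.
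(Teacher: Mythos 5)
Your argument is correct and follows essentially the same route as the paper's proof: both use Proposition \ref{Prp:sumT12} to rule out $z$-closeness of $\textbf{1}$ and $\textbf{2}$ (hence $\Gamma_1+\Gamma_2\neq 0$), read off $\Lambda z_1\sim-\Gamma_2Z_{12}$, $\Lambda z_2\sim\Gamma_1Z_{12}$ from the isolated $z$-stroke, substitute into the difference of the $w$-equations to obtain $\overline{\Lambda}/\Lambda=1+\sigma(1/\Gamma_1+1/\Gamma_2)$, and then split the cases $\Lambda^2=\pm1$. The only cosmetic differences are that you compute $w_1,w_2$ separately before dividing by $w_{12}$ (the paper subtracts the two $w$-equations first) and that you spell out the derivation of $L=0$ in the collapse case, which the paper delegates to Proposition \ref{Iis0}.
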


\begin{proof}
	
	The facts  that	$\Gamma_1+\Gamma_2\neq 0$ and $z_{12}$ is maximal  follow from Corollary \ref{Cor:sumT12}. 
	Without loss of generality, assume $z_1\sim -\Gamma_2 a\epsilon^{-2}$ and $z_2\sim \Gamma_1 a \epsilon^{-2}$, then $$z_{12}\sim (\Gamma_1 +\Gamma_2) a \epsilon^{-2},\  \frac{1}{z_{2}}-\frac{1}{z_{1}}\sim (\frac{1}{\Gamma_1}+\frac{1}{\Gamma_2})\frac{\epsilon^2}{a}. $$
	
	The  System \eqref{equ:complexcc} yields
	\begin{equation}
	\begin{array}{c}
	\label{equ:iso-z12}\overline{\Lambda} w_{12}=
	(\Gamma_1+\Gamma_2)W_{12}+ \sum_{j=3}^N \Gamma_j (\frac{1}{z_{j2}}-\frac{1}{z_{j1}}) \cr 
	\Lambda z_{2} \sim \Gamma_1 Z_{12}.
	\end{array}{}
	\end{equation}
	
	The second equation of \eqref{equ:iso-z12} implies  $w_{12} \sim \frac{\epsilon^2}{ a\Lambda}$.  Note that
	$\frac{1}{z_{j2}}-\frac{1}{z_{j1}}\sim \frac{1}{z_{2}}-\frac{1}{z_{1}}$
	for all $j >2$ and that  $W_{12}=\frac{1}{z_{12}}$. The first equation of
	\eqref{equ:iso-z12} implies 
	\begin{equation}\label{equ:iso-z12-1}
	{\overline{\Lambda}}/{\Lambda}=1+ \sum_{j=3}^N \Gamma_j  (\frac{1}{\Gamma_1}+\frac{1}{\Gamma_2}).
	\end{equation}
	It follows that $\Lambda = \pm 1$ or $ \pm \textbf{i}$.
	
	If $\Lambda = \pm 1$, we have
	\[ 0= \sum_{j=3}^N \Gamma_j  (\frac{1}{\Gamma_1}+\frac{1}{\Gamma_2}),\  \Rightarrow\  \sum_{j=3}^N \Gamma_j =0. \]
	
	If $\Lambda = \pm \textbf{i}$, we obtain 
	\[ -2= \sum_{j=3}^N \Gamma_j  (\frac{1}{\Gamma_1}+\frac{1}{\Gamma_2}),\  L=0, \ \Rightarrow L=0, \ \Gamma_1 \Gamma_2=L_{3...N}.  \]
\end{proof}

Similarly, we have the following result.

\begin{proposition}\label{Prp:isolate-z_123}
	Suppose that a diagram has an isolated triangle of \(z\)-strokes in the \(z\)-diagram, where two of the vertices of the triangle are \(z\)-circled. Let \(\textbf{2}\) and \(\textbf{3}\) be the two \(z\)-circled vertices and \(\textbf{1}\) the other vertex. Suppose there is no other \(z\)-circle in the diagram. Then \(\Gamma_2 + \Gamma_3 \ne 0\), and \(z_{23}\) is maximal. The diagram forces \(\Lambda = \pm 1\) or \(\pm \textbf{i}\).  Furthermore,
	\begin{itemize}
		\item If $\Lambda = \pm 1$, we have $\sum_{j=4}^N \Gamma_j=0$;
		\item 	If $\Lambda = \pm \textbf{i}$, we have $L=0$ and $L_{123}=L_{4...N} + \Gamma_1( \sum _{j=4}^N \Gamma_j)$.
	\end{itemize}
\end{proposition}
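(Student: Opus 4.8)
The plan is to imitate the proof of Proposition \ref{Prp:isolate-z_12}, the only new difficulty being that the isolated component now consists of three mutually $z$-stroked vertices rather than of a single $z$-stroke. First I would exclude $\Gamma_2+\Gamma_3=0$ exactly as there: if $\Gamma_2+\Gamma_3=0$, then Rule III ($\Gamma_1 z_1+\Gamma_2 z_2+\Gamma_3 z_3\prec\epsilon^{-2}$) together with $z_1\prec\epsilon^{-2}$ gives $\Gamma_2 z_{23}\prec\epsilon^{-2}$, so the $z$-circled vertices $\textbf{2},\textbf{3}$ are $z$-close. Since $\textbf{1}$ is not $z$-circled it is not $z$-close to them (Rule II), and no vertex $\textbf{j}$, $j\ge4$, is $z$-circled, so nothing else is $z$-close to $\{\textbf{2},\textbf{3}\}$; Proposition \ref{Prp:sumT12} then both asserts $\Gamma_2+\Gamma_3\neq0$ and forbids the $z$-stroke between $\textbf{2}$ and $\textbf{3}$, a contradiction. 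Hence $\Gamma_2+\Gamma_3\neq0$, and I normalize the component by its moment of vorticity: as $z_1\prec\epsilon^{-2}$, the leading parts of $z_2,z_3$ must cancel, so I may take $z_2\sim-\Gamma_3 a\epsilon^{-2}$, $z_3\sim\Gamma_2 a\epsilon^{-2}$, whence $z_{23}\sim(\Gamma_2+\Gamma_3)a\epsilon^{-2}$ and $\frac{1}{z_3}-\frac{1}{z_2}\sim\frac{\epsilon^2}{a}(\frac{1}{\Gamma_2}+\frac{1}{\Gamma_3})$.

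The new step, and the main obstacle, is to extract $w_{23}$ from the $z$-equations. Unlike the single-stroke case, the $z$-equation at a circled vertex now carries two maximal terms, e.g. $\Lambda z_3\sim\Gamma_1 Z_{13}+\Gamma_2 Z_{23}$, so $w_{23}$ cannot be read off directly. I would close the system with two further relations: the $z$-equation at the uncircled vertex $\textbf{1}$, where $\Lambda z_1\prec\epsilon^{-2}$ forces the leading cancellation $\Gamma_2 Z_{12}+\Gamma_3 Z_{13}\prec\epsilon^{-2}$, i.e. $Z_{12}\sim-\frac{\Gamma_3}{\Gamma_2}Z_{13}$; and the additivity identity $w_{13}=w_{12}+w_{23}$, i.e. $\frac{1}{Z_{13}}=\frac{1}{Z_{12}}+\frac{1}{Z_{23}}$. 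Solving this small linear system yields $w_{23}\sim\frac{L_{123}}{\Gamma_2\Gamma_3\,\Lambda\,a}\epsilon^2$, where $L_{123}=\Gamma_1\Gamma_2+\Gamma_2\Gamma_3+\Gamma_3\Gamma_1$. In particular, since the $z$-stroke between $\textbf{2}$ and $\textbf{3}$ forces $w_{23}\approx\epsilon^2\neq0$, we must have $L_{123}\neq0$.

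I would then substitute into the $w$-difference equation for the two circled vertices, which reads
\[ \overline{\Lambda}\,w_{23}\sim(\Gamma_2+\Gamma_3)W_{23}+\Gamma_1(W_{13}-W_{12})+\sum_{j=4}^N\Gamma_j(W_{j3}-W_{j2}). \]
Using $W_{23}=\frac{1}{z_{23}}$ and $W_{j3}-W_{j2}\sim W_{13}-W_{12}\sim\frac{1}{z_3}-\frac{1}{z_2}$ (valid because $z_1$ and every $z_j$, $j\ge4$, are $\prec\epsilon^{-2}$), this collapses to the master relation
\[ \frac{\overline{\Lambda}}{\Lambda}\cdot\frac{L_{123}}{\Gamma_2\Gamma_3}=1+\Big(\Gamma_1+\sum_{j=4}^N\Gamma_j\Big)\Big(\frac{1}{\Gamma_2}+\frac{1}{\Gamma_3}\Big), \]
the exact analog of \eqref{equ:iso-z12-1}. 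Because the right-hand side is real and $L_{123}\neq0$, the modulus-one quantity $\overline{\Lambda}/\Lambda$ must be $\pm1$, forcing $\Lambda=\pm1$ or $\Lambda=\pm\textbf{i}$.

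Finally I split into the two cases. For $\Lambda=\pm1$ we have $\overline{\Lambda}/\Lambda=1$, and clearing denominators in the master relation leaves a single linear constraint among the vorticities, which simplifies (using $\Gamma_2+\Gamma_3\neq0$) to the stated relation between $\Gamma_1$ and $\sum_{j=4}^N\Gamma_j$. For $\Lambda=\pm\textbf{i}$ the sequence must consist of genuine collapse configurations, since relative equilibria have real $\Lambda$ and hence $\Lambda=\pm1$; Proposition \ref{Iis0} then gives $L=0$. Substituting $\overline{\Lambda}/\Lambda=-1$ into the master relation and eliminating by means of the splitting $L=L_{123}+(\Gamma_1+\Gamma_2+\Gamma_3)\sum_{j\ge4}\Gamma_j+L_{4\ldots N}=0$ yields $L_{123}=L_{4\ldots N}+\Gamma_1\sum_{j=4}^N\Gamma_j$. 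The one delicate point throughout is to justify that all the asymptotic equivalences genuinely hold at the common order $\epsilon^2$, i.e. that no unexpected cancellation lowers the order of $w_{23}$ or of the right-hand side of the $w$-equation; this is precisely what $\Gamma_2+\Gamma_3\neq0$ and $L_{123}\neq0$ secure.
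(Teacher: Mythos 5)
Your strategy is essentially the paper's: exclude $\Gamma_2+\Gamma_3=0$ via Rule III and Proposition \ref{Prp:sumT12}, normalize $z_2\sim-\Gamma_3a\epsilon^{-2}$, $z_3\sim\Gamma_2a\epsilon^{-2}$, determine $w_{23}$ from the $z$-equations of the isolated triangle, and substitute into the $w$-equation for $w_{23}$. The paper reaches the same point by introducing a second scale $b$ with $Z_{21}\sim-\Gamma_3 b\epsilon^{-2}$, $Z_{31}\sim\Gamma_2 b\epsilon^{-2}$ and pairing two reduced equations in $a/b$; your ``small linear system'' (the cancellation at vertex $\textbf{1}$, the identity $w_{13}=w_{12}+w_{23}$, and the $z$-equation at vertex $\textbf{3}$) is the same computation, and your value $w_{23}\sim\frac{L_{123}}{\Gamma_2\Gamma_3\Lambda a}\epsilon^2$ and your master relation agree exactly with the paper's $\frac{\overline{\Lambda}}{\Lambda}L_{123}=\Gamma_2\Gamma_3+(\Gamma_2+\Gamma_3)\sum_{j\ne2,3}\Gamma_j$. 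Your explicit remark that $L_{123}\ne0$ is forced by $w_{23}\approx\epsilon^2$ (which is what legitimizes the conclusion on $\Lambda$) is a point the paper leaves implicit.

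The one step you cannot wave through is the last one in the case $\Lambda=\pm1$. Setting $\overline{\Lambda}/\Lambda=1$ in the master relation and writing $L_{123}=\Gamma_2\Gamma_3+\Gamma_1(\Gamma_2+\Gamma_3)$, the terms $\Gamma_2\Gamma_3$ and $\Gamma_1(\Gamma_2+\Gamma_3)$ cancel between the two sides and what remains is $(\Gamma_2+\Gamma_3)\sum_{j=4}^N\Gamma_j=0$, hence $\sum_{j=4}^N\Gamma_j=0$ --- not $\Gamma_1=\sum_{j=4}^N\Gamma_j$. The stated conclusion would follow only if the $\Gamma_1$ term were absent from the right-hand side, i.e.\ if $j=1$ did not contribute to the $w$-difference equation; but, as you yourself observe, $W_{13}-W_{12}\sim\frac{1}{z_3}-\frac{1}{z_2}$ precisely because $z_1\prec\epsilon^{-2}$, so $\Gamma_1$ enters on exactly the same footing as the $\Gamma_j$ with $j\ge4$. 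Your claim that the relation ``simplifies to the stated relation'' is therefore unsubstantiated and, as written, false. (The paper's own proof derives the identical master relation and then asserts the same conclusion without displaying the algebra, so the discrepancy is inherited from the source rather than introduced by you; your $\Lambda=\pm\textbf{i}$ case, by contrast, does check out against the master relation once $L=0$ is imported from Proposition \ref{Iis0}.)
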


\begin{proof}
	The facts that 	$\Gamma_2+\Gamma_3\ne 0$ and $z_{23}$ is maximal  follow from Corollary \ref{Cor:sumT12}.  Note that
	\[  \Gamma_2 z_2 +\Gamma_3 z_3 \prec \epsilon^{-2}, \ \Lambda  z_1 \sim \Gamma_2 Z_{21} +\Gamma_3 Z_{31} \prec \epsilon^{-2}.  \]
	Without loss of generality, assume
	\[ z_2\sim -\Gamma_3 a\epsilon^{-2}, \ z_3\sim \Gamma_2 a \epsilon^{-2}, \ Z_{21} \sim -\Gamma_3 b\epsilon^{-2}, \ Z_{31}\sim \Gamma_2 b \epsilon^{-2}.  \]
	Then
	\begin{eqnarray*}
		z_{23}\sim (\Gamma_2 +\Gamma_3) a \epsilon^{-2},\  \frac{1}{z_{3}}-\frac{1}{z_{2}}\sim (\frac{1}{\Gamma_2}+\frac{1}{\Gamma_3})\frac{\epsilon^2}{a}, \cr
		Z_{23}=\frac{1}{w_{23}}= \frac{1}{1/Z_{21} +1/Z_{13}} \sim -b \frac{\Gamma_2\Gamma_3}{(\Gamma_2+\Gamma_3)}  \epsilon^{-2}.
	\end{eqnarray*}
	Then similar to the above case, we have
	\begin{eqnarray*}
		\overline{\Lambda} w_{23}\sim
		(\Gamma_2+\Gamma_3)W_{23}+ \sum_{j\ne 2, 3}^N \Gamma_j (\frac{1}{z_{3}}-\frac{1}{z_{2}}) \cr 
		\Lambda z_{23} \sim  (\Gamma_2+\Gamma_3) Z_{23} +\Gamma_1 (Z_{13}-Z_{12}).
	\end{eqnarray*}
	
	Short computation reduces the two equations to
	\begin{eqnarray*}
		-\overline{\Lambda} \frac{a}{b}= \frac{\Gamma_2\Gamma_3}{\Gamma_2+\Gamma_3} (1+\sum_{j\ne 2, 3} \Gamma_j \frac{\Gamma_2+\Gamma_3} {\Gamma_2\Gamma_3}), \cr
		-\Lambda \frac{a}{b}= \frac{L_{123}}{\Gamma_2+\Gamma_3}.
	\end{eqnarray*}
	Then we obtain
	\[   \frac{\overline{\Lambda}}{\Lambda}  L_{123}= \Gamma_2\Gamma_3+ (\Gamma_2+\Gamma_3) \sum_{j\ne 2, 3} \Gamma_j.   \]
	It follows that $\Lambda = \pm 1$ or $ \pm \textbf{i}$.
	
	If $\Lambda = \pm 1$, we have  $\sum_{j=4}^N \Gamma_j=0$.
	If $\Lambda = \pm \textbf{i}$, we have $L=0$ and
	$$L=0, \  -L_{123}= \Gamma_2\Gamma_3+ (\Gamma_2+\Gamma_3) \sum_{j\ne 2, 3} \Gamma_j,$$
	which is equivalent to
	$L_{123}=L_{4...N} + \Gamma_1( \sum _{j=4}^N \Gamma_j), \ L=0. $
\end{proof}

\begin{proposition}\label{Prp:triangle} 
	Assume there is a triangle with vertices \(\textbf{1}, \textbf{2}, \textbf{3}\) that is fully \(z\)- and \(w\)-stroked, and fully \(z\)- and \(w\)-circled. Moreover, assume that  the triangle is isolated in the \(z\)-diagram. Then there must exist  some \(k > 3\) such that \(z_{k1} \preceq 1\).
\end{proposition}

\begin{proof}
	By Proposition \ref{Estimate1} and Rule IV, we have  \[z_1\sim z_2\sim z_3, \ w_1\sim w_2\sim w_3, \ \Gamma_{1}+\Gamma_{2}+\Gamma_{3}=0.\]
	Suppose that it holds  $z_{k1}\succ 1$ for all $k>3$. Then  $\frac{1}{z_{kj}}- \frac{1}{z_{k1}}=\frac{z_{1j}}{z_{kj}z_{k1}} \prec \epsilon^2$ for all $k>3, 1\le j\le 3$, and so
	\[\bar{\Lambda}\sum_{j=1}^{3}\Gamma_j w_j=\sum_{k\ge 4} \sum_{j=1}^{3}\frac{\Gamma_k\Gamma_j}{z_{kj}}=\sum_{k\ge 4} \sum_{j=1}^{3}\Gamma_k\Gamma_j  (\frac{1}{z_{k1}}+\frac{1}{z_{kj}}- \frac{1}{z_{k1}} ) \prec \epsilon^{2}.\]

	By the fact that  $w_{12}, w_{13}, w_{23}\approx \epsilon^{2}$, the equations 
	\[\sum_{j=1}^{3}\Gamma_j w_j=\Gamma_{2}w_{12}+\Gamma_{3}w_{13}=\Gamma_{1}w_{21}+\Gamma_{3}w_{23}=\Gamma_{1}w_{31}+\Gamma_{2}w_{32} \prec \epsilon^2\]
	imply that 
	\begin{equation}\label{equ:triangle1}
	\frac{w_{12}}{\Gamma_{3}}\sim\frac{w_{23}}{\Gamma_{1}}\sim\frac{w_{31}}{\Gamma_{2}}\approx \epsilon^{2}.
	\end{equation}
	By the isolation of this triangle in the $z$-diagram, it holds that 
	\begin{equation}\label{equ:triangle2}
	\Lambda z_{1}\sim \frac{\Gamma_2}{w_{21}}+\frac{\Gamma_3}{w_{31}}, ~\Lambda z_{2}\sim \frac{\Gamma_1}{w_{12}}+\frac{\Gamma_3}{w_{32}},~\Lambda z_{3}\sim \frac{\Gamma_1}{w_{13}}+\frac{\Gamma_2}{w_{23}}.
	\end{equation}
	Since $z_1\sim z_2\sim z_3$, the equations \eqref{equ:triangle1} and \eqref{equ:triangle2} lead to 
	\[\frac{\Gamma_1}{\Gamma_{2}}-\frac{\Gamma_2}{\Gamma_{1}}=\frac{\Gamma_2}{\Gamma_{3}}-\frac{\Gamma_3}{\Gamma_{2}}=\frac{\Gamma_3}{\Gamma_{1}}-\frac{\Gamma_1}{\Gamma_{3}}.\] 
	This  contradicts with $\Gamma_{1}+\Gamma_{2}+\Gamma_{3}=0$. 
\end{proof}

Similarly, we have the following result.
\begin{proposition}\label{Prp:triangle2} 
	Suppose that a diagram has an isolated triangle of \(z\)-strokes in the \(z\)-diagram, where all three vertices, say \(\textbf{1}, \textbf{2}, \textbf{3}\), are \(z\)-circled. If \(z_1 \sim z_2 \sim z_3\), then there exists some \(k > 3\) such that \(z_{k1} \prec \epsilon^{-2}\). 
\end{proposition}

\begin{proof}
	Suppose that 	$z_1\sim z_2\sim z_3\approx \epsilon^{-2}$. By Proposition \ref{Estimate1} and Rule IV, we have $\Gamma_{1}+\Gamma_{2}+\Gamma_{3}=0$.
	Suppose that it holds  $z_{k1} \approx \epsilon^{-2}$ for all $k>3$. Then  $\frac{1}{z_{kj}}- \frac{1}{z_{k1}}=\frac{z_{1j}}{z_{kj}z_{k1}} \prec \epsilon^2$ for all $k>3, 1\le j\le 3$. Similar to the argument of the above result, we have  $\bar{\Lambda}\sum_{j=1}^{3}\Gamma_j w_j \prec \epsilon^{2}$, 
	\[  \frac{w_{12}}{\Gamma_{3}}\sim\frac{w_{23}}{\Gamma_{1}}\sim\frac{w_{31}}{\Gamma_{2}}\approx \epsilon^{2},\]
	\begin{equation}\notag
	\Lambda z_{1}\sim \frac{\Gamma_2}{w_{21}}+\frac{\Gamma_3}{w_{31}}, ~\Lambda z_{2}\sim \frac{\Gamma_1}{w_{12}}+\frac{\Gamma_3}{w_{32}},~\Lambda z_{3}\sim \frac{\Gamma_1}{w_{13}}+\frac{\Gamma_2}{w_{23}}, 
	\end{equation}
	and 
	$\frac{\Gamma_1}{\Gamma_{2}}-\frac{\Gamma_2}{\Gamma_{1}}=\frac{\Gamma_2}{\Gamma_{3}}-\frac{\Gamma_3}{\Gamma_{2}}=\frac{\Gamma_3}{\Gamma_{1}}-\frac{\Gamma_1}{\Gamma_{3}}.$
	This  contradicts with $\Gamma_{1}+\Gamma_{2}+\Gamma_{3}=0$. 
\end{proof}

\begin{proposition}\label{Prp:dumbbell} 
	Assume that vertices \textbf{1} and \textbf{2} are both $z$- and $w$-circled and connected by a $zw$-edge, and the sub-diagram formed by the two vertices  is isolated in the $z$-diagram. Assume that 
	vertices \textbf{3} and \textbf{4} are also both $z$- and $w$-circled and connected by a $zw$-edge,  and  is isolated in the $z$-diagram. 
	Then, there must exist some $k>4$ such that at least one among  $z_{k1}, w_{k1}, z_{k3}, w_{k3}$ is bounded (i.e., $\preceq 1$). 
\end{proposition}

\begin{proof}
	By Proposition \ref{Estimate1} and Rule IV, we have  \[z_1\sim z_2,  z_3\sim z_4, \ w_1\sim w_2, w_3 \sim w_4, \ \Gamma_{1}+\Gamma_{2}=0, \Gamma_3+\Gamma_{4}=0.\]
	Suppose that it holds that $w_{k1}\succ 1$ for all $k>4$. Then  $\frac{1}{w_{k2}}- \frac{1}{w_{k1}}=\frac{w_{12}}{w_{k2}w_{k1}} \prec \epsilon^2$ for all $k>4$.  
	Note that $z_{12}\approx \epsilon^2$, and 
	\[\Lambda z_{12}= (\Gamma_1+\Gamma_2)Z_{12}+ \Gamma_3( \frac{w_{21}}{w_{32}w_{31}} -\frac{w_{21}}{w_{42}w_{41}}) + \sum_{k>4}\Gamma_k(\frac{1}{w_{k2}}- \frac{1}{w_{k1}} ).     \]
	We conclude that  
	\[\frac{w_{21}}{w_{31}w_{32}}\approx  \frac{w_{21}}{w_{41}w_{42}}\succeq \epsilon^{2}\Rightarrow w_{31}\preceq 1\Rightarrow w_1\sim w_2\sim w_3\sim w_4.\]
	Similarly, we have\[z_1\sim z_2\sim z_3\sim z_4.\]

	Note that 
	\[\overline{\Lambda}\sum_{j=1}^{4}\Gamma_j w_j=\sum_{k> 4} \sum_{j=1}^{4}\frac{\Gamma_k\Gamma_j}{z_{kj}}=\sum_{k> 4} \Gamma_k \Gamma_1 (\frac{1}{z_{k1}}- \frac{1}{z_{k2}})  + \sum_{k> 4} \Gamma_k \Gamma_3 (\frac{1}{z_{k3}}- \frac{1}{z_{k4}})  \prec \epsilon^{2}. \]
	Then the equation $\Gamma_2 w_{12}+\Gamma_4 w_{34}=\sum_{j=1}^{4}\Gamma_j w_j$ leads to
	\[\Gamma_2 w_{12}\sim -\Gamma_4 w_{34}, \ \text{or} \ \Gamma_2 Z_{34}\sim -\Gamma_4 Z_{12} .\] 
	On the other hand, the isolation of the two segments implies 
	\[\Lambda z_2\sim \Gamma_1 Z_{12}, \ \Lambda z_4\sim \Gamma_3 Z_{34}, \Rightarrow \Gamma_1 Z_{12}\sim \Gamma_3 Z_{34}.\]
	As a result, we have
	\[\Gamma_1\Gamma_2=-\Gamma_3\Gamma_4, \ \text{or} \ \Gamma_1^2+\Gamma_3^2=0,\] 
	which is a contradiction.
	
\end{proof}

\begin{proposition}\label{Prp:quadrilateral} 
	Assume that there is a quadrilateral  with  vertices \textbf{1,  2, 3, 4}, that is fully $z$- and $w$-stroked, and fully  $w$-circled.   Moreover, the quadrilateral is isolated  in the $w$-diagram. 
	Then, there must exist some  $k>4$ such that $w_{k1} \preceq 1$. 
\end{proposition}

\begin{proof}
	We establish the result by contradiction. 
	Rule IV implies that   $\sum_{j=1}^{4}\Gamma_j=0.$ 	Suppose that it holds that $w_{k1}\succ 1$ for all $k>4$. Then $\frac{1}{w_{kj}}- \frac{1}{w_{k1}}\prec  \epsilon^{2}$ for $k>4, j\le4$, so
	\[\Lambda\sum_{j=1}^{4}\Gamma_j z_j=\sum_{k>4}\Gamma_k \sum_{j=1}^{4}\frac{\Gamma_j}{w_{kj}}=\sum_{k>4}\Gamma_k  \sum_{j=1}^{4} \Gamma_j  (\frac{1}{w_{k1}}+\frac{1}{w_{kj}}- \frac{1}{w_{k1}} )   \prec \epsilon^{2}.\]
	Then  \[  \epsilon^2 \succ \sum_{j=1}^{4}\Gamma_j z_j=\sum_{j=2, 3, 4}\Gamma_j z_{1j}\Rightarrow \Gamma_2 z_{12}\sim -\Gamma_3 z_{13}-\Gamma_4 z_{14}\approx \epsilon^{2}.  \]
	Set $z_{13}\sim a \epsilon^2, z_{14}\sim b \epsilon^2$, where $a\ne b$ are some nonzero constants. Then 
	\begin{align*}
		&z_{12}\sim -\frac{\Gamma_3 a+\Gamma_4 b}{\Gamma_2} \epsilon^2,    & z_{23}\sim \frac{a (\Gamma_2+\Gamma_3)+\Gamma_4 b}{\Gamma_2} \epsilon^2,\\
		&z_{24}\sim \frac{\Gamma_3 a+b (\Gamma_2+\Gamma_4)}{\Gamma_2} \epsilon^2,    &z_{34}\sim (b-a) \epsilon^2. 
	\end{align*}
	Since $w_1\sim w_2\sim w_3\sim w_4$, we set $\bar{\Lambda}w_k\sim \frac{1}{c\epsilon^2}, k=1, 2, 3, 4$.
	Substituting those into the system
	\[\bar{\Lambda}w_k\sim \sum_{j\ne k, j=1}^{4}\frac{\Gamma_j}{z_{jk}}, \ k=1, 2,3,4,\]
	which is from the isolation of the quadrilateral in $w$-diagram. 
	We obtain four homogeneous  polynomials of the three variables $a, b, c$. Thus, we set $c=1$, and obtain the following four polynomials of the five variables $a,b,  \Gamma_1, \Gamma_{3}, \Gamma_{4}$, 
	\begin{align*}
		&a^2 (-\Gamma_3) (b+\Gamma_4)+a b \left(-\Gamma_4 (b-2 \Gamma_3)+\Gamma_1^2+2 \Gamma_1 (\Gamma_3+\Gamma_4)\right)-b^2 \Gamma_3 \Gamma_4=0,\\
		&a^3 \Gamma_3^2 (\Gamma_1+\Gamma_4)+a^2 \Gamma_3 \left(-b \left(\Gamma_1^2+\Gamma_1 \Gamma_3+\Gamma_4 (2 \Gamma_3-\Gamma_4)\right)-(\Gamma_1-\Gamma_3+\Gamma_4) (\Gamma_1+\Gamma_3+\Gamma_4)^2\right)\\
		&-a b \left(\Gamma_1^2 \Gamma_4 (b-4 \Gamma_3)+\Gamma_1 \left(\Gamma_4^2 (b+2 \Gamma_4)+2 \Gamma_3^3-2 \Gamma_3^2 \Gamma_4-2 \Gamma_3 \Gamma_4^2\right)-\Gamma_3^2 \Gamma_4 (b+2 \Gamma_4)\right)\\
		&-a b \left(2 b \Gamma_3 \Gamma_4^2-\Gamma_1^4-2 \Gamma_1^3 (\Gamma_3+\Gamma_4)+\Gamma_3^4+\Gamma_4^4\right)\\
		&+b^2 \Gamma_4 \left(\Gamma_1 \left(\Gamma_4 (b+\Gamma_4)-3 \Gamma_3^2-2 \Gamma_3 \Gamma_4\right)+\Gamma_3 \Gamma_4 (b+\Gamma_4)-\Gamma_1^3-\Gamma_1^2 (3 \Gamma_3+\Gamma_4)-\Gamma_3^3-\Gamma_3^2 \Gamma_4+\Gamma_4^3\right)=0,\\
		&a^3 (\Gamma_1+\Gamma_4)+a^2 (\Gamma_3 (2 \Gamma_1+\Gamma_3+2 \Gamma_4)-b (\Gamma_1+2 \Gamma_4))+a b \left(b \Gamma_4-2 \Gamma_1 \Gamma_3-\Gamma_3^2-2 \Gamma_3 \Gamma_4\right)-b^2 \Gamma_1 \Gamma_4=0,\\
		&a^2 \Gamma_3 (b-\Gamma_1)-a b (b (\Gamma_1+2 \Gamma_3)+\Gamma_4 (2 \Gamma_1+2 \Gamma_3+\Gamma_4))+b^2 (b (\Gamma_1+\Gamma_3)+\Gamma_4 (2 \Gamma_1+2 \Gamma_3+\Gamma_4))=0. 
	\end{align*}
	Tedious but standard computation, such as calculating the Gr\"{o}bner basis,   yields 
	\[  b^5 (\Gamma_1+\Gamma_3+\Gamma_4) \left(\Gamma_1^2+\Gamma_1 \Gamma_3+\Gamma_1 \Gamma_4+\Gamma_3^2+\Gamma_3 \Gamma_4+\Gamma_4^2\right)=0. \]
	It is a contradiction since $b\ne 0, \Gamma_1+\Gamma_3+\Gamma_4=-\Gamma_2\ne 0$ and 
	\[\Gamma_1^2+\Gamma_1 \Gamma_3+\Gamma_1 \Gamma_4+\Gamma_3^2+\Gamma_3 \Gamma_4+\Gamma_4^2=\frac{1}{2}( \Gamma_1^2+\Gamma_2^2+\Gamma_3^2+\Gamma_4^2)\ne 0.\]
\end{proof}

\section{Construction of  the 5-vortex diagrams} \label{sec:matrixrules}

\indent\par
From now on, we focus on  the planar 5-vortex problem.  In this section,  we identify all problematic  diagrams for the  5-vortex central configurations. 
 We adopt  the approach  used  by Albouy and Kaloshin in  the $N$-body  problem \cite{Albouy2012Finiteness}, which involves analyzing diagrams according to the maximal number of strokes emanating from a two-colored vertex.
During the analysis of all the possibilities we rule some of them out immediately. The  ones we cannot exclude without further consideration  are
 collected into a list of 31 diagrams, shown in Figures \ref{fig:list1} and \ref{fig:list2} in Section \ref{sec:diagram&constraints}.

We call a \emph{two-colored vertex} of the diagram a vertex which connects at least
a stroke of $z$-color with at least a stroke of $w$-color.  The number of strokes from
a two-colored vertex is at least 2 and at most 8. Given a diagram, we
define $C$ as \emph{the maximal number of strokes from a two-colored vertex}. We use
this number to classify all possible diagrams.

Recall that the $z$-diagram indicates the maximal terms among a finite set
of terms. It is nonempty. If there is a circle, there is an edge of the same
color emanating from it. So there is at least a $z$-stroke, and similarly, at least a
$w$-stroke.

\begin{remark}
	We developed a symbolic computation algorithm to determine the diagrams for the \(N\)-vortex problem \cite{YuZhuarxiv}, inspired by the work of Chang and Chen, who designed symbolic computation algorithms to implement the singular sequence method for the \(N\)-body problem in celestial mechanics \cite{2023Chen-1}. The core idea is to represent each two-colored diagram as an \(N \times 2N\) binary matrix. The rules from Section \ref{sec:rule} and the results from Section \ref{sec:moreproperty} are then translated into conditions on these binary matrices. Using this algorithm, we can efficiently filter out invalid diagrams. This approach is particularly useful when applying the singular sequence method to the finiteness problem for \(N \ge 6\), as it significantly reduces the required manual work.  For $N=5$,  the algorithm outputs 31 diagrams, matching the list shown in Figures  \ref{fig:list1} and \ref{fig:list2} in Section \ref{sec:diagram&constraints}.  However, we emphasize that all computations in this section were carried out by hand; the algorithm was not used to obtain any of these diagrams.
\end{remark}

\subsection{No two-colored vertex }

There is at least one isolated edge, which is not a $zw$-edge. Let us say it is a $z$-edge. The complement has 3 bodies. There three can have one or  three $w$-edges according to Rule VI.

For one $w$-edge, the attached bodies have to be $w$-circled by Rule I. This is the first diagram in Figure \ref{fig:C=0}.

For three $w$-edges, the three edges form a triangle. There are three possibilities for the number of  $w$-circled vertices: it is either zero, or two or three (one is not possible by Rule III.)  They constitute the last three diagrams in Figure \ref{fig:C=0}.

Hence, we have \emph{four} possible diagrams, as shown in Figure \ref{fig:C=0}. 

\begin{figure}[h!]
	
	\centering
\subfigure{	\includegraphics[width=0.2\textwidth]{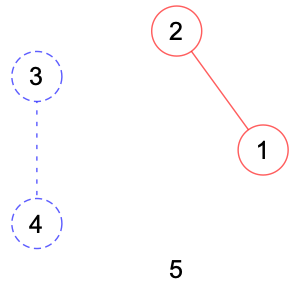}}\hspace{8pt}
	\subfigure{	\includegraphics[width=0.2\textwidth]{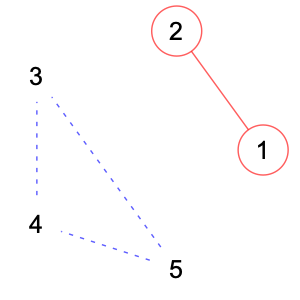}}\hspace{8pt}
	\subfigure{	\includegraphics[width=0.2\textwidth]{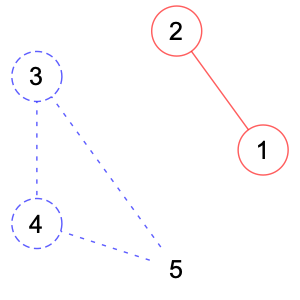}}\hspace{8pt}
	\subfigure{	\includegraphics[width=0.2\textwidth]{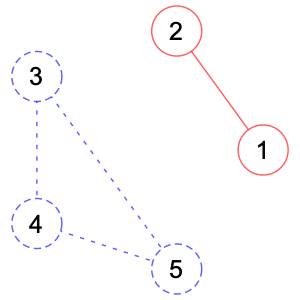}}
	
	\caption{Four possible diagram for no two-colored vertex. They correspond to Diagram  15 of Figure \ref{fig:list2},  Diagram   1 of Figure \ref{fig:list1}, Diagram  6 of Figure \ref{fig:list1}, and Diagram  16 of Figure \ref{fig:list2} respectively.    }
	\label{fig:C=0}
\end{figure}

\subsection{$C=2$}

There are two cases: a $zw$-edge exists or not.

If it is present,  it is isolated. Let us say, vertex $\textbf{1}$ and vertex $\textbf{2}$ are connected by one $zw$-edge.   Note that  there must be  both  $z$ and $w$-circle  among vertices $\textbf{3}$, $\textbf{4}$, and $\textbf{5}$. If none of the three vertices is $z$-circled, we have $\Gamma_1+\Gamma_2=0$ by Rule IV. On the other hand, since vertices $\textbf{3}$, $\textbf{4}$, and $\textbf{5}$ are not $z$-circled, they are not $z$-close to vertex $\textbf{1}$ and vertex $\textbf{2}$ . Then
Proposition \ref{Prp:sumT12} implies that $\Gamma_1+\Gamma_2\ne0$.  This is a contradiction.

Then Rule I implies that there is at least one $z$-stroke and one $w$-stroke among the cluster of vertices $\textbf{3}$, $\textbf{4}$, and $\textbf{5}$. There are two possibilities: whether there is another $zw$-edge or not.

If another $zw$-edge is present, then it is again isolated. This is the first diagram in Figure \ref{fig:C=2}. Note that $w_{15}\approx \epsilon^{-2}$, which  contradicts with Proposition \ref{Prp:dumbbell}, thus impossible.

If another $zw$-edge is not present, there is at least one edge in both color.
By the circling method, the adjacent vertex is $z$- and $w$-circled.  By the Estimate, the $z$-edge implies the two attached vertices are $w$-close. Then the two ends are both $w$-circled by Rule II. Thus, all three vertices are $z$- and $w$-circled.  Then there are  more strokes in the diagram. This is a contradiction.

If there is no $zw$-edge,  there are adjacent $z$-edges and $w$-edges. From any such adjacency there is no other edge.  Suppose that  vertex $\textbf{1}$ connects with vertex $\textbf{4}$ by $w$-edges and connects with $\textbf{2}$ by $z$-edges. The circling method implies that $\textbf{1}$ is $z$- and $w$-circled, 2 is $w$-circled and 4 is $z$-circled. The color of  $\textbf{2}$ and $\textbf{4}$ forces the color of edges from the circle. If one of the two new edges completes the triangle with vertices $\textbf{1, 2, 4}$, then Rule VI implies that $C>2$, a contradiction. If the two new edges
go to the same vertex, we get the diagram corresponding to Roberts' continuum at infinity, shown as the second  in Figure \ref{fig:C=2}.

If the two edges go to the different  vertices,  the circling method and Rule I demand a cycle with alternating colors, which is impossible since the cycle has  five edges.

Hence,  there is only \emph{one}  possible  diagram,  the second one in Figure \ref{fig:C=2}.
\begin{figure}[!h]
	
	\centering
		\centering
	\subfigure{	\includegraphics[width=0.2\textwidth]{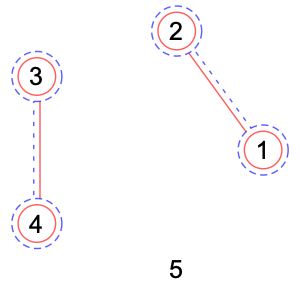}}\hspace{12pt}
	\subfigure{	\includegraphics[width=0.2\textwidth]{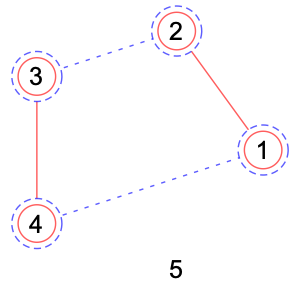}}
	
	\caption{Two diagrams for $C=2$. The first one has been excluded. The second one corresponds to Diagram  18 of Figure \ref{fig:list2}.   }
	\label{fig:C=2}
\end{figure}

\subsection{$C=3$}

Consider a two-colored vertex  with three strokes.  There are two cases: in the first, it is like  vertex $\textbf{1}$ in Figure \ref{fig:C=3};  in  the second,  it connects a single stroke to a $zw$-edge.

We start with the   first  case.  Let us say vertex $\textbf{1}$ connects with vertex $\textbf{2}$ and vertex $\textbf{3}$ by   $z$-edges, and  connects with  vertex $\textbf{4}$ by  a $w$-edge.  There is a $z_{23}$-stroke by Rule VI. The circling method implies that the vertices  $\textbf{1}$, $\textbf{2}$ and vertex $\textbf{3}$ are all $w$-circled, see Figure \ref{fig:C=3}.  Then there is $w$-stroke emanating from $\textbf{2}$ and vertex $\textbf{3}$.  The $w$-stroke may go to vertex $\textbf{4}$, vertex $\textbf{5}$, or it is a  $w_{23}$-stroke.

If  one $w$-stroke goes from vertex $\textbf{2}$ to vertex  $\textbf{4}$, then there is extra $w_{12}$-stroke by Rule VI, which contradict with $C=3$. If all two $w$-strokes go to vertex $\textbf{5}$, then Rule VI implies the existence of $w_{23}$-stroke. This is again a contradiction with $C=3$.

If the $w$-strokes emanating from $\textbf{2}$ and vertex $\textbf{3}$  are just the  
$w_{23}$-stroke,  then we have a $zw$-edge between $\textbf{2}$, and $\textbf{3}$. Then it is not necessary to discuss the  second case.   Then,  we consider the vertex $\textbf{5}$.  It is connected with the previous four vertices or isolated.

If the diagram is connected, vertex $\textbf{5}$   can only connects with vertex $\textbf{4}$  by a $z$-edge (other cases is not possible by Rule VI). Then the circling method implies that vertex $\textbf{5}$   is $w$-circled. Then there is $w$-stroke emanating from $\textbf{5}$. This is a contradiction.

If vertex $\textbf{5}$   is isolated. Then  the circling method implies that all vertices except vertex $\textbf{5}$   are $w$-circled. Only $\textbf{2}$ and vertex $\textbf{3}$ can be $z$-circled, and they are both $z$-circled or both not $z$-circled by Rule IV, see Figure \ref{fig:C=3}.

If both vertex $\textbf{2}$ and vertex $\textbf{3}$ are not $z$-circled, then we have $\Gamma_2+\Gamma_3=0$ and $L_{123}=0$ by Rule IV and Proposition \ref{Prp:LI}. This is a contradiction since
\[  L_{123}= \Gamma_1 (\Gamma_2+\Gamma_3)+ \Gamma_2\Gamma_3.  \]

If both vertex $\textbf{2}$ and vertex $\textbf{3}$ are $z$-circled, then Rule IV implies that
$\Gamma_2+\Gamma_3=0$. On the other hand, Corollary \ref{Cor:sumT12}  implies that
$\Gamma_2+\Gamma_3\ne0$. This is a contradiction.

Thus, the two   diagrams in Figure \ref{fig:C=3} are both excluded. There is  no possible  diagram. 

\begin{figure}[!h]
	
	\centering
	\includegraphics[width=0.4\textwidth]{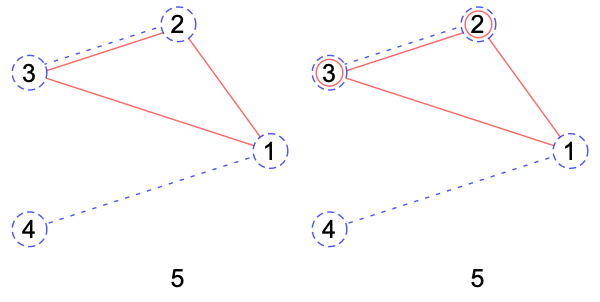}
	
	\caption{Two diagrams for $C=3$. Both have been excluded.}
	\label{fig:C=3}	
\end{figure}


\subsection{$C=4$}

There are five cases: two $zw$-edges,  only one $zw$-edge  and one  edge of each color,    only one $zw$-edge  and two edge of the same color, one  $z$-edge and three $w$-edges, or   two edges of each color emanating from the same vertex.

\subsubsection{} Suppose that there are two $zw$-edges emanating from, e.g.,  vertex  $\textbf{1}$ as on the first diagram in Figure \ref{fig:C=41}.

We get a fully $zw$-edged triangle by Rule VI. This triangle is isolated since
$C=4$.  Since  vertices $\textbf{1}$,  $\textbf{2}$, and $\textbf{3}$ are  $z$-close and $w$-close, if one of them is circled in some color, all of them will be circled in  the same color. Thus,  the first three vertices may be all $z$-circled,  all  $z$-and $w$-circled, or all not circled.

If  vertices $\textbf{1}$,  $\textbf{2}$, and $\textbf{3}$ are  $z$-circled but not $w$-circled, we have $\Gamma_1+\Gamma_2+\Gamma_3=0$ and $L_{123}=0$ by Rule IV and Proposition \ref{Prp:LI}. This is a contradiction since $(\Gamma_1+\Gamma_2+\Gamma_3)^2-2L_{123}\ne 0$. Then the first three vertices can only be all  $z$-and $w$-circled, or all not circled.

The other two vertices can be disconnected, connected by one , e.g.,  $z$-edge, or by one $zw$-edge. Then there are six possibilities, according to whether the first  three vertices are   all  $z$-and $w$-circled, or all not circled, and the connection between the other two vertices. 

Suppose that vertex $\textbf{4}$  and vertex $\textbf{5}$   are connected by one $zw$-edge, and the first three vertices are all not circled. 
Then we have $\Gamma_4+\Gamma_5=0$ by Rule IV. On the other hand, 
Corollary \ref{Cor:sumT12}  implies that $\Gamma_4+\Gamma_5\ne0$.  This is a contradiction.

Then we have the five possibilities, see Figure \ref{fig:C=41}.   Note that Proposition \ref{Prp:triangle}  can further exclude the second and fourth one.

\begin{figure}[!h]
	
	\centering
		\centering
	\subfigure{	\includegraphics[width=0.18\textwidth]{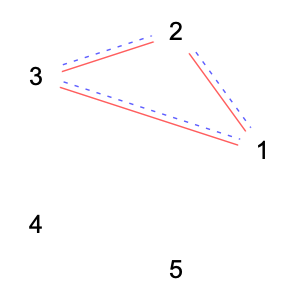}}\hspace{1pt}
	\subfigure{	\includegraphics[width=0.18\textwidth]{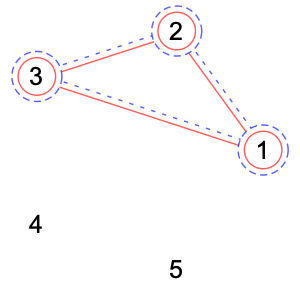}}\hspace{1pt}
	\subfigure{	\includegraphics[width=0.18\textwidth]{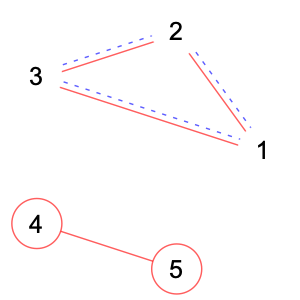}}\hspace{1pt}
	\subfigure{	\includegraphics[width=0.18\textwidth]{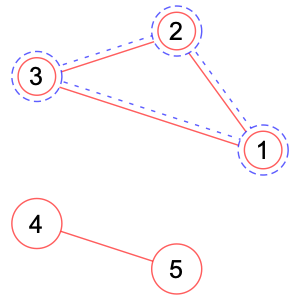}}\hspace{1pt}
	\subfigure{	\includegraphics[width=0.18\textwidth]{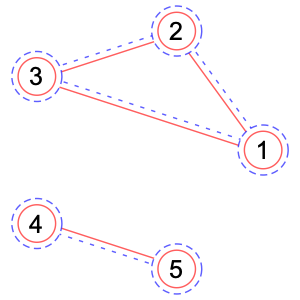}}

	\caption{Five diagrams for $C=4$, two $zw$-edges. The second and fourth one has been excluded. The first, third and fifth one correspond to Diagram  1 of Figure \ref{fig:list2},  Diagram  2 of Figure \ref{fig:list1}, and Diagram  20 of Figure \ref{fig:list2} respectively.    }
	\label{fig:C=41}
\end{figure}
Hence,  there are \emph{three}  possible  diagrams,  the first, third and fifth one in Figure \ref{fig:C=41}.

\subsubsection{}
Suppose that there is one $zw$-edge and one edge of each color emanating from vertex $\textbf{1}$, as in the first digram of Figure \ref{fig:C=42}. We complete the triangles  by Rule VI.  Note that no more strokes can emanating from vertex $\textbf{1}$ and vertex $\textbf{4}$  since $C=4$. If there are more strokes from vertex $\textbf{2}$,  it can not goes to vertex $\textbf{3}$, since it implies one more stroke emanating from vertex $\textbf{1}$.  Similarly, there is no $z_{35}$-stroke or $w_{25}$-stroke. Then between vertex $\textbf{5}$   and the first four vertices, there can have no edge, one
$z_{25}$-stroke, or one $z_{25}$-stroke and one $w_{35}$-stroke.

For the disconnected diagram,  vertex $\textbf{1}$ and vertex $\textbf{4}$  can not be circled. Otherwise, the circling method implies  either vertex $\textbf{2}$ is be $z$-circled or 3 is $w$-circled,   there should be stroke emanating from vertex $\textbf{2}$ or vertex $\textbf{3}$ . This is a contradiction.  Then vertex $\textbf{2}$ can not be circled, otherwise, vertex $\textbf{3}$  is of the same color by Rule IV. This is a contradiction. Hence, there is no circle in the diagram and this is the first diagram in Figure \ref{fig:C=42}.

If there is only $z_{25}$-stroke, then the circling method implies that  vertices $\textbf{1}$,  $\textbf{2}$, $\textbf{4}$, and $\textbf{5}$ are $z$-circled. Note that vertex $\textbf{3}$  and vertex $\textbf{5}$   can not be $w$-circled, otherwise there are extra $w$-strokes. Then  vertices $\textbf{1}$,  $\textbf{2}$, and $\textbf{4}$ are not $w$-circled
by the circling method.  Note that vertex $\textbf{3}$ must be $z$-circled, otherwise, we have $L_{124}=0$ and $\Gamma_1+\Gamma_4=0$. This is a contradiction.
Then we have the second diagram in Figure \ref{fig:C=42}.

If there are  $z_{25}$-stroke and  $w_{35}$-stroke. Then vertex $\textbf{5}$   is  $z$-and $w$-circled. By the circling method, all vertices are   $z$-and $w$-circled. This the third diagram in Figure  \ref{fig:C=42}.
\begin{figure}[!h]
	
	\centering
	\includegraphics[width=0.8\textwidth]{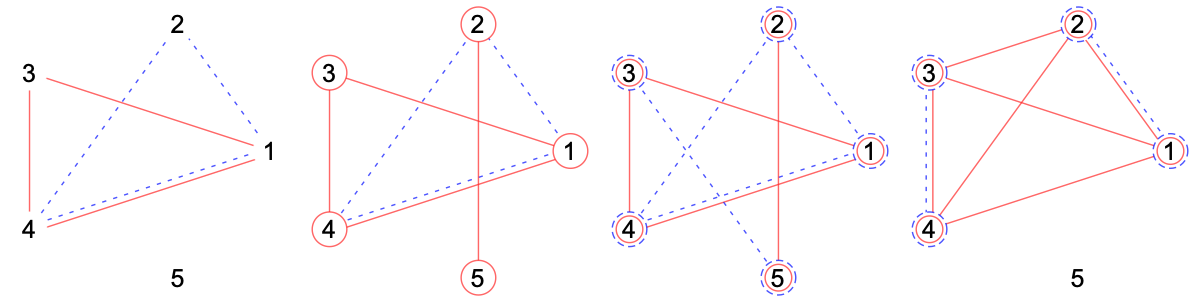}
	
	\caption{Four diagrams for $C=4$, one $zw$-edge. The  fourth one has been excluded. The first three correspond to Diagram 2 of Figure \ref{fig:list2},  Diagram  8 of Figure \ref{fig:list2},  and Diagram  19 of Figure \ref{fig:list2} respectively.    }
	\label{fig:C=42}
\end{figure}

Hence,  there are \emph{three}  possible  diagrams,  the first three  in Figure \ref{fig:C=42}.

\subsubsection{}

Suppose that there are one $zw$-edges and two $z$-edges  emanating from vertex $\textbf{1}$, as in the  fourth diagram  in  Figure \ref{fig:C=42}. Then there are  $z_{23}$-stroke, $z_{24}$-stroke  and $z_{34}$-stroke by Rule VI.  Note that vertex $\textbf{1}$ is $w$-circled, then the circling method implies all vertices except possibly vertex $\textbf{5}$   are $w$-circled. Then there is $w$-stroke emanating from $\textbf{3}$ and vertex $\textbf{4}$.  The $w$-stroke may go to vertex $\textbf{5}$,  or it is a  $w_{34}$-stroke.

If  all two  $w$-strokes go to vertex $\textbf{5}$, then Rule VI implies the existence of $w_{34}$-stroke, which contradicts with $C=4$.  Then the $w$-strokes from $\textbf{3}$ and vertex $\textbf{4}$ is $w_{34}$-stroke, and  vertex $\textbf{5}$   is disconnected.

Consider the circling the the diagram. We have three different cases: all vertices are not $z$-circled, only two of the first four vertices, say, vertex $\textbf{1}$ and vertex $\textbf{2}$ are $z$-circled, or all the first four vertices are $z$-circled.

If none of the first four vertices is $z$-circled, then  we have $\Gamma_{1234}=0$ and $L_{1234}=0$ by Rule IV and Proposition \ref{Prp:LI}.  This is a contradiction.

If only vertex $\textbf{1}$ and vertex $\textbf{2}$ are $z$-circled,  we have $\Gamma_1+\Gamma_2=0$ by Rule IV. On the other hand, Corollary \ref{Cor:sumT12}  implies that $\Gamma_1+\Gamma_2\ne0$.  This is a contradiction.

Then all the first four vertices are $z$-circled. This gives the   fourth diagram  in  Figure \ref{fig:C=42}. However, this contradicts with Proposition \ref{Prp:dumbbell},  so excluded. 

Hence,  there is no  possible  diagram, i.e.,  the fourth one  in Figure \ref{fig:C=42} is impossible.

\subsubsection{}
Suppose that from vertex $\textbf{1}$ there are three  $w$-edges go to vertex $\textbf{2}$,   vertex $\textbf{3}$  and vertex $\textbf{4}$  respectively and one $z$-edges goes to vertex $\textbf{5}$  .  There is a $w$-stroke between any pair of $\{ \textbf{1, 2, 3, 4} \}$ by Rule VI, and the four vertices are all $z$-circled.

Then there are $z$-strokes emanating from vertex $\textbf{2}$,   vertex $\textbf{3}$  and vertex $\textbf{4}$ .
None of them can go to vertex $\textbf{5}$   by Rule VI. Then there are $z_{23}$-, $z_{24}$- and $z_{34}$-strokes. This contradict with $C=4$. Hence, there is no possible diagram in this case. 

\begin{figure}[!h]
	
	\centering
	\includegraphics[width=0.8\textwidth]{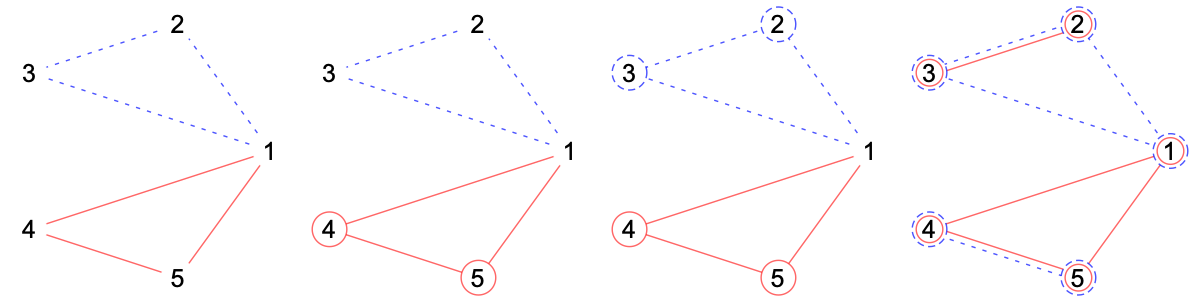}

	\caption{Four diagrams for $C=4$, no $zw$-edges. They correspond to Diagram  3 of Figure \ref{fig:list2},  Diagram  3 of Figure \ref{fig:list1}, Diagram  17 of Figure \ref{fig:list2}, and Diagram  9 of Figure \ref{fig:list1} respectively.    }
	\label{fig:C=43}
\end{figure}

\subsubsection{}
Suppose that there are two  $w$-edges and two $z$-edges  emanating from vertex $\textbf{1}$, with numeration as in the first diagram of Figure \ref{fig:C=43}.
By Rule VI, there is $w_{23}$- and $z_{45}$-stroke. Thus, we have two attached triangles. By Rule VI, if there is more stroke, it must be $z_{23}$- and/or $w_{45}$-stroke. \\

\emph{Case I}: If there is no more stroke,  then vertex $\textbf{1}$ can not be circled, otherwise, the circling method would lead to contradiction.
Rule IV implies that vertex $\textbf{2}$ and vertex $\textbf{3}$  can only be both or both not  $w$-circled, and
vertex $\textbf{4}$  and vertex $\textbf{5}$   can only be  both or both not  $z$-circled.   Then we have the  first three  diagrams in Figure \ref{fig:C=43}. 

\emph{Case II}:  If there is  only $z_{23}$-stroke, the circling method implies vertex $\textbf{1}$, vertex $\textbf{2}$ and vertex $\textbf{3}$  are $z$-circled. Note that only vertex $\textbf{2}$ and vertex $\textbf{3}$  can be $w$-circled. There are two possibilities: whether vertex $\textbf{2}$ and vertex $\textbf{3}$  are both  $w$-circled or both  not $w$-circled.

If vertex $\textbf{2}$ and vertex $\textbf{3}$  are both  $w$-circled, we  have $\Gamma_2+\Gamma_3=0$ by Rule IV. On the other hand, since  vertices $\textbf{1}$,  $\textbf{4}$, and $\textbf{5}$ are not $w$-circled, they are not $w$-close to vertex $\textbf{2}$ and vertex $\textbf{3}$ . Then
Proposition \ref{Prp:sumT12} implies that $\Gamma_2+\Gamma_3\ne0$.  This is a contradiction.

If vertex $\textbf{2}$ and vertex $\textbf{3}$  are both not  $w$-circled, we  have $\Gamma_2+\Gamma_3=0$  and $L_{123}=0$ by Rule IV and Proposition \ref{Prp:LI}. This is a contradiction. Hence there is no possible diagram in this case.

\emph{Case III}: If there are  $z_{23}$- and $w_{45}$-strokes, the circling method implies vertex $\textbf{2}$ and vertex $\textbf{3}$  are $z$-circled, vertex $\textbf{4}$  and vertex $\textbf{5}$   are $w$-circled, and vertex $\textbf{1}$ is  $z$- and $w$-circled. 
By Rule I,  vertex $\textbf{2}$ and vertex $\textbf{3}$ are both $z$-circled, and vertex $\textbf{4}$ and vertex $\textbf{5}$ are both $w$-circled.  This is  the last diagram  in Figure \ref{fig:C=43}. 

Hence, there are  \emph{four} possible diagrams in this case, as shown in Figure \ref{fig:C=43}. 





 In summary,  among the thirteen diagrams in Figure \ref{fig:C=41}, \ref{fig:C=42} and \ref{fig:C=43}, we have excluded the first and third one in Figure \ref{fig:C=41} and the fourth one in Figure \ref{fig:C=42}.   We have \emph{10} possible  diagrams.

\subsection{$C=5$}

There are three  cases: two $zw$-edges,  one $zw$-edge with one $z$-edge and two  $w$-edges,    one $zw$-edge with three $z$-edges.

\subsubsection{}
Suppose that there are two $zw$-edges and one $z$-edges  emanating from vertex $\textbf{1}$, with numeration as in the first digram of Figure \ref{fig:C=51}.  Rule VI implies the  existence of $z_{23}$-,  $w_{23}$-, $z_{24}$- and $z_{34}$-strokes. Then vertex $\textbf{5}$   can be either disconnected  or connects with vertex $\textbf{4}$  by one  $w$-edge, otherwise, it would contradict with $C=5$.

For the disconnected diagram, note that any of the connected four vertices can not be $w$-circled. Otherwise, the circling method implies vertex $\textbf{4}$  is $w$-circled, which is a contradiction.  There are three cases: none of the vertices are circled,  all four vertices except vertex $\textbf{4}$  are $z$-circled, or all four vertices are $z$-circled.

If none of the vertices are circled,  by Proposition \ref{Prp:LI} we have   $L_{123}=0$ since  vertices $\textbf{1}$,  $\textbf{2}$, and $\textbf{3}$ form a triangle with no $w$-circled attached. Similarly, we have $L_{1234}=0$ by Proposition \ref{Prp:LI}. This is a contradiction since
\begin{equation*}
L_{1234}=L_{123}+\Gamma_4(\Gamma_1+\Gamma_2+\Gamma_3), \  (\Gamma_1+\Gamma_2+\Gamma_3)^2-2L_{123}\ne 0.
\end{equation*}

If only  vertices $\textbf{1}$,  $\textbf{2}$, and $\textbf{3}$ are $z$-circled, we also have  $L_{123}=0$. By Rule IV, we have $\Gamma_1+\Gamma_2+\Gamma_3=0$. This is a contradiction.

Then we have only one possible diagram for the disconnected diagram, and it is the first in Figure \ref{fig:C=51}.

For the connected diagram with $w_{45}$-edge, the circling method implies that all vertices are $w$-circled. Note that vertex $\textbf{4}$  and vertex $\textbf{5}$   can not be $z$-circled. Then we have two cases: all the five vertices are not $z$-circled, or  vertices $\textbf{1}$,  $\textbf{2}$, and $\textbf{3}$ are $z$-circled.

If all the five vertices are not $z$-circled, we have $\Gamma_1+\Gamma_2+\Gamma_3=0$ by Rule IV and $L_{1234}=0$ by Proposition \ref{Prp:LI}. This is a contradiction. 

Then, we have only one possible diagram for the connected diagram, and it is the second in Figure \ref{fig:C=51}.

\begin{figure}[!h]
	
	\centering
	\includegraphics[width=0.4\textwidth]{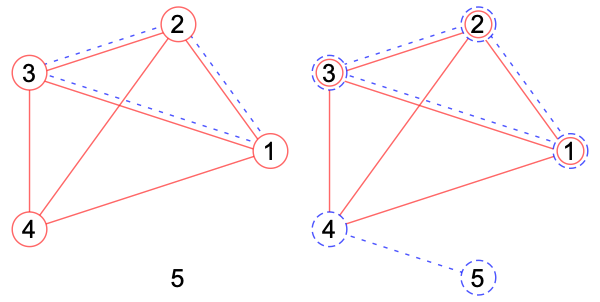}

	\caption{Two diagrams for $C=5$, two $zw$-edges. They correspond to Diagram  5 of Figure \ref{fig:list1},  and Diagram  7 of Figure \ref{fig:list1} respectively.    }
	\label{fig:C=51}
\end{figure}

Hence, we have \emph{two} possible diagrams, shown in Figure \ref{fig:C=51}.

\subsubsection{}

Suppose that there are one $zw$-edge, one $z$-edge, and two  $w$-edges emanating from vertex $\textbf{1}$, with numeration as in the first diagram of Figure \ref{fig:C=52}.  Rule VI implies that  vertices $\textbf{1}$,  $\textbf{2}$, $\textbf{3}$, and $\textbf{4}$ are fully $w$-stroked, and that
vertices $\textbf{1}$,  $\textbf{3}$, and $\textbf{5}$ are fully $z$-stroked. There is no more stroke emanating from  vertices $\textbf{1}$,  $\textbf{3}$, and $\textbf{5}$, since that would contradict with $C=5$. There are two cases, $z_{24}$-stroke is present or not.

If it is not present,  vertex $\textbf{2}$ and vertex $\textbf{4}$  can only be $w$-circled, and vertex $\textbf{5}$   can only be $z$-circled. Then vertex $\textbf{1}$ and vertex $\textbf{3}$   can not be circled, otherwise, the circling method would lead to contradiction.  Then we have the first two diagrams in Figure \ref{fig:C=52} by Rule IV.

If $z_{24}$-stroke is present, then  vertices $\textbf{1}$,  $\textbf{2}$, $\textbf{3}$, and $\textbf{4}$ are all $z$-circled. Vertex $\textbf{1}$ and vertex $\textbf{3}$  can not be $w$-circled, which would lead to vertex $\textbf{5}$   also $w$-circled and a contradiction. Vertex $\textbf{2}$ and vertex $\textbf{4}$  are $w$-close, so they are both $w$-circled or both not $w$-circled. If they are both $w$-circled,  then Rule IV implies that
$\Gamma_2+\Gamma_4=0$. On the other hand, Proposition \ref{Prp:sumT12} implies that
$\Gamma_2+\Gamma_4\ne0$. This is a contradiction.

Then according to whether vertex $\textbf{5}$   is $z$-circled or not, we have two cases, which are the last two diagrams in
Figure  \ref{fig:C=52}.

The third diagram in Figure  \ref{fig:C=52} is impossible.  We would have $L_{1234}=0$ and $\sum_{j=1}^4\Gamma_j=0$ by Proposition \ref{Prp:LI} and Rule IV.  Hence, we have only one  possible diagram if  $z_{24}$-stroke is present, and it is the last  diagram in Figure \ref{fig:C=52}.

\begin{figure}[h!]
	
	\centering
	\includegraphics[width=0.8\textwidth]{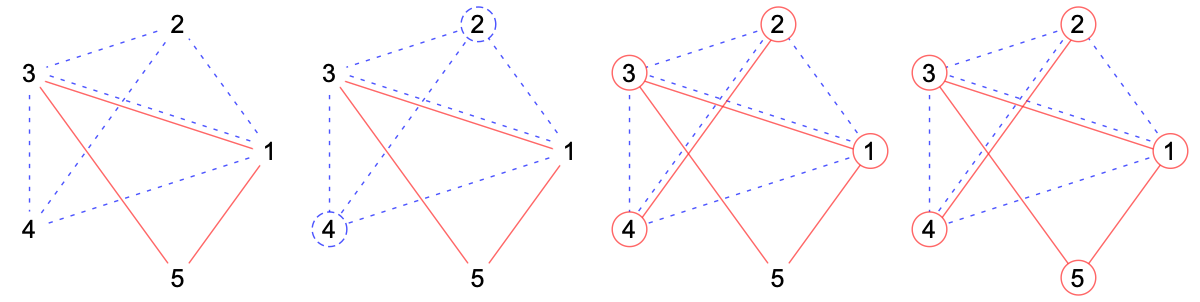}
	
	\caption{Four diagrams for $C=5$, one $zw$-edges. The  third one has been excluded.The first, second and the fourth one  correspond to Diagram  4 of Figure \ref{fig:list2},  Diagram  4 of Figure \ref{fig:list1},  and Diagram  12 of Figure \ref{fig:list2} respectively.    }
	\label{fig:C=52}
\end{figure}

Hence, we have \emph{three} possible diagrams, the first two and the last one in  Figure \ref{fig:C=52}. 

\subsubsection{}

Suppose that there is one $zw$-edge and three  $w$-edges emanating from vertex $\textbf{1}$ .  Let us say, the $zw$-edge goes to vertex $\textbf{2}$ and the other edges go to the other three vertices.  Rule VI implies that  vertices $\textbf{1, 2, 3, 4, 5}$ are fully $w$-stroked. The circling method implies that all vertices are $z$-circled. Then there are $z$-strokes emanating from vertices $\textbf{3}$, $\textbf{4}$, and $\textbf{5}$. Since $C=5$ at vertex $\textbf{1}$ and vertex $\textbf{2}$,  the $z$-strokes from vertices $\textbf{3}$, $\textbf{4}$, and $\textbf{5}$ must go to vertices $\textbf{3}$, $\textbf{4}$, and $\textbf{5}$. By Rule VI, they form a triangle of $z$-strokes. This is a contradiction with $C=5$.  Hence, there is no possible diagram in this case.

In summary,  among the six  diagrams in Figure \ref{fig:C=51} and \ref{fig:C=52}, we have excluded the third one in Figure \ref{fig:C=52}.   We have \emph{five} possible  diagrams.

\subsection{$C=6$}

There are three  cases: three  $zw$-edges,  two $zw$-edge with one edge in each color,   two $zw$-edge with two $z$-edge.

\begin{figure}[!h]

	\centering
	\includegraphics[width=0.4\textwidth]{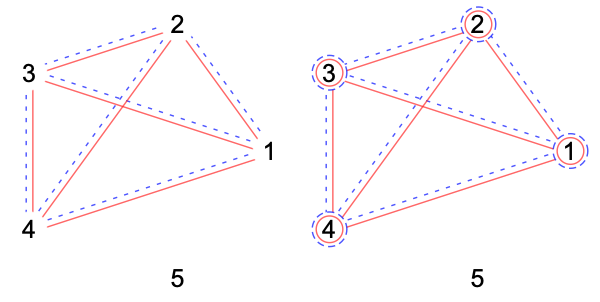}

	\caption{Two diagrams for $C=6$, three $zw$-edges. The  second one has been excluded. The first one corresponds to Diagram 10 of Figure \ref{fig:list2}. }
	\label{fig:C=61}
\end{figure}

\subsubsection{}
Suppose that there are three $zw$-edges  emanating from vertex $\textbf{1}$, with numeration as in the first digram of Figure \ref{fig:C=61}.  Rule VI implies that the  vertices $\textbf{1}$,  $\textbf{2}$, $\textbf{3}$, and $\textbf{4}$ are fully $zw$-edged. Then vertex $\textbf{5}$   must be disconnected  since $C=6$. The first four vertices can be circled in the same way by the circling method.

If all the first four vertices are $z$-circled but not $w$-circled, then we would have
$L_{1234}=0$ and $\sum_{j=1}^4\Gamma_j=0$ by Proposition \ref{Prp:LI} and Rule IV.  This is one contradiction.

Then we have two possible diagrams, as shown in Figure \ref{fig:C=61}. However, the second one is impossible by Proposition \ref{Prp:quadrilateral}.

Hence, there is only \emph{one}  possible diagram if   there are three $zw$-edges, the first one   in Figure \ref{fig:C=61}.

\subsubsection{}
Suppose that there are two $zw$-edges and two $z$-edges emanating from vertex $\textbf{1}$, with numeration as in the first diagram of Figure \ref{fig:C=62}.  Rule VI implies the existence of $z_{25}$-, $z_{35}$-,  $z_{24}$-, $z_{45}$- and $z_{34}$-strokes,   and  that the  vertices $\textbf{1}$,  $\textbf{2}$, and $\textbf{3}$ are fully $zw$-edged. If there is more stroke, it must be $w_{45}$-stroke since $C=6$.

If there is no more stroke, then the vertices can only be $z$-circled. There are two cases, either  vertices $\textbf{1}$,  $\textbf{2}$, and $\textbf{3}$ are $z$-circled or not.

If  vertices $\textbf{1}$,  $\textbf{2}$, and $\textbf{3}$ are $z$-circled, then either vertex $\textbf{4}$  or vertex $\textbf{5}$   must also be $z$-circled. Otherwise, we have $\Gamma_1+\Gamma_2+\Gamma_3=0$ and $L_{123}=0$ by  Proposition \ref{Prp:LI} and Rule IV.  This is one contradiction. Then we have the first two diagrams in Figure \ref{fig:C=62}.

If  vertices $\textbf{1}$,  $\textbf{2}$, and $\textbf{3}$ are not  $z$-circled, then vertex $\textbf{4}$  and vertex $\textbf{5}$   are both $z$-circled or both not $z$-circled. Then we have the second two  diagrams in Figure \ref{fig:C=62}.

If $w_{45}$-stroke is present, then the circling method implies that all vertices are $w$-circled. Rule IV implies that $\Gamma_{123}=0$. If at least one, hence all three vertices $\textbf{1}$,  $\textbf{2}$, and $\textbf{3}$ are  $z$-circled. 
Then there are two cases, according to whether vertices $\textbf{4}$ and $\textbf{5}$ are  $z$-circled.  Hence,  we have the last  two  diagrams in Figure \ref{fig:C=62}.  If none of vertices $\textbf{1}$,  $\textbf{2}$, and $\textbf{3}$ are  $z$-circled,  there are also  two cases, depending on whether vertices $\textbf{4}$ and $\textbf{5}$ are  $z$-circled.  If  the two vertices are both $z$-circled, Rule IV implies that $\Gamma_{45}=0$, which contradicts with  $\Gamma_{45}\ne 0$ by Proposition \ref{Prp:sumT12}.  If  the two vertices are both not $z$-circled, Proposition \ref{Prp:LI} yields $L_{12345}=0$. But Rule IV implies that $\Gamma_{12345}=0$, a contradiction.

\begin{figure}[!h]

	\centering
	\includegraphics[width=0.6\textwidth]{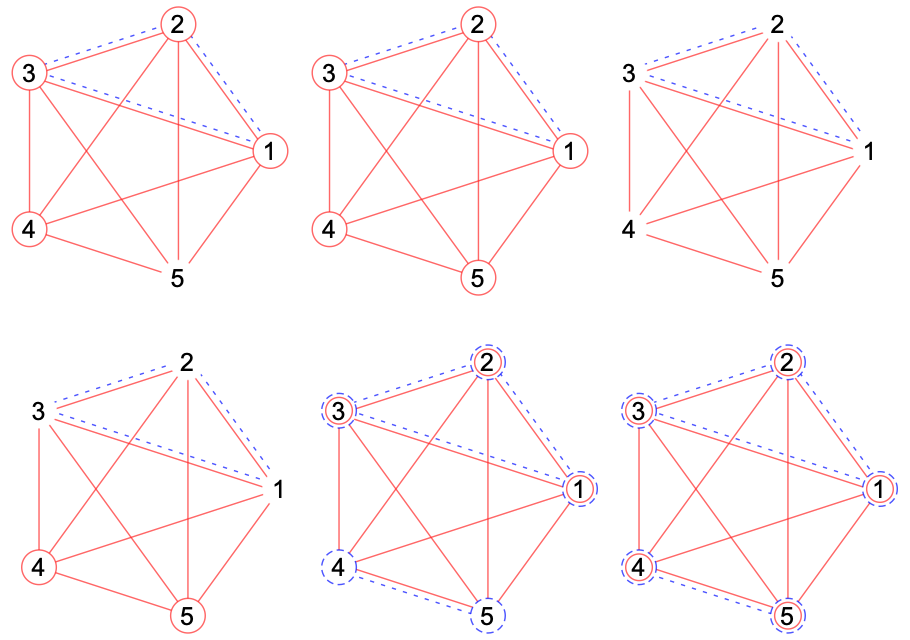}

	\caption{Six diagrams for $C=6$, two  $zw$-edges, case 1. They correspond to Diagram  7 of Figure \ref{fig:list2},  Diagram  9 of Figure \ref{fig:list2}, Diagram  5 of Figure \ref{fig:list2},  Diagram  6 of Figure \ref{fig:list2},  Diagram  8 of Figure \ref{fig:list1}, and Diagram  21 of Figure \ref{fig:list2} respectively.    } 
	\label{fig:C=62}
\end{figure}

Hence, there are \emph{six} possible diagrams in this case, as shown in Figure \ref{fig:C=62}.

\subsubsection{}
Suppose that there are two $zw$-edges and one edge of each color  emanating from vertex $\textbf{1}$, with numeration as in the first diagram of Figure \ref{fig:C=63}.  Rule VI implies the existence of $w_{25}$-, $w_{35}$-,  $z_{24}$-,  and $z_{34}$-strokes,   and  that the  vertices $\textbf{1}$,  $\textbf{2}$, and $\textbf{3}$ are fully $zw$-edged. If there is more stroke, it must be $w_{45}$-stroke, but it would lead to extra stroke emanating from vertex $\textbf{1}$, which contradicts with $C=6$.

Vertex $\textbf{5}$  can only be $w$-circled, and vertex $\textbf{4}$  can only be $z$-circled. Then  vertices $\textbf{1}$,  2 and  3  can not be circled, otherwise, the circling method would lead to contradiction.  Then we have the  diagram in Figure \ref{fig:C=63} by Rule IV. 

\begin{figure}[h!]

	\centering
	\includegraphics[width=0.2\textwidth]{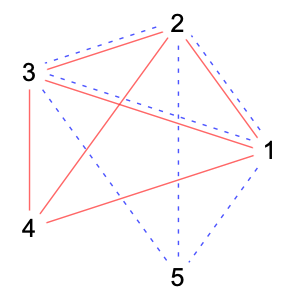}

	\caption{One diagram for $C=6$, two  $zw$-edges, case 2. It corresponds to Diagram 11 of Figure \ref{fig:list2}. }
	\label{fig:C=63}
\end{figure}

In summary,   among the nine diagrams of  Figure \ref{fig:C=61}, \ref{fig:C=62},   and \ref{fig:C=63}, we exclude the second one of 
	Figure \ref{fig:C=61}. That is, we have \emph{eight} possible diagrams.

\subsection{$C=7$}
Suppose that there are three $zw$-edges and one $z$-edge  emanating from vertex $\textbf{1}$, with numeration as in the  diagram of Figure \ref{fig:C=7}.  Rule VI implies that the  vertices $\textbf{1}$,  $\textbf{2}$, $\textbf{3}$, and $\textbf{4}$ are fully $zw$-edged and that  vertex $\textbf{5}$   connects with the first four vertices by one $z$-edge. There is no more stroke since $C=7$. If any vertex is $w$-circled, all are $w$-circled, which would lead to $w$-stroke emanating from vertex $\textbf{5}$   . This is a contradiction.  There are two cases, either vertex $\textbf{1}$ is $z$-circled or not.

If vertex $\textbf{1}$ is not $z$-circled, then none of the vertices is $z$-colored by the circling method. Then $L_{1234}=0$ and $L=0$ by  Proposition \ref{Prp:LI}, a contradiction.

If vertex $\textbf{1}$ is $z$-circled,  then  vertices $\textbf{1}$,  $\textbf{2}$, $\textbf{3}$, and $\textbf{4}$ are all $z$-circled. In this case,  vertex $\textbf{5}$   must be  $z$-circled. Otherwise, 
we have  $L_{1234}=0$ and $\sum_{j=1}^4\Gamma_j=0$ by  Proposition \ref{Prp:LI} and Rule IV,  a contradiction.

Hence, we only have \emph{one} diagram in the case of $C=7$, as in Figure \ref{fig:C=7}.

\begin{figure}[!h]

	\centering
	\includegraphics[width=0.2\textwidth]{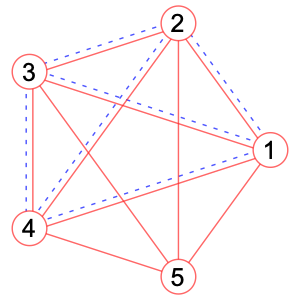}

	\caption{One diagram for $C=7$. It corresponds to Diagram 13 of Figure \ref{fig:list2}.}
	\label{fig:C=7}
\end{figure}


\subsection{$C=8$}
Suppose that there are four $zw$-edges   emanating from vertex $\textbf{1}$ . Rule VI implies that the  vertices $\textbf{1, 2, 3, 4, 5}$ are fully $zw$-edged. Since all vertices are both $z$-close and $w$-close, the vertices are    all $z$-circled (respectively, $w$-circled) or all not $z$-circled (respectively, $w$-circled). 
If they are all just $z$-circled but not $w$-circled, then we have 
\[  \Gamma=0, \ \  L=0,  \]
by Rule IV and Proposition \ref{Prp:LI}, a contradiction.

\begin{figure}[!h]

	\centering
	\includegraphics[width=0.4\textwidth]{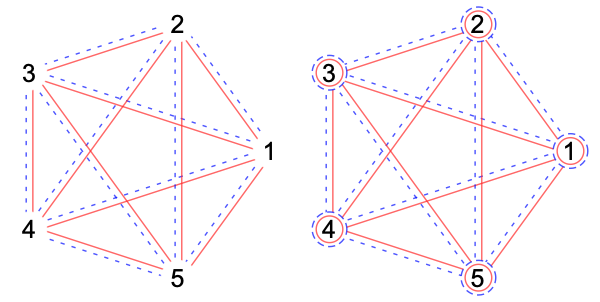}

	\caption{Two diagrams for $C=8$. They correspond to Diagram 14 and 22 of Figure \ref{fig:list2}. }
	\label{fig:C=8}
\end{figure}

Hence, we  have \emph{two}  diagrams in the case of $C=8$, as in Figure \ref{fig:C=8}. \\

\emph{Summary}: In the searching for all problematic 5-vortex  two-colored diagrams,  we have found 39 of them, as shown in Figure \ref{fig:C=0}-\ref{fig:C=8}. Among them, we have  excluded eight of them, i.e., the first diagram in Figure \ref{fig:C=2}, the two diagrams in Figure \ref{fig:C=3}, the second and fourth diagram in Figure \ref{fig:C=41}, the fourth diagram  in Figure \ref{fig:C=42}, the third diagram in Figure \ref{fig:C=52}, and
the second diagram in Figure \ref{fig:C=61}. Hence, we conclude that  any singular sequence should converge to one of the remaining  31 diagrams.

\section{Further diagram exclusion and the vorticity constraints }
\label{sec:diagram&constraints}

\indent\par 
   The 31 diagrams found in Section \ref{sec:matrixrules} are collected into two classes, \emph{9 diagrams}
\emph{ that can be excluded} by  further arguments, see Figure \ref{fig:list1} , and 
\emph{ 22 diagrams that we can not exclude},  see Figure  \ref{fig:list2}.

	We would like to point out that the diagrams in the two lists differ in appearance from those in Section \ref{sec:matrixrules}. In Figure \ref{fig:list1} and \ref{fig:list2}, the diagrams are ordered by the number of circles, whereas in Section \ref{sec:matrixrules}, they are ordered by the maximal number of strokes from a two-colored vertex. Additionally, there are differences in vertex labeling and in the switching of $z$- and $w$-diagrams. These differences are not mistakes. Each diagram in Section \ref{sec:matrixrules} represents an equivalence class under vertex permutations and color switching.  It is therefore valid to present different representatives of the same equivalence classes, and we have intentionally done so to emphasize this fact, as it is often overlooked by readers.

For the  22 diagrams in Figure \ref{fig:list2}, 
we list the vorticity constraints.  Most of them are straightforward from the results
in Section \ref{sec:moreproperty},  while some  requires  additional work.  
	We will use notations such as \(\Gamma_J, \Gamma_{j_1, \ldots, j_n}, L_J,\) and \(L_{j_1, \ldots, j_n}\) below. Please refer to Definition \ref{def:LI} for their meanings.

\subsection{The 9 impossible diagrams}

The 9 impossible diagrams are presented in Figure \ref{fig:list1}. We exclude them in the following. 

\begin{figure}[!h]

	\centering
	\includegraphics[width=\textwidth]{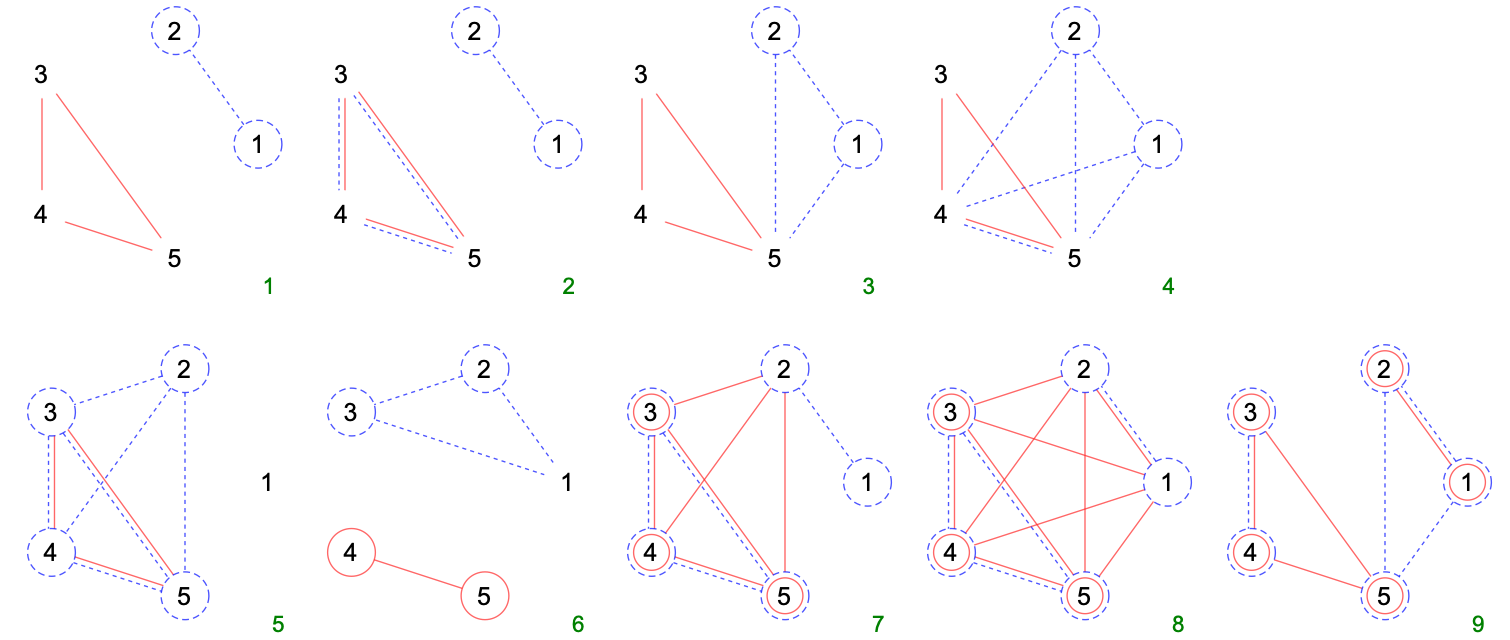}

	\caption{The  9 diagrams that can be excluded. They correspond to the second diagram of Figure \ref{fig:C=0}, the  third diagram of Figure \ref{fig:C=41},   the  second diagram of Figure \ref{fig:C=43},   the  second diagram of Figure \ref{fig:C=52},  the  first diagram of Figure \ref{fig:C=51},  the  third diagram of Figure \ref{fig:C=0},   the  second diagram of Figure \ref{fig:C=51},  the  fifth diagram of Figure \ref{fig:C=62}, and the  fourth diagram of Figure \ref{fig:C=43} respectively.       }
	\label{fig:list1}
\end{figure}

\begin{enumerate}
	
	\item  Diagram 1:  Consider the isolated component
	$\{3,4,5\}$of the $z$-digram, we see $L_{345}=0$; Consider the
	isolated component $\{1,2\}$  of the $w$-diagram. By Proposition \ref{Prp:isolate-z_12}, there are two cases:
	\begin{enumerate}
		\item $\Lambda=\pm1$, then $\Gamma_{3}+\Gamma_{4}+\Gamma_{5}=0$, which
		contradicts with $L_{345}=0;$
		\item $\Lambda=\pm \mathbf{i}$, then $L=0,$and $\Gamma_{1}\Gamma_{2}=L_{345}$.
		Since $L_{345}=0,$ we obtain $\Gamma_{1}\Gamma_{2}=0,$ which is
		impossible. 
	\end{enumerate}
	
	\item Diagram 2 can be excluded by the same argument as  that of
	Diagram 1.  
	
	\item  Diagram 3:  Consider the isolated component
	$\{3,4,5\}$of the $z$-diagram, we see $L_{345}=0$; Consider the
	isolated component $\{1,2, 3\}$of the $w$-diagram. By Proposition \ref{Prp:isolate-z_123}, there are two cases:
	\begin{enumerate}
		\item $\Lambda=\pm1$, then $\Gamma_{3}+\Gamma_{4}=0$, which contradicts
		with $L_{345}=0$ since $L_{345}=\Gamma_{3}\Gamma_{4}+\Gamma_{5}(\Gamma_{3}+\Gamma_{4})$.
		\item $\Lambda=\pm \mathbf{i}$, then $L=0,$and $L_{125}=L_{345}$. Since $L_{345}=0,$
		we obtain $\Gamma_{3}\Gamma_{4}=-\Gamma_{5}(\Gamma_{3}+\Gamma_{4})$;
		similarly, it also holds $\Gamma_{1}\Gamma_{2}=-\Gamma_{5}(\Gamma_{1}+\Gamma_{2})$.
		Note that 
		\[
		0=L=L_{125}+(\Gamma_{3}+\Gamma_{4})(\Gamma_{1}+\Gamma_{2}+\Gamma_{5})+\Gamma_{3}\Gamma_{4}=(\Gamma_{3}+\Gamma_{4})(\Gamma_{1}+\Gamma_{2}),
		\]
		which implies that $\Gamma_{1}\Gamma_{2}\Gamma_{3}\Gamma_{4}=0,$
		which
		is impossible. 
	\end{enumerate}
	
	\item Diagram 4:  	We will show that $z_1\succ \epsilon^2$, then it contradicts with Corollary \ref{Cor:sumT12} and this diagram is thus excluded.  
	
	Note that $w_{1j}, w_{2j}\approx  \epsilon^{-2}, j\ge3$. Then 
	\[\Lambda\sum_{j\ge 3} \Gamma_j z_j=\sum_{j\ge 3}\frac{\Gamma_1\Gamma_j}{w_{1j}}+\sum_{j\ge3}\frac{\Gamma_2\Gamma_j}{w_{2j}}\preceq \epsilon^{2}.\]
	Then the equation $\sum_{j=3}^{5}\Gamma_j z_j=\Gamma_{345} z_3+(\Gamma_{4}+\Gamma_{5})z_{34}+\Gamma_{5}z_{45}$ and the fact that $ z_{34}\succ \epsilon^{2},  z_{45}\approx \epsilon^{2}$ implies 
	\[  \Gamma_{345} z_3+(\Gamma_{4}+\Gamma_{5})z_{34} \preceq \epsilon^2. \]
	By Proposition \ref{Prp:LI}, it follows that $L_{345}=0$ and then $\Gamma_{345}\ne 0,~\Gamma_{4}+\Gamma_{5}\ne 0$. Then it holds that $z_3\approx z_{34}\succ \epsilon^2.$
	
	Finally, the equation $\sum_{j}\Gamma_j z_j=0$ and the fact that $ z_{jk}\approx \epsilon^2,j\ne k, j,k\in\{1,2,4,5\}$ lead to 
	\[  \Gamma_3 z_3 + \Gamma_{1245}z_1\preceq \epsilon^2. \]
	Hence, it holds that $z_1 \approx z_3 \succ \epsilon^2$. 
	
	\item Diagram 5: By Proposition \ref{Prp:LI}, it follows that $L_{345}=0$ and $\Gamma_{345}\ne 0,~\Gamma_{3}+\Gamma_{4}\ne 0$. 
	
	We first claim that \emph{the vertices \textbf{2} and \textbf{3} are not $w$-close, or,  $w_{23}\approx \epsilon^{-2}$}, whose proof will be given later.  Since $\frac{1}{w_{1j}} \approx \epsilon^{2},  j\ge2$, we have
	$\Lambda z_1=\sum_{j=2}^{5}\frac{\Gamma_j}{w_{1j}}\preceq \epsilon^{2}. $
	Note that $z_{12}\succ \epsilon^{2}$, so $z_2 \succ \epsilon^2$. Then it is easy to see
	$ z_2\sim z_3\sim z_4\sim z_5\succ \epsilon^{2} \succeq z_1.$
	By the claim, $\frac{1}{w_{2j}} \approx  \epsilon^{2},  j\ge3$, so 
	\[\Lambda\sum_{j=3}^{5}\Gamma_j z_j=\sum_{j=3}^{5}\frac{\Gamma_1\Gamma_j}{w_{1j}}+\sum_{j=3}^{5}\frac{\Gamma_2\Gamma_j}{w_{2j}}\preceq \epsilon^{2}. \]
	By the equation  $\sum_{j=3}^{5}\Gamma_j z_j=\sum_{j=3}^{5}\Gamma_j z_3 +\Gamma_4z_{34} +\Gamma_5 z_{35}$, we conclude that 
	$\sum_{j=3}^{5}\Gamma_j=0$, 
	which is a contradiction.

	We now prove the claim  by contradiction. 
	Suppose that   \textbf{2} and \textbf{3} are $w$-close. Rule IV implies that   $\sum_{j=2}^{5}\Gamma_j=0.$ Note that $\frac{1}{w_{1j}}- \frac{1}{w_{12}} \prec  \epsilon^{2},  j\ge3$, so
	\[\Lambda\sum_{j=2}^{5}\Gamma_j z_j=\sum_{j=2}^{5}\frac{\Gamma_1\Gamma_j}{w_{1j}}=\Gamma_1 \sum_{j=2}^{5} \Gamma_j  (\frac{1}{w_{12}}+\frac{1}{w_{1j}}- \frac{1}{w_{12}} )   \prec \epsilon^{2}.\]
	Then  \[  \epsilon^2 \succ \sum_{j=2}^{5}\Gamma_j z_j=\sum_{j=3, 4, 5}\Gamma_j z_{2j}\Rightarrow \Gamma_3 z_{23}\sim -\Gamma_4 z_{24}-\Gamma_5 z_{25}\approx \epsilon^{2}.  \]
	Note that the scenario is now  the same as that described in Proposition \ref{Prp:quadrilateral}, thus impossible.

	\item  Diagram 6: By Proposition  \ref{Prp:isolate-z_123}, it holds
	that $\Gamma_{4}+\Gamma_{5}\ne0$. There are two cases:
	\begin{enumerate}
		\item $\Lambda=\pm1$. Consider the isolated component $\{2,3\}$
		of the $w$-diagram. Then  $\Gamma_{4}+\Gamma_{5}=0$, which is a contradiction. 
		\item $\Lambda=\pm \mathbf{i}$. Then $L=0.$ Consider the isolated component
		$\{4,5\}$of the $z$-diagram. Then $\Gamma_{4}\Gamma_{5}=L_{123}$. 
		On the other hand,  consider the isolated component $\{2,3\}$ of the $w$-diagram. Then 
		$L_{145}=L_{123}$. Thus,  it holds 
		\[
		\Gamma_{4}\Gamma_{5}=L_{145}=\Gamma_{4}\Gamma_{5}+\Gamma_{1}(\Gamma_{4}+\Gamma_{5}),
		\]
		which is a contradiction. 
	\end{enumerate}

	\item Diagram 7:  It is easy to see that  $w_2\sim w_3\sim w_4\sim w_5\approx  \epsilon^{-2}, \Gamma_1 w_1 +\Gamma_2 w_2\prec  \epsilon^{-2}$, and 
	\[\Gamma_{3}+\Gamma_{4}+\Gamma_{5}=0. \]
	Set $ w_2 \sim b \epsilon^{-2}$. Then $\ w_1\sim -\frac{\Gamma_2}{\Gamma_1}b \epsilon^{-2}$. 
	We claim that $\Gamma_1+\Gamma_2 \ne 0$. Otherwise, we have 
	\[\Gamma_2 w_{12}=\Gamma_1 w_1+\Gamma_2 w_2=\Lambda  (\sum_{j=3}^{5}\frac{\Gamma_j\Gamma_1}{z_{j1}}+ \sum_{j=3}^{5}\frac{\Gamma_j\Gamma_2}{z_{j2}})\prec \epsilon^{2},\] which contradicts with Proposition \ref{Estimate1}. Hence, $\Gamma_1+\Gamma_2 \ne 0$ and $w_{12}\approx \epsilon^{-2}$.

	By $\Lambda z_{1}=\sum_{j=3}^{5}\frac{\Gamma_j}{w_{j1}}+\frac{\Gamma_2}{w_{21}}$ and $\sum_{j=3}^{5}\frac{\Gamma_j}{w_{j1}}\prec \epsilon^2$,
	it follows that 
	\[ z_{1}\sim -\frac{\Gamma_1\Gamma_2 }{b \Lambda(\Gamma_1+\Gamma_2 )}\epsilon^{2}.\]
	Similarly, by $w_{1}=\Lambda\sum_{j=3}^{5}\frac{\Gamma_j}{z_{j1}}+\frac{\Gamma_2}{z_{21}}$ and $\sum_{j=3}^{5}\frac{\Gamma_j}{z_{j1}}\prec \epsilon^2$,
	it follows that 
	\[ z_{12}\sim \frac{\Gamma_1\Lambda }{b}\epsilon^{2}.\]
	Since $\sum_{j=1}^{5} \Gamma_j z_j=0$, we obtain 
	\[\sum_{j=3}^{5}\Gamma_j z_j=-(\Gamma_1 +\Gamma_2)z_1-\Gamma_2 z_{12}=\left\{
	\begin{array}{lr}
	\sim (\frac{1}{\Lambda }-\Lambda )\frac{\Gamma_1\Gamma_2}{b}\epsilon^{2} &  if~\Lambda^2\ne 1;\\
	\prec \epsilon^{2},  & if ~\Lambda^2=1 .
	\end{array}
	\right.\]
	Note that $\sum_{j=3}^{5}\Gamma_j z_j=\Gamma_{4}z_{34}+\Gamma_{5}z_{35}=\Gamma_{3}z_{43}+\Gamma_{5}z_{45}$. Set $\frac{z_{34}}{\Gamma_{5}}\sim c\epsilon^{2}.$
	
	\emph{Case 1.} $\Lambda^2=1$: Then
	$\frac{z_{34}}{\Gamma_{5}}\sim\frac{z_{45}}{\Gamma_{3}}\sim\frac{z_{53}}{\Gamma_{4}}\sim c\epsilon^{2}.$
	Note that $ \bar{\Lambda}  w_{3}\sim  \frac{\Gamma_4}{z_{43}}+ \frac{\Gamma_5}{z_{53}}$ and $ \bar{\Lambda} w_{4}\sim  \frac{\Gamma_3}{z_{34}}+ \frac{\Gamma_5}{z_{54}}$. Then
	\[  b c \Gamma_4\Gamma_5+\Gamma_4^2 \Lambda -\Gamma_5^2 \Lambda =0, b c \Gamma_3\Gamma_5+\Gamma_3^2 (-\Lambda )+\Gamma_5^2 \Lambda=0.  \] 
	Eliminating $bc$, we obtain
	$\frac{(\Gamma_3+\Gamma_4) \left(\Gamma_3\Gamma_4-\Gamma_5^2\right)}{\Gamma_4} \Lambda =0,$
	which is  a contradiction since $\Gamma_3+\Gamma_4 =-\Gamma_5\ne 0$ and $\Gamma_3\Gamma_4-\Gamma_5^2=L_{345}\ne0$.

	\emph{Case 2. } $\Lambda^2\ne 1$: Then
	\[\frac{z_{45}}{\Gamma_{3}}\sim [c+ (\frac{1}{\Lambda }-\Lambda )\frac{\Gamma_1\Gamma_2}{b\Gamma_3\Gamma_5}]\epsilon^{2}\approx \epsilon^{2},\ \frac{z_{35}}{\Gamma_{4}}\sim [-c+ (\frac{1}{\Lambda }-\Lambda )\frac{\Gamma_1\Gamma_2}{b\Gamma_4\Gamma_5}]\epsilon^{2}\approx \epsilon^{2}.\]
	Note that $ \bar{\Lambda}  w_{3}\sim  \frac{\Gamma_4}{z_{43}}+ \frac{\Gamma_5}{z_{53}}$ and $ \bar{\Lambda} w_{4}\sim  \frac{\Gamma_3}{z_{34}}+ \frac{\Gamma_5}{z_{54}}$. Then
	\begin{align*}
		&b^2 c^2 \Gamma_4\Gamma_5^2 \Lambda +b c \Gamma_1\Gamma_2\Gamma_5 \Lambda ^2-b c \Gamma_1\Gamma_2\Gamma_5+b c \Gamma_4^2 \Gamma_5 \Lambda ^2-b c \Gamma_5^3 \Lambda ^2+\Gamma_1\Gamma_2\Gamma_4\Lambda ^3-\Gamma_1\Gamma_2\Gamma_4\Lambda =0, \\
		&b^2 c^2 \Gamma_3\Gamma_5^2 \Lambda -b c \Gamma_1\Gamma_2\Gamma_5 \Lambda ^2+b c \Gamma_1\Gamma_2\Gamma_5-b c \Gamma_3^2 \Gamma_5 \Lambda ^2+b c \Gamma_5^3 \Lambda ^2+\Gamma_1\Gamma_2\Gamma_3\Lambda ^3-\Gamma_1\Gamma_2\Gamma_3\Lambda=0.  
	\end{align*}
	Eliminating $bc$, we obtain $\Lambda^2 =  \frac{\Gamma_1\Gamma_2}{\Gamma_1\Gamma_2+\Gamma_3\Gamma_4-\Gamma_5^2}$. Hence,  
	$\Lambda^2 \in \mathbb{R}$.  It must be $-1$, and then $\Lambda=\pm \sqrt{-1}, L=0$, which is a contradiction since
	\[  L=\Gamma_1\Gamma_2+(\Gamma_1+\Gamma_2)(\Gamma_3 +\Gamma_4+\Gamma_5 )+L_{345}= \Gamma_1\Gamma_2+\Gamma_3\Gamma_4-\Gamma_5^2.\]

	\item Diagram 8:   It is easy to see that $\Gamma_{3}+\Gamma_{4}+\Gamma_{5}=0$, $\Gamma_{1}+\Gamma_{2}=0$. 
	Note that $\frac{1}{z_{31}}- \frac{1}{z_{j1}}, \frac{1}{z_{32}}- \frac{1}{z_{j2}} \prec  \epsilon^{2},  j\ge3$, so
	\[\bar{\Lambda}(\Gamma_1 w_1 +\Gamma_2 w_2)=
	\Gamma_1 \sum_{j=3}^{5} \Gamma_j  (\frac{1}{z_{31}}+\frac{1}{z_{j1}}- \frac{1}{z_{31}} )  +\Gamma_2 \sum_{j=3}^{5} \Gamma_j  (\frac{1}{z_{32}}+\frac{1}{z_{j2}}- \frac{1}{z_{32}} )   \prec \epsilon^{2}.\]
	Hence
	\[  \epsilon^{2} \succ \Gamma_1 w_1 +\Gamma_2 w_2=( \Gamma_1 +\Gamma_2) w_1 +\Gamma_2 w_{12}= \Gamma_2 w_{12},  \]
	which contradicts with Proposition \ref{Estimate1}.

	\item Diagram 9:  By Considering the $z$- and $w$-diagrams, we obtain 
	\[\Gamma_{1}+\Gamma_{2}=0,\Gamma_{3}+\Gamma_{4}=0,\]
	and 
	\[ \Gamma_{1}z_1+\Gamma_{2}z_2=  \Gamma_{2}z_{12}\approx \epsilon^2,   \Gamma_{3}z_3+\Gamma_{4}z_4=  \Gamma_{4}z_{34}\approx \epsilon^2.  \]
	By $\sum_{j=1}^{5}\Gamma_j z_j=0$, we see $\Gamma_5 z_5\preceq \epsilon^{2}$,   which is a contradiction.
	
\end{enumerate}

\subsection{The remaining 22 diagrams}\label{subsec:diagrams}

\begin{figure}[!h]

	\centering
	\includegraphics[width=\textwidth]{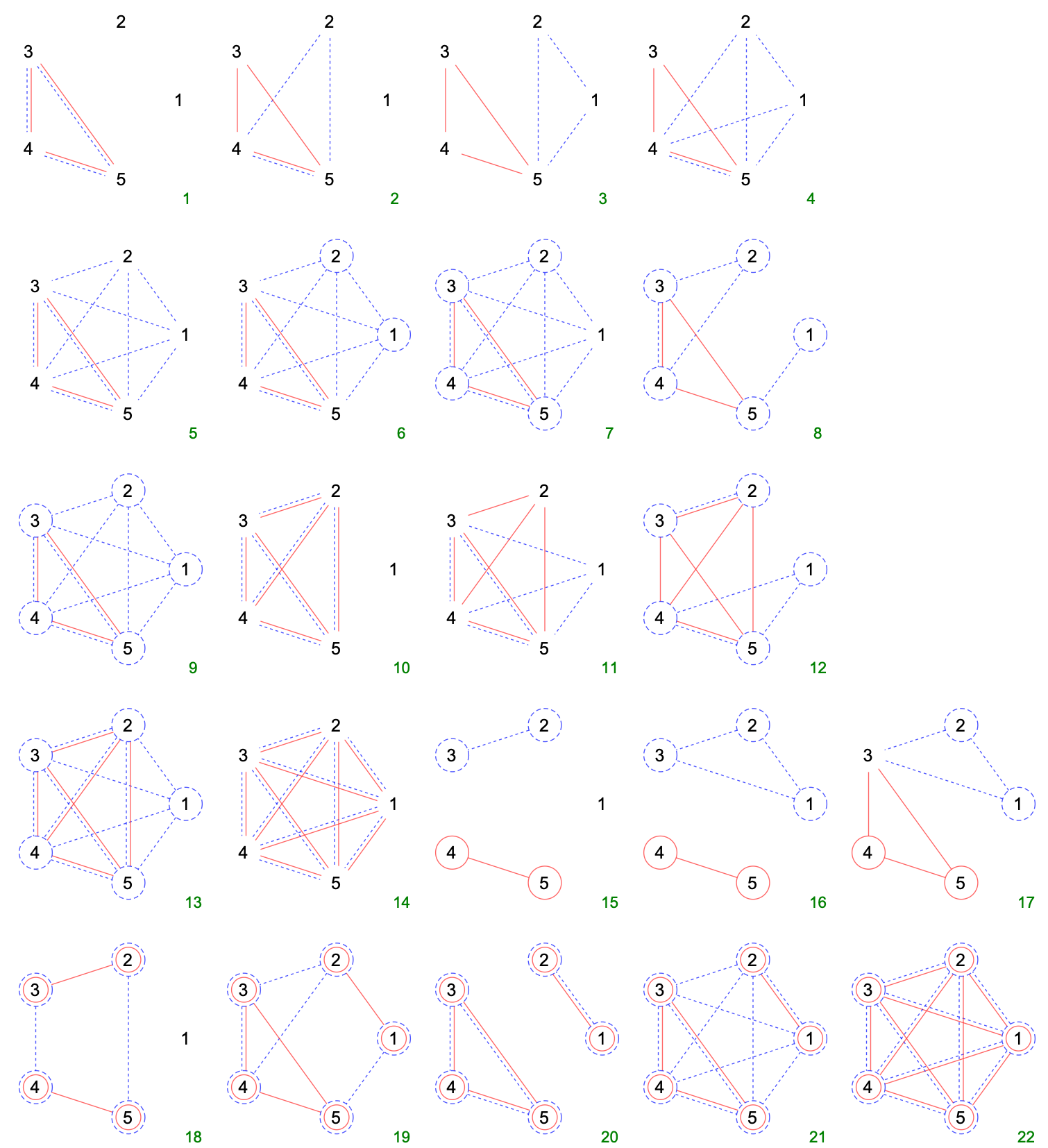}

	\caption{ The remaining 22 diagrams. They correspond to the first diagram of Figure \ref{fig:C=41}, the first diagram of Figure \ref{fig:C=42}, the first diagram of Figure \ref{fig:C=43}, the first diagram of Figure \ref{fig:C=52}, 
		the  third diagram of Figure \ref{fig:C=62}, 	the  fourth diagram of Figure \ref{fig:C=62}, 	the  first diagram of Figure \ref{fig:C=62}, 	the  second diagram of Figure \ref{fig:C=42},
			the  second diagram of Figure \ref{fig:C=62}, 	the  first diagram of Figure \ref{fig:C=61}, the diagram of Figure \ref{fig:C=63}, 	the  fourth diagram of Figure \ref{fig:C=52},
			the diagram of Figure \ref{fig:C=7}, 	the  first diagram of Figure \ref{fig:C=8}, the  first diagram of Figure \ref{fig:C=0},  the  fourth diagram of Figure \ref{fig:C=0}, 
			the  third diagram of Figure \ref{fig:C=43},   the  second diagram of Figure \ref{fig:C=2},
			 the  third diagram of Figure \ref{fig:C=42},   the  fifth diagram of Figure \ref{fig:C=41},  
			  the  sixth diagram of Figure \ref{fig:C=62},   and the  second diagram of Figure \ref{fig:C=8} respectively.  }
	\label{fig:list2}
\end{figure}

\begin{enumerate}
	\item Diagram 1: $L_{345}=0.$ \emph{This is the case of Roberts' example.}
	
	\item Diagram 2: $L_{345}=L_{245}=0$, or equivalently, $L_{345}=0,$$\Gamma_{2}=\Gamma_{3}.$
	
	\item Diagram 3: $\Lambda=\pm1$, $L_{345}=L_{125}=0$. 
	
	It is easy to see that $L_{345}=L_{125}=0$. We claim that $\Lambda=\pm1$. Otherwise, we have $L=0$. 
	Then 
	\begin{align*}
		L&=L_{125} + \Gamma_3 \Gamma_4 + (\Gamma_1 +\Gamma_2 +\Gamma_5)(\Gamma_3 +\Gamma_4)\\
		&= L_{125} + L_{345}+(\Gamma_1 +\Gamma_2)(\Gamma_3 +\Gamma_4)=(\Gamma_1 +\Gamma_2)(\Gamma_3 +\Gamma_4)=0.
	\end{align*}
	Without loss of generality, assume that $\Gamma_1 +\Gamma_2=0$. Then $L_{125}=0$ implies that 
	$\Gamma_1 \Gamma_2=0$,	which is  a contradiction. Hence,  it holds that $\Lambda=\pm1$. 
	
	\item Diagram 4: $L_{345}=L_{1245}=0$. 
	
	\item Diagram 5: $L_{345}=L=0$. 
	
	\item Diagram 6: $L_{345}=0.$

	\item Diagram 7: $L_{345}=0.$ 
	
	\item Diagram 8: 	$L_{345}=0, [\Gamma_{1}(\Gamma_{3}+\Gamma_{4})-\Gamma_{2}\Gamma_5](\Gamma_{1}+\Gamma_5)(\Gamma_{2}+\Gamma_{3}+\Gamma_{4})=0$. 
	
	It is easy to see $L_{345}=0$. 
	Set  $w_{3}\sim w_{4}\sim w_{5}\sim a\epsilon^{-2}$. Then $w_1\sim -\frac{\Gamma_5}{\Gamma_1}a \epsilon^{-2}$, and $w_2 \sim -\frac{\Gamma_3+\Gamma_4}{\Gamma_2}a \epsilon^{-2}$.  If $w_{12}\prec \epsilon^{-2}$,  then  
	\[   \Gamma_{1}(\Gamma_{3}+\Gamma_{4})=\Gamma_{2}\Gamma_5.   \]
	We now assume that $w_{12}\approx  \epsilon^{-2}$. 
	Since $z_{12}\succ    \epsilon^2$, it holds that  $z_1\succ \epsilon^2$, or  $z_2 \succ \epsilon^2$. 
	
	\emph{Case I: $z_1\succ \epsilon^2$. } Then
	$\epsilon^2\prec \Lambda  z_1 =\sum_{j=3}^{5}\frac{\Gamma_j}{w_{j1}}+\frac{\Gamma_2}{w_{21}}$, and  
	$\sum_{j=3}^{5}\frac{\Gamma_j}{w_{j1}}
	\succ \epsilon^2.$ If  $\Gamma_{1}+\Gamma_5\ne 0$, then $\frac{1}{w_{31}}, \frac{1}{w_{41}}, \frac{1}{w_{51}}\approx \epsilon^{2}
	$, which is a contradiction. Thus, we obtain 
	\[ \Gamma_{1}+\Gamma_5= 0. \]
	
	\emph{Case II: $z_2\succ \epsilon^2$. } Then
	$\epsilon^2\prec \Lambda  z_2 =\sum_{j=3}^{5}\frac{\Gamma_j}{w_{j2}}+\frac{\Gamma_1}{w_{12}}$, and  
	$\sum_{j=3}^{5}\frac{\Gamma_j}{w_{j2}}
	\succ \epsilon^2.$ If  $\Gamma_{2}+\Gamma_{3}+\Gamma_{4}\ne 0$, then $\frac{1}{w_{32}}, \frac{1}{w_{42}}, \frac{1}{w_{52}}\approx \epsilon^{2}
	$, which is a contradiction. Thus, we obtain 
	\[ \Gamma_{2}+\Gamma_{3}+\Gamma_{4}= 0. \]

	\item Diagram 9: $L_{345}=0.$ 
	
	\item Diagram 10: $L_{2345}=0.$
	
	\item Diagram 11: $L_{1345}=L_{2345}=0$, or equivalently, $L_{1345}=0,$$\Gamma_{1}=\Gamma_{2}.$
	
	\item Diagram 12: $L_{2345}=0=\Gamma_{2}+\Gamma_{3}.$
	
	\item Diagram 13: $L_{2345}=0.$
	
	\item Diagram 14: $L=0$.	It is easy to see that $r_{kl}\approx \epsilon^2$ for any $k\ne l, k, l\in \{1, 2, 3, 4, 5\}$. 
	
	\item Diagram 15. $\Lambda=\pm1$, $\Gamma_{1}+\Gamma_{2}+\Gamma_{3}=0$, $\Gamma_{1}+\Gamma_{4}+\Gamma_{5}=0$.
	
	By Proposition \ref{Prp:isolate-z_12}, there are two cases:
	\begin{enumerate}
		\item $\Lambda=\pm1$, then $\Gamma_{1}+\Gamma_{2}+\Gamma_{3}=0$, and $\Gamma_{1}+\Gamma_{4}+\Gamma_{5}=0$.
		\item $\Lambda=\pm \mathbf{i}$, then $L=0,$  and $\Gamma_{2}\Gamma_{3}=L_{145},\Gamma_{4}\Gamma_{5}=L_{123}$.  Then it holds 
		\begin{align*}
			L&=L_{145} + \Gamma_2 \Gamma_3 + (\Gamma_1 +\Gamma_4 +\Gamma_5)(\Gamma_2 +\Gamma_3)\\
			&= L_{145} + L_{123}+(\Gamma_4 +\Gamma_5)(\Gamma_2 +\Gamma_3)\\
			&=\Gamma_{2}\Gamma_{3}+\Gamma_{4}\Gamma_{5}+(\Gamma_4 +\Gamma_5)(\Gamma_2 +\Gamma_3)=L_{2345}=0.
		\end{align*}
		On the other hand, $L=\Gamma_1\Gamma_{2345}+L_{2345}=0$ implies that $\Gamma_{2345}=0$, 	which is  a contradiction. Hence, this case  is impossible. 
	\end{enumerate}

	\item Diagram 16: Consider the isolated component $\{4,5\}$ of the $z$-digram.
	By Proposition \ref{Prp:isolate-z_12}, it holds that $\Gamma_{4}+\Gamma_{5}\ne0$.
	There are two cases:
	\begin{enumerate}
		\item $\Lambda=\pm1$, then $\Gamma_{1}+\Gamma_{2}+\Gamma_{3}=0$. 
		\item $\Lambda=\pm \mathbf{i}$, then $L=0$ and $\Gamma_{4}\Gamma_{5}=L_{123}$.
	\end{enumerate}
	It is easy to see that
	\[   r_{14}, r_{15}, r_{24}, r_{25}, r_{34}, r_{35}\approx \epsilon^{-2}. \]
	
	According to Proposition \ref{Prp:isolate-z_12}, $z_{45}$ is a maximal stroke.  	\emph{We claim that $w_{12}, w_{13}, w_{23}$ are all maximal strokes}.  Therefore, by Proposition \ref{Estimate1}, 
	\[  r_{12}, r_{13}, r_{23}, r_{45} \approx 1.  \]

	Now, we prove the above claim, i.e.,   $w_1\nsim w_2 \nsim w_3 \nsim w_1$.  
	Since,  $w_{41}, w_{51}\approx \epsilon^{-2}$, by Proposition \ref{Prp:triangle2}, it can not happen that $w_1\sim w_2\sim w_3$.  If exactly two of $w_1, w_2, w_3$ are $w$-close, without loss of generality, assume that $w_{12}\prec \epsilon^{-2}, w_{13}\approx \epsilon^{-2}$. Then 
	\begin{align*}
		\Lambda z_1=\frac{\Gamma_2}{w_{21}} + \sum_{ j=3}^5 \frac{\Gamma_j}{w_{j1}}\succ \epsilon^{2}, \  \Lambda z_3= \sum_{ j=1, j\ne 3}^5 \frac{\Gamma_j}{w_{j3}}\preceq \epsilon^{2}.
	\end{align*}
	Hence, $z_{13}\sim -z_1 \succ \epsilon^{2}$, which  contradicts with Proposition \ref{Estimate1}. Hence, the claim is proved.

	\item Diagram 17: $\Lambda=\pm \mathbf{i}$, $L=0$ and $L_{345}=L_{123}$. 
	
	Consider
	the isolated component $\{3,4,5\}$ of the $z$-diagram and the component
	$\{1,2,3\}$ of the $w$-diagram. By Proposition \ref{Prp:isolate-z_123}, it holds that
	$\Gamma_{4}+\Gamma_{5}\ne0$ and $\Gamma_{1}+\Gamma_{2}\ne0$. There
	are two cases:
	\begin{enumerate}
		\item $\Lambda=\pm1$, then $\Gamma_{1}+\Gamma_{2}=0$, which is a contradiction. Hence, this case  is impossible. 
		\item $\Lambda=\pm \mathbf{i}$, $L=0$ and $L_{345}=L_{123}$.
	\end{enumerate}
	It is easy to see that
	\[   r_{14}, r_{15}, r_{24}, r_{25}\approx \epsilon^{-2}. \]
	According to Proposition \ref{Prp:isolate-z_123}, $z_{45}, w_{12}$ are maximal strokes. It is easy to see that $z_{34}, z_{35}, w_{23}, w_{13}$ are all maximal stokes. By Proposition \ref{Estimate1}, it holds that 
	\[  r_{12}, r_{13}, r_{23}, r_{34}, r_{35}, r_{45}\approx 1.   \]

	\item Diagram 18: $\Gamma_{3}\Gamma_{5}=\Gamma_{2}\Gamma_{4}$. \emph{This
		is the case of Roberts' example. }
	
	Consider the isolated component
	$z_{23}$-edge. By Rule III, $\Gamma_{2}z_{2}+\Gamma_{3}z_{3}\prec\epsilon^{-2}$.
	Without loss of generality, assume $z_{2}\sim-\Gamma_{3}a\epsilon^{-2}$
	and $z_{3}\sim\Gamma_{2}a\epsilon^{-2}$.Similarly, we assume $z_{4}\sim-\Gamma_{5}c\epsilon^{-2}$
	and $z_{5}\sim\Gamma_{4}c\epsilon^{-2}$, where $a$ and $c$ are
	two non-zero complex constants. Since $z_{34}\approx z_{25}\approx\epsilon^{2}$
	by the estimate of Proposition \ref{Estimate1}, we obtain $\Gamma_{2}a=-\Gamma_{5}b,-\Gamma_{3}a=\Gamma_{4}b.$
	Hence, we obtain 
	\[
	\Gamma_{2}\Gamma_{4}=\Gamma_{3}\Gamma_{5}.
	\]
	Apply similar argument to the isolated component $w_{25}$-edge. Then
	we have 
	\[
	\begin{array}{cc}
	z_{2}\sim z_{5}\sim-\Gamma_{3}a\epsilon^{-2}, & z_{3}\sim z_{4}\sim\Gamma_{2}a\epsilon^{-2},\\
	w_{2}\sim w_{3}\sim-\Gamma_{5}b\epsilon^{-2}, & w_{4}\sim w_{5}\sim\Gamma_{2}b\epsilon^{-2},
	\end{array}
	\]
	where $b$ is a non-zero complex constants. It follows that 
	\[
	\Gamma_{2}\Gamma_{4}=\Gamma_{3}\Gamma_{5}.
	\]
	If $\Gamma_{2}+\Gamma_{3}+\Gamma_{4}+\Gamma_{5}\neq0$, then $(\Gamma_{5}+\Gamma_{2})(\Gamma_{2}+\Gamma_{3})(\Gamma_{3}+\Gamma_{4})(\Gamma_{4}+\Gamma_{5})\neq0$.
	Otherwise, without loss of generality, assume $\Gamma_{2}+\Gamma_{3}=0$,
	it follows that $\Gamma_{4}+\Gamma_{5}=0$, which contradicts with  $\Gamma_{2}+\Gamma_{3}+\Gamma_{4}+\Gamma_{5}\neq0$.
	
	In this case,  it is easy to see that
	\begin{align*}
		r_{23}^{2}\sim\frac{\Gamma_{2}+\Gamma_{3}}{-\Lambda},r_{34}^{2}\sim\frac{\Gamma_{3}+\Gamma_{4}}{-\overline{\Lambda}}, & r_{45}^{2}\sim\frac{\Gamma_{4}+\Gamma_{5}}{-\Lambda},r_{25}^{2}\sim\frac{\Gamma_{2}+\Gamma_{5}}{-\overline{\Lambda}};\\
		r_{24}^{2}\sim(\Gamma_{2}+\Gamma_{3})(\Gamma_{2}+\Gamma_{5})ab\epsilon^{-4}, & r_{35}^{2}\sim-(\Gamma_{2}+\Gamma_{3})(\Gamma_{2}+\Gamma_{5})ab\epsilon^{-4}.;\\
		r_{12}^{2}\sim\Gamma_{3}\Gamma_{5}ab\epsilon^{-4},r_{13}^{2}\sim-\Gamma_{2}\Gamma_{5}ab\epsilon^{-4}, & r_{14}^{2}\sim\Gamma_{2}\Gamma_{2}ab\epsilon^{-4},r_{15}^{2}\sim-\Gamma_{2}\Gamma_{3}ab\epsilon^{-4}.
	\end{align*}

	\item Diagram 19: $\Gamma_{1}(\Gamma_{3}+\Gamma_{4})=\Gamma_{2}\Gamma_{5}$. Moreover, it holds that either $|\Gamma_{1}+\Gamma_{2}|+|\Gamma_{1}+\Gamma_{5}|=0$ or  $(\Gamma_{1}+\Gamma_{2})(\Gamma_{1}+\Gamma_{5} )\ne 0$.

	Consider the isolated component $z_{12}$-edge. By Rule III, $\Gamma_{1}z_{1}+\Gamma_{2}z_{2}\prec\epsilon^{-2}$.
	Without loss of generality, assume $z_{1}\sim-\Gamma_{2}a\epsilon^{-2}$
	and $z_{2}\sim\Gamma_{1}a\epsilon^{-2}$. Similarly, we assume $w_{1}\sim-\Gamma_{5}b\epsilon^{-2}$
	and $w_{5}\sim\Gamma_{1}b\epsilon^{-2}$, where $a$ and $b$ are
	two non-zero complex constants. Then 
	\[
	\begin{array}{cc}
	z_{1}\sim z_{5}\sim-\Gamma_{2}a\epsilon^{-2}, & z_{2}\sim z_{3}\sim z_{4}\sim\Gamma_{1}a\epsilon^{-2},\\
	w_{1}\sim w_{2}\sim-\Gamma_{5}b\epsilon^{-2}, & w_{3}\sim w_{4}\sim w_{5}\sim \Gamma_{1}b\epsilon^{-2}.
	\end{array}
	\]
	Since $\sum_{ j=1}^5 \Gamma_j z_j =0$, we obtain $\Gamma_{1}(\Gamma_{3}+\Gamma_{4})=\Gamma_{2}\Gamma_{5}$. 
	Note that if $\Gamma_{1}+\Gamma_{2}=0,$ then $\Gamma_{3}+\Gamma_{4}+\Gamma_{5}=0;$
	if $\Gamma_{1}+\Gamma_{5}=0,$ then $\Gamma_{2}+\Gamma_{3}+\Gamma_{4}=0.$
	\begin{enumerate}
		\item If it holds that $\Gamma_{1}+\Gamma_{2}=0$ and $\Gamma_{1}+\Gamma_{5}=0,$
		i.e., $\Gamma_{1}=\Gamma_{3}+\Gamma_{4}=-\Gamma_{2}=-\Gamma_{5}.$
		It holds that $z_{j}\sim\Gamma_{1}a\epsilon^{-2},w_{j}\sim\Gamma_{1}b\epsilon^{-2},z_{jk}\prec\epsilon^{-2},w_{jk}\prec\epsilon^{-2}$ for any $j,k\in \{1, 2, 3, 4, 5\}$. 
		Thus, 
		\[
		\begin{array}{cc}
		\epsilon^{4}\prec r_{12},r_{15},r_{23},r_{24},r_{35},r_{45}\prec1, & r_{34}\approx\epsilon^{2},\\
		\epsilon^{4}\prec r_{14}^{2}=z_{14}w_{14}\approx z_{45}w_{45}\prec\epsilon^{-4}, & \epsilon^{4}\prec r_{13},r_{25}\prec\epsilon^{-4}.
		\end{array}
		\]
		\item If it holds that $\Gamma_{1}+\Gamma_{2}\ne0$ and $\Gamma_{1}+\Gamma_{5}\ne0.$
		It holds that $z_{1}\sim z_{5}\sim-\Gamma_{2}a\epsilon^{-2},z_{2}\sim z_{3}\sim z_{4}\sim\Gamma_{1}a\epsilon^{-2},w_{1}\sim w_{2}\sim-\Gamma_{5}b\epsilon^{-2},w_{3}\sim w_{4}\sim w_{5}\sim\Gamma_{1}b\epsilon^{-2}$.
		Thus, 
		\begin{align*}
			r_{34}^{2}\approx\epsilon^{4}, & r_{12},r_{35},r_{45}\approx1,\\
			r_{15},r_{23},r_{24}\approx1, & r_{14}^{2},r_{13}^{2},r_{25}^{2}\approx\epsilon^{-4}.
		\end{align*}
		
		\item If it holds that $\Gamma_{1}+\Gamma_{2}=0$ and $\Gamma_{1}+\Gamma_{5}\ne0.$
		It holds that $z_{j}\sim\Gamma_{1}a\epsilon^{-2},z_{jk}\prec\epsilon^{-2}$  for any $j,k \in \{1, 2, 3, 4, 5\}$, $w_{1}\sim w_{2}\sim-\Gamma_{5}b\epsilon^{-2},w_{3}\sim w_{4}\sim w_{5}\sim\Gamma_{1}b\epsilon^{-2}$.  
		Note that 
		\[\Lambda\sum_{j=3}^{5}\Gamma_j z_j=\sum_{j=3}^{5}\frac{\Gamma_1\Gamma_j}{w_{1j}}+\sum_{j=3}^{5}\frac{\Gamma_2\Gamma_j}{w_{2j}}=\sum_{j=3}^{5}\frac{\Gamma_2\Gamma_j w_{12}}{w_{1j}w_{2j}}\prec \epsilon^{2}\]
		and $ \sum_{j=3}^{5}\Gamma_j z_j=\sum_{j=3}^{5}\Gamma_j z_{3}+\Gamma_{4}z_{34}+\Gamma_{5}z_{35},\ z_{34}\approx \epsilon^{2}$.  It holds that 
		\[z_{35}\approx \epsilon^{2},\]		
		which is a contradiction. Hence, this case is impossible. 	
		
	\end{enumerate}

	\item Diagram 20: $\Gamma_{1}+\Gamma_{2}=0,\Gamma_{3}+\Gamma_{4}+\Gamma_{5}=0$.

	\item Diagram 21: $\Gamma_{1}+\Gamma_{2}=0,\Gamma_{3}+\Gamma_{4}+\Gamma_{5}=0$.
	
	\item Diagram 22: $\Gamma_{1}+\Gamma_{2}+\Gamma_{3}+\Gamma_{4}+\Gamma_{5}=0$. 	It is easy to see that $r_{kl}\approx \epsilon^2$ for any $k\ne l, k, l\in \{1, 2, 3, 4, 5\}$. 
\end{enumerate}

	\section{Proofs of the main results} \label{sec:proof}

\indent\par
\begin{proof}[\emph{Proof of Theorem \ref{thm:Main}}]
	Suppose that there are infinitely many solutions of system \eqref{equ:complexcc} in the complex domain. At least one of the squared distances \(r^2_{kl}\), say \(r_{12}^2\), must take infinitely many values. By Lemma \ref{Eliminationtheory}, \(r_{12}^2 = z_{12}w_{12}\) is dominating. There exists a sequence of complex normalized  central configurations such that \(r_{12}^2 \to 0\). Then \(z_{12}^{(n)} w_{12}^{(n)} \to 0\), and either \(\mathcal{Z}\) or \(\mathcal{W}\) is unbounded along this sequence. We extract a singular sequence, which must correspond to one of the 22 diagrams in Figure \ref{fig:list2}. Consequently, some explicit linear or quadratic relations on the five vorticities must be satisfied.

	Consider first the case of  relative equilibria. We can further exclude Diagram 17. Recall that \(r_{12}^2\) is dominating. By pushing it to \(\infty\), we cannot be in Diagram 14 or 22. Thus, there are finitely many relative equilibria unless at least one of the following fourteen polynomial systems holds. ( Recall that  $L_{j_1, ..., j_n}=\sum_{1 \leq k < l \leq n} \Gamma_{j_k} \Gamma_{j_l}$ ), 
	\begin{equation}
	\quad \begin{aligned}
	&           L_{345}=0;\\
	&      	  L_{345}=0, \Gamma_2=\Gamma_3; \\
	&	  	      L_{345}=L_{125}=0;\\
	&  	  	  L_{345}=L_{1245}=0;\\
	&  	  	  L_{345}=L=0;\\
	&          L_{345}=0, [\Gamma_{1}(\Gamma_{3}+\Gamma_{4})-\Gamma_{2}\Gamma_5](\Gamma_{1}+\Gamma_5)(\Gamma_{2}+\Gamma_{3}+\Gamma_{4})=0;\\
	&  	  	  L_{2345}=0;\\
	&  	  	  L_{1345}=0, \Gamma_{1}-\Gamma_{2}=0;\\
	&  	  	  L_{2345}=0, \Gamma_{2}+\Gamma_{3}=0;\\
	&  	  	  \Gamma_{1}+\Gamma_{2}+\Gamma_{3}=0, \Gamma_{1}+\Gamma_{4}+\Gamma_{5}=0; \\
	& 	  	  \Gamma_{1}+\Gamma_{2}+\Gamma_{3}=0;\\
	&  	  	  \Gamma_3\Gamma_5=\Gamma_2\Gamma_4; \\
	& 	  	  \Gamma_1(\Gamma_3+\Gamma_4)=\Gamma_2\Gamma_5; \\
	&  	  	  \Gamma_{1}+\Gamma_{2}=0,  \Gamma_{3}+\Gamma_{4}+\Gamma_{5}=0. 
	\end{aligned} \label{equ:constraint_relative_equilibria}
	\end{equation}

	Now consider the case of collapse configurations. In this case, we have an additional constraint \(L = 0\). It is straightforward to see that we can further exclude 10 diagrams, namely, Diagram 3, 4, 10, 11, 12, 13, 15, 20, 21, and 22. Hence, for collapse configurations, we have only 12 remaining diagrams: Diagram 1, 2, 5, 6, 7, 8, 9, 14, 16, 17, 18, and 19. Recall that \(r_{12}^2\) is dominating. By pushing it to \(\infty\), we cannot be in Diagram 14. Therefore, there are finitely many collapse configurations unless at least one of the following seven polynomial systems holds, 
	\begin{equation}
	\quad \begin{aligned}
	& L=0, L_{345}=0;\\
	& L=0, L_{345}=0, \Gamma_2=\Gamma_3; \\
	& L=0, L_{345}=0, [\Gamma_{1}(\Gamma_{3}+\Gamma_{4})-\Gamma_{2}\Gamma_5](\Gamma_{1}+\Gamma_5)(\Gamma_{2}+\Gamma_{3}+\Gamma_{4})=0;\\
	& L=0, L_{123}=\Gamma_4\Gamma_5; \\
	& L=0, L_{345}=L_{123}; \\
	& L=0, \Gamma_3\Gamma_5=\Gamma_2\Gamma_4; \\
	& L=0, \Gamma_1(\Gamma_3+\Gamma_4)=\Gamma_2\Gamma_5. 
	\end{aligned} \label{equ:constraint_collapse}
	\end{equation}

\end{proof}

To further improve the above result, we first recall some simple tricks to estimate the distances between vorticities when a singular sequence approaches one of the diagrams in Figure \ref{fig:list2}.  

By Proposition \ref{Estimate1}, it is easy to see the following:  A $z w$-edge corresponds to a distance  $r_{kl}\approx \epsilon^2$. The other cases, no edge or a simple edge, i.e., a $z$-edge or a $w$-edge, correspond to a distance $r_{kl}\succ \epsilon^2$.  A   maximal simple edge corresponds to a distance $r_{kl}\approx 1$. A simple edge is maximal if  there is exactly one  of its ends is circled in one of $z$- and $w$-diagram. 
The  non-maximal simple edges  correspond to distances $r_{k l}$ such that $\epsilon^2 \prec r_{k l} \prec 1$.
For distances $r_{k l}$ without edge, it holds $\epsilon^2\prec r_{k l}\preceq \epsilon^{-2}$.  
For a distance $r_{k l}$ without edge that the two ends are not equally circled,  it holds that   $1\prec r_{k l}$, i.e., unbounded. 

We will avoid further case-by-case analysis and use only these simple estimates, along with those stated in Section \ref{sec:diagram&constraints}. We call a 4-product a quantity $p_{i j}=r_{i j}^2 r_{k l}^2 r_{l m}^2 r_{m k}^2$, where $i, j, k, l, m$ are all the indices from 1 to 5 .

\begin{proposition}\label{Prp:estimate_cod-2}
	In  Diagram 5, 6, 7, 9, 10, 13 and 14,   any 4-product is bounded.
\end{proposition}

\begin{proof}
	Note that all distances in Diagram 5, 6, 7, 9, 13 and 14 are bounded. For Diagram 10, we arrive the  estimate by Proposition \ref{Estimate1}. 
\end{proof}

In the following proof, we assume that  $\sum _{j=2}^5\Gamma_j\ne 0$ in Diagram 18, otherwise its constraint is a codimension 2 set.  Similarly, we assume that  it holds in Diagram 19 that  $\Gamma_1+\Gamma_2\ne 0$ and $\Gamma_1+\Gamma_5\ne 0$. Hence, in Diagram 18 and 19,  a simple edge corresponds to a distance $r_{kl}\approx 1$, and  no edge corresponds to a distance $r_{kl}\approx \epsilon^{-2}$.

\begin{proof}[\emph{Proof of Theorem \ref{thm:cod_2}}]
	\emph{Part 1:} For relative equilibria,  repeating the first paragraph of the proof of Theorem \ref{thm:Main} shows that there is a singular  sequence of  complex  
	normalized central configurations approaching 
	one of  the 21 diagrams (the 22 diagrams minus Diagram 17). Since each of Diagram  2, 3, 4, 5, 8, 11, 12, 15, 20, and 21  have two independent constraints, which defines a codimension 2 set, we put them, and all the similar sets obtained by renumbering the five vorticities, in the exceptional set $\mathcal{A}$.  We assume that the singular sequence approaches one of Diagram 1, 6, 7, 9, 10, 13, 14, 16, 18, 19 and 22. 
	
	Suppose that it is Diagram 1. We number the vertices as in figure \ref{fig:list2}. Then $r_{34}^2 \approx \epsilon^4$, and is a dominating function. Let $r_{34}^2\to \infty$. Since all distances in Diagram  6, 7, 9, 13, 14  and 22 are bounded, we are in either Diagram 1 or in Diagram   10, 16, 18, and 19. If we are in Diagram 1, it must be renumbered such that $r_{34}$ is not in the fully edged triangle, so there is a new independent constraint on the five vorticities. In the remaining  cases, there is always a new independent constraint. We add the corresponding codimension 2 sets to the exceptional set $\mathcal {A}$, and  Diagram 1 is now excluded. 
	
	Suppose that it is one of Diagram 6, 7, 9, 10 and 13. We number the vertices as in figure \ref{fig:list2}. Then the 4-product $p_{12} \to 0$, and is a dominating function. Let $p_{12}\to \infty$.   By Proposition \ref{Prp:estimate_cod-2}, we are in one of Diagram  16, 18, and 19. In either cases, there is always  a new independent constraint. We add the corresponding codimension 2 sets to the exceptional set $\mathcal {A}$, and  Diagram 6, 7, 9, 10 and 13  are now excluded. 
	
	Suppose that it is one of Diagram 14 and 22.  Then $r_{34}^2 \approx \epsilon^4$, and is a dominating function. Let $r_{34}^2\to \infty$.   By the estimates from Section \ref{sec:diagram&constraints}, we are in one of Diagram  16, 18, and 19. In either cases, there is also  a new independent constraint. We add the corresponding codimension 2 sets to the exceptional set $\mathcal {A}$, and  Diagram 14 and 22 are now excluded. 
	
	Suppose that it is  one of Diagram 16 and 18.  We number the vertices as in figure \ref{fig:list2}.   Then $r_{14}^2 \approx \epsilon^{-4}$, and is a dominating function. Let $r_{14}^2\to 0$.   We are in Diagram 19. There is   a new independent constraint. We add the corresponding codimension 2 sets to the exceptional set $\mathcal {A}$, and  Diagram 16  and 18 are  now excluded. 
	
	Suppose that it is  Diagram 19.  We number the vertices as in figure \ref{fig:list2}.   Then $r_{34}^2 \to 0$, and is a dominating function. Let $r_{34}^2\to \infty$.   We are in Diagram 19, but it  must be renumbered such that  $r_{34}$ is not a $zw$-edge, so there is a new independent constraint on the five vorticities.  We add the corresponding codimension 2 sets to the exceptional set $\mathcal {A}$. This concludes the construction of $\mathcal{A}$.  The last possibility for a singular sequence is now forbidden. There is no continuum of relative equilibria  if the vorticities  do not belong to $\mathcal{A}$.

	\emph{Part 2:} For collapse configurations,  the  vorticity space is now $\{  (\Gamma_1, \ldots, \Gamma_5): \Gamma_i \in \mathbb{R}^*,  i=1, \ldots, 5,  \sum_{1\le i<j\le 5} \Gamma_i \Gamma_j =0 \}$. 
	Repeating the first paragraph of the proof of Theorem \ref{thm:Main} shows that there is a singular  sequence of
	complex  normalized central configurations approaching 
	one of  the 12 diagrams (Diagram 1, 2, 5-9, 14, 16-19). Since each of Diagram  2 and 8  have two independent constraints, which defines a codimension 2 set, we put them, and all the similar sets obtained by renumbering the five vorticities, in the exceptional set $\mathcal{B}$.   We assume that the singular sequence approaches one of Diagram 1, 5, 6, 7, 9,  14, 16, 17, 18 and 19.

	Suppose that it is Diagram 1. We number the vertices as in figure \ref{fig:list2}. Then $r_{34}^2 \approx \epsilon^4$, and is a dominating function. Let $r_{34}^2\to \infty$. Since all distances in Diagram 5, 6, 7, 9 and 14 are bounded, we are in either Diagram 1 or in Diagram    16, 17, 18, and 19. If we are in Diagram 1, it must be renumbered such that $r_{34}$ is not in the fully edged triangle, so there is a new independent constraint on the five vorticities. In the remaining  cases, there is always a new independent constraint. We add the corresponding codimension 2 sets to the exceptional set $\mathcal {B}$, and  Diagram 1 is now excluded. 
	
	Suppose that it is one of Diagram 5, 6, 7, 9. We number the vertices as in figure \ref{fig:list2}. Then the 4-product $p_{12} \to 0$, and is a dominating function. Let $p_{12}\to \infty$.   By Proposition \ref{Prp:estimate_cod-2}, we are in one of Diagram  16, 17, 18, and 19. In either cases, there is always  a new independent constraint. We add the corresponding codimension 2 sets to the exceptional set $\mathcal {B}$, and  Diagram 5, 6, 7, and 9  are now excluded.

	Suppose that it is  Diagram 19.  We number the vertices as in figure \ref{fig:list2}.   Then $r_{34}^2 \approx \epsilon^{4}$, and is a dominating function. Let $r_{34}^2\to \infty$.   By the estimates from Section \ref{sec:diagram&constraints}, we are in one of Diagram 16, 17, 18, 19. If  we are in Diagram 19,  it  must be renumbered such that  $r_{34}$ is not a $zw$-edge, so there is a new independent constraint on the five vorticities.  In the remaining  cases, there is always a new independent constraint. We add the corresponding codimension 2 sets to the exceptional set $\mathcal {B}$, and  Diagram 19 is now excluded.

	Suppose that it is  Diagram 16. We number the vertices as in figure \ref{fig:list2}. Then $r_{14}^2 \approx \epsilon^{-4}$, and is a dominating function. Let $r_{14}^2\to 0$.  We are in Diagram 14, where $r_{12}^2 r_{13}^2 r_{23}^2 \approx \epsilon^{12}$, and is a dominating function. Let $r_{12}^2 r_{13}^2 r_{23}^2 \to \infty$, then  
	we are in one of Diagram 16, 17, 18. If  we are in Diagram 16,  it  must be renumbered such that  $\{1, 2, 3\}$ is no longer one isolated component of the $w$-diagram, 	so there is a new independent constraint on the five vorticities.  In the remaining  cases, there is always a new independent constraint. We add the corresponding codimension 2 sets to the exceptional set $\mathcal {B}$, and  Diagram 16 is now excluded.

	Suppose that it is  Diagram 17. We number the vertices as in figure \ref{fig:list2}. Then $r_{14}^2 \approx \epsilon^{-4}$, and is a dominating function. Let $r_{14}^2\to 0$.  We are in Diagram 14, where $r_{12}^2 r_{13}^2 r_{23}^2 \approx \epsilon^{12}$, and is a dominating function. Let $r_{12}^2 r_{13}^2 r_{23}^2 \to \infty$, then  
	we are in one of Diagram 17, 18. If  we are in Diagram 17,  it  must be renumbered such that  $\{1, 2, 3\}$ is no longer one isolated component of the $w$-diagram, 	so there is a new independent constraint on the five vorticities.  If it is Diagram 18,  there is  a new independent constraint. We add the corresponding codimension 2 sets to the exceptional set $\mathcal {B}$, and  Diagram 17 is now excluded.

	Suppose that it is  Diagram 18. We number the vertices as in figure \ref{fig:list2}. Then $r_{14}^2 \approx \epsilon^{-4}$, and is a dominating function. Let $r_{14}^2\to 0$.  We are in  Diagram 14, where $r_{23}^2 r_{34}^2 r_{45}^2r_{25}^2 \approx \epsilon^{16}$, and is a dominating function. Let $r_{23}^2 r_{34}^2 r_{45}^2r_{25}^2 \to \infty$, then  
	we are in one copy of  Diagram 18.  It  must be renumbered such that  one of the vertices $\{2, 3, 4, 5\}$ is no longer in the quadrilateral formed by the four simple edges,  
	so there is a new independent constraint on the five vorticities.  We add the corresponding codimension 2 set to the exceptional set $\mathcal {B}$, and  Diagram 18 is now excluded.

	Suppose that it is  Diagram 14.  We number the vertices as in figure \ref{fig:list2}.   Then $r_{14}^2 \approx \epsilon^{4}$, and is a dominating function. Let $r_{14}^2\to \infty$, which is impossible.  This concludes the construction of $\mathcal{B}$.  The last possibility for a singular sequence is now forbidden. There is no continuum of collapse configurations  if the vorticities  do not belong to $\mathcal{B}$.
	
\end{proof}

We also confirm the finiteness under the following restrictions  of Theorem \ref{thm:restri}, namely,
\[  \sum_{j\in J} \Gamma_j\ne 0, \ \sum_{j, k\in J, j\ne k} \Gamma_j\Gamma_k\ne 0, {\rm \ for\  any\ nonempty\  subset\ }  J\ {\rm   of\ } \{1,2,3,4,5\}.   \]

Under the above restriction,  a singular sequence can only approach  Diagram 18 and 19.

\begin{proof}[Proof of  Theorem \ref{thm:restri}]
	Suppose that there are infinitely many solutions  of system \eqref{equ:complexcc} in the complex domain. The argument in the proof of Theorem \ref{thm:Main} implies that there is a singular sequence corresponding to Diagram 18 or 19. In either case, the polynomial
	$\prod_{j\ne k}r_{jk}^{2}$ approach $\infty$, which follows easily
	from the estimation of the distances of Section \ref{sec:diagram&constraints}.  Then it is a
	dominating function.  Push the polynomial to zero and extract a singular sequence. However, the singular sequence would correspond to none of the two diagrams. This is a contradiction.
\end{proof}

\bmhead{Acknowledgements}

We express our sincere gratitude to the reviewers, as their insightful comments and suggestions have significantly contributed to the enhancement of this manuscript. 

\bmhead{Data Availibility Statement} 

All data, models and code generated or used during the study appear in the submitted article.

\section*{Declarations}

On behalf of all authors, the corresponding author states that there is no conflict of interest.


\begin{thebibliography}{10}
	
	\bibitem{Albouy2012Finiteness}
	Alain Albouy and Vadim Kaloshin.
	\newblock Finiteness of central configurations of five bodies in the plane.
	\newblock {\em Annals of Mathematics}, 176(1):535--588, 2012.
	
	\bibitem{aref1979motion}
	H.~Aref.
	\newblock Motion of three vortices.
	\newblock {\em The Physics of Fluids}, 22(3):393--400, 1979.
	
	\bibitem{aref2003vortex}
	H.~Aref, P.~K. Newton, M.~A. Stremler, T.~Tokieda, and D.~L Vainchtein.
	\newblock Vortex crystals.
	\newblock {\em Advances in applied Mechanics}, 39:2--81, 2003.
	
	\bibitem{2023Chen-1}
	Ke-Ming Chang and Kuo-Chang Chen.
	\newblock Toward finiteness of central configurations for the planar six-body problem by symbolic computations.  (I) Determine
	diagrams and orders. 
	\newblock {\em Journal of Symbolic Computations} (2024), \url{https://doi.org/10.1016/j.jsc.2023.102277}
	
	
	
	
	
	\bibitem{grobli1877specielle}
	W.~Gr{\"o}bli.
	\newblock {\em Specielle Probleme {\"u}ber die Bewegung geradliniger paralleler
		Wirbelf{\"a}den}, volume~8.
	\newblock Druck von Z{\"u}rcher und Furrer, 1877.
	
	\bibitem{hampton2006finiteness}
	
	M.~Hampton and R.~Moeckel. 
	\newblock Finiteness of relative equilibria of the four-body problem. \newblock {\em Inventiones mathematicae}, 163(2):289--312, 2006.
	
	
	
	
	
	
	\bibitem{hampton2009finiteness}
	M.~Hampton and R.~Moeckel.
	\newblock Finiteness of stationary configurations of the four-vortex problem.
	\newblock {\em Transactions of the American Mathematical Society},
	361(3):1317--1332, 2009.
	
	\bibitem{hampton2014relative}
	M. Hampton, G.~E. Roberts and M. Santoprete.
	\newblock Relative equilibria in the four-vortex problem with two pairs of equal vorticities.
	\newblock {\em Journal of Nonlinear Science}, 24(1):39--92, 2014.
	
	
	
	\bibitem{helmholtz1858integrale}
	H.~Helmholtz.
	\newblock {\"U}ber integrale der hydrodynamischen gleichungen, welche den
	wirbelbewegungen entsprechen.
	\newblock {\em Journal f{\"u}r die reine und angewandte Mathematik},
	1858(55):25--55, 1858.
	
	
	
	\bibitem {Newton2001}
	P.~K. Newton. 
	\newblock {\em The {$N$}-vortex problem,  Analytical techniques}. 
	\newblock Applied Mathematical Sciences 145, Springer-Verlag, New York, 2001. 
	
	
	
	\bibitem{novikov1975dynamics}
	E.~A. Novikov.
	\newblock Dynamics and statistics of a system of vortices.
	\newblock {\em Zhurnal Eksperimentalnoi i Teoreticheskoi Fiziki}, 68: 1868-1882,  1975.
	
	\bibitem{novikov1979vortex}
	E.~A. Novikov and Yu.~B. Sedov.
	\newblock Vortex collapse.
	\newblock {\em Zhurnal Eksperimentalnoi i Teoreticheskoi Fiziki}, 77: 588--597,
	1979.
	
	\bibitem{o1987stationary}
	K.~A. O'Neil.
	\newblock Stationary configurations of point vortices.
	\newblock {\em Transactions of the American Mathematical Society},
	302(2):383--425, 1987.
	
	\bibitem{o2006minimalpolynomial}
	K.~A. O'Neil.
	\newblock Minimal polynomial systems for point vortex equilibria. 
	\newblock {\em Physica D: Nonlinear Phenomena}, 219:69--79, 2006.
	
	\bibitem{o2007relative}
	K.~A. O'Neil.
	\newblock Relative equilibrium and collapse configurations of four point
	vortices.
	\newblock {\em Regular and Chaotic Dynamics}, 12(2):117--126, 2007.
	
	
	
	
	
	\bibitem{palmore1975classII}
	J.~I. Palmore. 
	\newblock Classifying relative equilibria. II. 
	\newblock {\em Bulletin of the American Mathematical Society},
	81:489--491, 1975.
	
	
	
	
	\bibitem{palmore1982vortex}
	J.~I. Palmore.
	\newblock Relative equilibria of vortices in two dimensions. 
	\newblock {\em Proceedings of the National Academy of Sciences of the United
		States of America},
	79(2):716--718, 1982. 
	
	
	
	\bibitem{roberts1999continuum}
	G.~E. Roberts.
	\newblock A continuum of relative equilibria in the five-body problem.
	\newblock {\em Physica D: Nonlinear Phenomena}, 127(3-4):141--145, 1999.
	
	
	
	
	\bibitem{roberts2018morse}
	G.~E. Roberts.
	\newblock Morse theory and relative equilibria in the planar n-vortex problem.
	\newblock {\em Archive for Rational Mechanics and Analysis}, 228:209--236, 2018.
	
	
	
	\bibitem{synge1949motion}
	J.~L. Synge.
	\newblock On the motion of three vortices.
	\newblock {\em Canadian Journal of Mathematics}, 1(3):257--270, 1949.
	
	
	\bibitem{Tsai2020bifurcation} 
	Y-L. Tsai. 
	\newblock Bifurcation of point vortex equilibria: four-vortex translating configurations and five-vortex stationary configurations.
	\newblock {\em Nonlinearity},
	33(12): 6564--6589, 2020. 
	
	
	
	\bibitem{yu2021Finiteness}
	Xiang Yu.
	\newblock Finiteness of stationary configurations of the planar four-vortex
	problem.
	\newblock { \em Advances in Mathematics} (2023), \url{ https://doi.org/10.1016/j.aim.2023.109378}
	
	
	
	
	\bibitem{YuZhuarxiv} Xiang Yu and Shuqiang Zhu.  
	\newblock Symbolic computations of the two-colored diagrams for central configurations of the planar N-vortex problem. 
	\newblock {\em  arXiv preprint, arXiv:2502.10472}, 
	2025.
	
\end{thebibliography}
\end{document}